\tikzstyle{BoxStyle} = [draw, circle, fill=black, scale=0.4,minimum width = 1pt, minimum height = 1pt]
\DeclareMathOperator*{\argmin}{arg\,min}
\newcommand{\lfp}{\bm{\mu}}
\newcommand{\commenteq}[1]{\hspace{2em} [\mbox{#1}]}
\newcommand{\suchthat}{\;\ifnum\currentgrouptype=16 \middle\fi|\;}
\mathchardef\gr="213E
\mathchardef\ls="213C
\newcommand{\Dist}{\mathrm{Distr}}
\newcommand{\support}{\mathrm{support}}
\newcommand{\nat}{\mathbb{N}}
\newcommand{\A}{\mathcal{A}}
\newcommand{\Act}{Act}
\newcommand{\M}{\mathcal{M}}
\newcommand{\D}{\mathcal{D}}
\newcommand{\C}{\mathcal{C}}
\newcommand{\SetP}{\mathcal{P}}
\newcommand{\pfun}{\mathrel{\ooalign{\hfil$\mapstochar\mkern5mu$\hfil\cr$\to$\cr}}}
\newcommand{\m}{{\sf m}}
\newcommand{\Paths}{{\sf Paths}}
\newcommand{\last}{{\sf last}}
\newcommand{\R}{\mathcal{R}}
\newcommand{\QT}[1]{{\color{black}#1}}%ForestGreen
\newcommand{\SK}[1]{{\color{black}#1}}%violet
\newcommand{\ETR}{\sf \displaystyle \exists \mathbb{R}}
\title{Minimising the Probabilistic Bisimilarity Distance} %TODO Please add
\author{Stefan Kiefer}{Department of Computer Science, University of Oxford, UK}{stekie@cs.ox.ac.uk}{https://orcid.org/0000-0003-4173-6877}{}%TODO mandatory, please use full name; only 1 author per \author macro; first two parameters are mandatory, other parameters can be empty. Please provide at least the name of the affiliation and the country. The full address is optional. Use additional curly braces to indicate the correct name splitting when the last name consists of multiple name parts.
\author{Qiyi Tang}{Department of Computer Science, University of Liverpool, UK}{qiyi.tang@liverpool.ac.uk}{https://orcid.org/0000-0002-9265-3011}{}
\authorrunning{S.\ Kiefer and Q.\ Tang} %TODO mandatory. First: Use abbreviated first/middle names. Second (only in severe cases): Use first author plus 'et al.'
\keywords{Markov decision processes, Markov chains} %TODO mandatory; please add comma-separated list of keywords
\begin{document}\sloppy
	
	\maketitle
	
	\begin{abstract}
		A labelled Markov decision process (MDP) is a labelled Markov chain with nondeterminism; i.e., together with a strategy a labelled MDP induces a labelled Markov chain. The model is related to interval Markov chains.
		Motivated by applications to the verification of probabilistic noninterference in security, we study problems of minimising probabilistic bisimilarity distances of labelled MDPs, in particular, whether there exist strategies such that the probabilistic bisimilarity distance between the induced labelled Markov chains is less than a given rational number, both for memoryless strategies and general strategies.
		We show that the distance minimisation problem is $\ETR$-complete for memoryless strategies and undecidable for general strategies.
		We also study the computational complexity of the qualitative problem about making the distance less than one.
		This problem is known to be NP-complete for memoryless strategies.
		We show that it is EXPTIME-complete for general strategies.	
	\end{abstract}
	
	\section{Introduction} \label{section:introduction}
	Given a model of computation (e.g., finite automata), and two instances of it, are they semantically equivalent (e.g., do the automata accept the same language)?
	Such \emph{equivalence} problems can be viewed as a fundamental question for almost any model of computation.
	As such, they permeate computer science, in particular, theoretical computer science.
	
	In \emph{labelled Markov chains (LMCs)}, which are Markov chains whose states (or, equivalently, transitions) are labelled with an observable letter, there are two natural and very well-studied versions of equivalence, namely \emph{trace (or language) equivalence} and \emph{probabilistic bisimilarity}.
	
	The \emph{trace equivalence} problem has a long history, going back to Sch\"utzenberger~\cite{Schutzenberger} and Paz~\cite{Paz2014}
	who studied \emph{weighted} and \emph{probabilistic} automata, respectively.
	Those models generalise LMCs, but the respective equivalence problems are essentially the same.
	For LMCs, trace equivalence asks if the same label sequences have the same probabilities in the two LMCs.
	It can be extracted from~\cite{Schutzenberger} that equivalence is decidable in polynomial time, using a technique based on linear algebra; see also \cite{Tzeng,DoyenHR08}.
	%More recently, the efficient decidability of the equivalence problem was exploited, both theoretically and practically, for the verification of probabilistic systems, see, e.g., \cite{KMOWW:CAV11,12KMOWW:CAV,Peyronnet12,Ngo13,Li15}.
	%In those works, equivalence naturally expresses properties such as obliviousness and anonymity, which are difficult to formalize in temporal logic.
	%In a similar vein, inequivalence can mean detectibility and the lack of anonymity.
	
	\emph{Probabilistic bisimilarity} is an equivalence that was introduced by Larsen and Skou~\cite{LarsenS91}.
	It is finer than trace equivalence, i.e., probabilistic bisimilarity implies trace equivalence.
	A similar notion for Markov chains, called \emph{lumpability}, can be traced back at least to the classical text by Kemeny and Snell~\cite{KemenySnell60}.
	Probabilistic bisimilarity can also be computed in polynomial time \cite{Baier1996,DerisaviHS03,ValmariF10}.
	Indeed, in practice, computing the bisimilarity quotient is fast and has become a backbone for highly efficient tools for probabilistic verification such as \textsc{Prism}~\cite{Prism} and \textsc{Storm}~\cite{Storm}.
	
	Numerous quantitative generalisations of this
	behavioural equivalence have been proposed, the probabilistic bisimilarity distance due to
	Desharnais et al. \cite{DGJP1999} being the most notable one. %In this paper, we focus on this probabilistic
	%bisimilarity distance.
	This distance can be at most 1, and is 0 if and only if the LMCs are probabilistic bisimilar.
	It was shown in \cite{CvBW2012} that the distance can be computed in polynomial time.
	%It was shown in \cite{TvB2018} that whether the probabilistic bisimilarity distance of two LMCs less than 1 can be decided in polynomial time.
	%This raises the question whether these results can be extended to MDPs, i.e., what is the complexity of deciding whether there exists a strategy to make the distance less than 1.
	%It turns out that some of these problems are closely related to the corresponding (in)equivalence problem.
	
	In this paper, we study distance minimisation problems for \emph{(labelled) Markov decision processes (MDPs)}, which are LMCs plus nondeterminism; i.e., each state may have several \emph{actions} (or ``moves'') one of which is chosen by a controller, potentially randomly.
	An MDP and a controller \emph{strategy} together induce an LMC (potentially with infinite state space, depending on the complexity of the strategy).
	We consider both general strategies and the more restricted memoryless ones.
	There are good reasons to consider memoryless strategies, particularly their naturalness and simplicity in implementations, and their connection to \emph{interval Markov chains} (see, e.g., \cite{JonssonL91,Delahaye15}) and \emph{parametric MDPs} (see, e.g., \cite{HahnHZ11,WinklerJPK19}).
	There are also good reasons to consider general unrestricted strategies, primarily their naturalness (in their definition for MDPs) and their generality. The latter is important particularly for security applications, see below, where general strategies can make programs more secure, in a precise, quantitative sense.
	%where an attacker should be conservatively assumed to be powerful to employ an arbitrary strategy.
	
	%Complementing the theoretical nature of these questions, let us mention
	%These questions are motivated by an application from the field of security.
	Let us elaborate on the connection to security.
	\emph{Noninterference} refers to an information-flow property of a program, stipulating that information about \emph{high} data (i.e., data with high confidentiality) may not leak to \emph{low} (i.e., observable) data, or, quoting~\cite{SabelfeldS00}, ``that a program is secure whenever varying the initial values of high variables cannot change the low-observable (observable by the attacker) behaviour of the program''.
	It was proposed in~\cite{SabelfeldS00} to reason about \emph{probabilistic} noninterference in probabilistic multi-threaded programs by proving probabilistic bisimilarity; see also \cite{Smith03,PopescuHN13}.
	More precisely, probabilistic noninterference is established if it can be shown that any two states that differ only in high data are probabilistic bisimilar, as then an attacker who only observes the low part of a state learns nothing about the high part.
	The observable behaviour of a multi-threaded program depends strongly on the \emph{scheduler},
	which in this context amounts to a strategy in the corresponding MDP.
	
	Nevertheless, ensuring
	perfect (probabilistic) noninterference
	proves challenging, and a certain degree of information leakage may be acceptable \cite{Gray1990, RMMG2001}.
	In such scenarios, where (probabilistic) bisimilarity might not hold under any scheduler, turning to bisimilarity distances allows us to estimate the security degree of a system under different schedulers.
	The smaller the distance, the more secure the system.
	%The correlation between a smaller distance and increased system security underscores the importance of minimising it.
	%Consequently, this raises the question whether the bisimilarity distance can be reduced below a certain threshold under some scheduler.
	Therefore, we would like to devise schedulers that minimise the probabilistic bisimilarity distances.
	
	%the total absence of any information transmission
	%which raises the question whether bisimilarity holds under some or even under all schedulers~\cite{SabelfeldS00}.
	%The nondeterminism in MDPs also gives rise to a spectrum of equivalence and inequivalence queries: one may ask about the existence of strategies for two MDPs such that the induced LMCs have distance equal to 0 or greater than 0 (equivalently, bisimilar equivalent or inequivalent), or such that the distance is equal to one or less than one.
	
	%Another potential dimension of this spectrum is whether to consider general strategies or more restricted ones, such as memoryless ones.
	
	Some qualitative problems have already been studied in previous work.
	%Much of this spectrum has been covered in previous work.
	%The problems were covered in \cite{KT2020}, but under the explicit assumption of memoryless strategies.
	Concerning memoryless strategies, it was shown in \cite{KT2020} that the bisimilarity equivalence problem, i.e., whether strategies exist to make the distance 0, is NP-complete.
	Similarly, it was also shown in \cite{KT2020} that the problem whether memoryless strategies exist to make the distance less than one is NP-complete; cf.\ \cref{tab:summary}.
	The bisimilarity \emph{in}equivalence problem, i.e., whether strategies exist to make the distance greater than 0, can be decided in polynomial time for memoryless strategies~\cite{KT2020}. % and that the witnessing memoryless strategies can be computed in polynomial time.
	%distance equal to one and less than one problems are NP-complete for memoryless strategies.
	
	Concerning general strategies, the bisimilarity equivalence and inequivalence problems were studied in \cite{KT2022}.
	It was shown there that these problems are EXPTIME-complete and in~P, respectively.
	%The computational complexity was shown to reduce to NP if one of the MDPs is already an LMC.
	%It was demonstrated
	
	It remained open whether the existence of strategies to make the distance less than one is decidable for general strategies.
	%whether the problem, distance less than one problem and distance equal to one, are decidable for general strategies.
	We show that the distance less than one problem for general strategies is decidable.
	In fact, it is EXPTIME-complete, and therefore the problem has the same complexity as the bisimilarity equivalence problem for general strategies.
	To obtain this result, we prove a tight connection between the distance less than one problem and the bisimilarity equivalence problem:
	loosely speaking, whenever there are general strategies for two states to have distance less than one,
	the two states can reach a pair of states whose distance can be made 0, thus probabilistic bisimilar.
	This connection is natural and known for \emph{finite} labelled Markov chains, but nontrivial to establish in general.
	
	\iffalse
	\cref{tab:summary} summarises the results for these problems.

	\begin{table}[h]
		\centering
		\begin{tabular}{|c|c|c|}
			\hline
			Problem & Memoryless Strategy & General Strategy\\
			\hline \hline
			distance $= 0$    & NP-complete \cite{KT2020} &  EXPTIME-complete \cite{KT2022}\\
			distance $\gr \, 0$  & in P   \cite{KT2020}       & in  P \cite{KT2022}\\
			distance $= 1$    & NP-complete \cite{KT2020}  & ?\\
			distance $\ls \, 1$  & NP-complete \cite{KT2020}  & EXPTIME-complete (this paper)\\
			\hline
		\end{tabular}	
		
		\caption{Summary of qualitative results.}
		\label{tab:summary}
	\end{table}
	\fi
	
	We also study \emph{quantitative} distance minimisation problems: do there exist memoryless (resp.\ general) strategies for two given MDPs such that the induced LMCs have distance less than a given threshold?
	We show that the distance minimisation problem is $\ETR$-complete for memoryless strategies and undecidable for general strategies.
	Here, $\ETR$ refers to the class of problems that are many-one reducible to the existential theory of the reals; it is known that ${\sf NP} \subseteq \ETR \subseteq {\sf PSPACE}$.
	%\cref{tab:summary} summarises the results for the minimisation problems and the related qualitative problems.
	
	\begin{table}
		\centering
		\begin{tabular}{|c|c|c|}
			\hline
			Problem & Memoryless Strategy & General Strategy\\
			\hline \hline
			distance $= 0$    & NP-complete \cite{KT2020} &  EXPTIME-complete \cite{KT2022}\\
			%distance $\gr \, 0$  & in P   \cite{KT2020}       & in  P \cite{KT2022}\\
			%distance $= 1$    & NP-complete \cite{KT2020}  & ?\\
			distance $\ls \, 1$  & NP-complete \cite{KT2020}  & EXPTIME-complete (\cref{section:distance-less-than-one})\\
			distance $\ls \, \theta$  & $\ETR$-complete (\cref{section:memoryless-strategies})  & undecidable (\cref{section:general-strategies})\\
			\hline
		\end{tabular}	
		
		\caption{Summary of results on distance minimisation problems.}
		\label{tab:summary}
	\end{table}
	
	The rest of the paper is organised as follows.
	We give preliminaries in \cref{section:preliminaries}.
	In \cref{section:noninterference} we discuss probabilistic noninterference.
	In \cref{section:memoryless-strategies,section:general-strategies} we prove our results on the quantitative distance minimisation problems for general strategies and memoryless strategies, respectively.
	We study the distance less than one problem for general strategies in \cref{section:distance-less-than-one}.
	We conclude in \cref{section:conclusion}.
	Missing proofs can be found in an appendix.
	
	\section{Preliminaries} \label{section:preliminaries}
	We write $\nat$ for the set of nonnegative integers.
	Let $S$ be a finite set.
	We denote by $\Dist(S)$ the set of probability distributions on~$S$.
	For a distribution $\mu \in \Dist(S)$ we write $\support(\mu) = \{s \in S \suchthat \mu(s) \gr 0 \}$ for its support.
	We denote the Dirac distribution concentrated on an element $s \in S$ by $\mathbf{1}_s$, that is, $\mathbf{1}_s(s) =1$ and $\mathbf{1}_s(t) =0$ for all $t \neq s$.
	We denote by $\rho(i)$ the $i$-th element of a sequence $\rho$.
	We denote the least fixed point of a function $f$ by $\lfp.f$.
	
	%LMC
	A \emph{labelled Markov chain} (LMC) is a quadruple $<S, L, \tau, \ell>$ consisting of a nonempty countable set $S$ of states, a nonempty finite set $L$ of labels, a transition function $\tau : S \to \Dist(S)$, and a labelling function $\ell: S \to L$.
	We denote by $\tau(s)(t)$ the transition probability from $s$ to $t$.
	Similarly, we denote by $\tau(s)(E) = \sum_{t \in E} \tau(s)(t)$ the transition probability from $s$ to $E \subseteq S$.
	We require the LMCs to be finitely branching, that is, $|\support(\tau(s))|$ is finite for every $s \in S$.
	
	An equivalence relation $R \subseteq S \times S$ is a \emph{probabilistic bisimulation} if for all $(s, t) \in R$, $\ell(s) = \ell(t)$ and $\tau(s)(E) = \tau(t)(E)$ for each $R$-equivalence class $E$.
	\emph{Probabilistic bisimilarity}, denoted by $\mathord{\sim_{\M}}$ (or $\mathord{\sim}$ when $\M$ is clear), is the largest probabilistic bisimulation.
	
	\QT{The probabilistic bisimilarity distance, a pseudometric on LMCs, was first defined by Desharnais, Gupta, Jagadeesan and Panangaden in \cite{DGJP1999}. Their definition is based on a real-valued modal logic. This logic can be viewed as a function which maps a formula~$f$ of the logic and a state~$s$ of the LMC to a real number $f(s) \in [0, 1]$. The distance $d(s,t)$ between two states $s,t$ is defined as $\sup_f |f(s)-f(t)|$. Later, Van Breugel and Worrell \cite{vBW2001} defined probabilistic bisimilarity distances on LMCs as a fixed point of a function. They showed that their pseudometric coincides with the one defined in \cite{DGJP1999}.
		In this paper, we use the definition from~\cite{vBW2001}.
		The \emph{probabilistic bisimilarity distance}, denoted by $d_{\M}$} (or $d$ when $\M$ is clear), is a function from $S \times S$ to $[0, 1]$, that is, an element of $[0, 1]^{S \times S}$.
	It \SK{is} the least fixed point of the following function:
	\[
	\Delta(e)(s, t) = \left \{
	\begin{array}{ll}
		1 & \mbox{if $\ell(s) \not= \ell(t)$}\\
		\displaystyle \min_{\omega \in \Omega(\tau(s), \tau(t))} \sum_{u, v \in S} \omega(u, v) \; e(u, v) & \mbox{otherwise}
	\end{array}
	\right .
	\]
	where the set $\Omega(\mu, \nu)$ of \emph{couplings} of $\mu,\nu \in \Dist(S)$ is defined as $\Omega(\mu, \nu)=\left \{ \, \omega \in \Dist(S \times S) \suchthat \sum_{t \in S} \omega(s, t) = \mu(s) \wedge \sum_{s \in S} \omega(s, t) = \nu(t) \, \right \}$. Note that a coupling $\omega \in \Omega(\mu, \nu)$ is a joint probability distribution with marginals $\mu$ and $\nu$ (see, e.g., \cite[page 260-262]{Billingsley1995}). For all $s,t \in S$, $s \sim t$ if and only if $s$ and $t$ has probabilistic bisimilarity distance zero \cite[Theorem~1]{DGJP1999}.
	%For simplicity, we may refer to the probabilistic bisimilarity distance between a pair od states as the distance.
	%countable set and finitely branching

	%A trace in a LMC $\mathcal{M}$ is a sequence of labels $w = a_1a_2 \cdots a_n$ where $a_i \in L$. We denote by $L^{\le n}$ the set of traces of length at most $n$. Let $M: L \to [0, 1]^{S \times S}$ specify the transitions, so that $\sum_{a \in L}M(a)$ is a stochastic matrix, $M(a)(s, t) = \tau(s)(t)$ if $\ell(s) = a$ and $M(a)(s, t) = 0$ otherwise. We extend $M$ to the mapping $M : L^{*} \to [0, 1]^{S \times S}$ with $M(w) = M(a_1)\cdots M(a_n)$ for a trace $w = a_1 \cdots a_n$. If the LMC is in state $s$, then with probability $M(w)(s, s')$ it emits a trace $w$ and moves to state $s'$ in $|w|$ steps.
	
	%MDP
	A \emph{(labelled) Markov decision process} (MDP) is a tuple $<S, \Act, L, \varphi, \ell>$ consisting of a finite set $S$ of states, a  finite set $\Act$ of actions, a  finite set $L$ of labels, a partial function $\varphi: S \times \Act \pfun \Dist(S)$ denoting the probabilistic transition, and a labelling function $\ell: S \to L$. The set of available actions in a state $s$ is $\Act(s) = \{\m \in \Act \mid \varphi(s, \m) \text{ is defined}\}$.
	
	A path is a sequence $\rho = s_0\m_1s_1\cdots\m_ns_n$ such that $\varphi(s_i, \m_{i+1})$ is defined and $\varphi(s_i, \m_{i+1})(s_{i+1}) \gr 0$ for all $0 \le i \ls n$. The last state of $\rho$ is $\last(\rho) = s_n$. Let $\Paths(\D)$ denote the set of paths in $\D$.
	
	A (general) strategy for an MDP is a  function $\alpha: \Paths(\D) \to \Dist(\Act)$ that given a path $\rho$, returns a probability distribution on the available actions at the last state of $\rho$, $\last(\rho)$.
	A memoryless strategy depends only on $\last(\rho)$; so we can identify a memoryless strategy with a function $\alpha: S \to\Dist(\Act)$ that given a state $s$, returns a probability distribution on the available actions at that state.
	
	%induced LMC
	A general strategy $\alpha$ for $\D$ induces an LMC $\D(\alpha) = <\SetP, L,\tau,\ell'>$, where $\SetP \subseteq \Paths(\D)$. For $\rho \in \SetP$, we have $\tau(\rho)(\rho \m t) = \alpha(\rho)(\m)\varphi(s, \m)(t)$ and $\ell'(\rho) = \ell(s)$ where $s = \last(\rho)$ and $\m \in \Act(s)$.
	
	%local strategy
	%Such strategies are also known as positional, as they do not depend on the history of past states. A strategy $\alpha$ is \emph{memoryless deterministic} (MD) if for all states $s$ there exists an action $\m \in \A(s)$ such that $\alpha(s)(\m) = 1$; we thus view an MD strategy as a function $\alpha: S \to \mathcal{A}$.
	
	\section{Probabilistic Noninterference}
	\label{section:noninterference}
	\QT{In this section we provide examples that show some challenges in distance minimisation and illustrate the relation between distance minimisation and probabilistic noninterference in security. As described in the introduction, we are interested in schedulers that minimise the information leakage.
	}

\begin{example} \label{ex:non1}
    We borrow an example from \cite[Section~4]{SabelfeldS00} and \cite[Section~3]{KT2022}.
    Consider the following simple program composed of two threads, involving a \emph{high} boolean variable~$h$ (high confidentiality) and a \emph{low} boolean variable~$l$ (observable):
    \[
     l \ := \ h \quad \vert \quad l \ := \ \neg h
    \]
    The vertical bar $\mathord{\vert}$ separates two threads.
    The order in which the threads are executed is determined by a scheduler.
    We assume that assignments to the value of variable~$l$ are visible.
    One may model the program as the following MDP in \cref{fig:security-example1}.
    \begin{figure}[!htb]
	\centering
    	\begin{tikzpicture}[yscale=0.6,xscale=.6,>=latex',shorten >=1pt,node distance=3cm,on grid,auto]
		\node[state] (s0) at (0,0){$s_0$};
		\node[BoxStyle] (m0) at (-2.5,-0.5) {};
		\node[label] at (-3.2, -0.5) {$\m_0$};
		\node[state, fill=red!20] (h0m0l0) at (-2.5,-2){\color{blue}$0$};
		%\node[BoxStyle] (h0m0l0m) at (-2.5,-3) {};
		\node[state] (left0) at (-2.5, -4) {};
		%\node[BoxStyle] (left0m) at (-2.5,-5.2) {};
		\node[state, fill=OliveGreen!30] (h0m0l0ml1) at (-2.5,-6){\color{blue}$1$};
		
		\node[BoxStyle] (m1) at (2.5,-0.5) {};
		\node[label] at (3.2, -0.5) {$\m_1$};
		\node[state, fill=OliveGreen!30] (h0m1l1) at (2.5,-2){\color{blue}$1$};    %\node[BoxStyle] (h0m1l0m) at (2.5,-2.8) {};
		\node[state] (right0) at (2.5, -4) {};
		%\node[BoxStyle] (right0m) at (2.5,-4.8) {};
		\node[state, fill=red!20] (h0m1l1ml0) at (2.5,-6){\color{blue}$0$};
		
		\node[state] (t0) at (0,-8) {$t_0$};
		
		\path[-] (s0) edge node {} (m0);
		\path[-] (s0) edge node {} (m1);
		\path[->] (m0) edge node {} (h0m0l0);
		\path[->] (m1) edge node {} (h0m1l1);
		\path[->] (h0m0l0) edge node {} (left0);
		\path[->] (left0) edge node {} (h0m0l0ml1);
		\path[->] (h0m1l1) edge node {} (right0);
		\path[->] (right0) edge node {} (h0m1l1ml0);
		\path[->] (h0m0l0ml1) edge (t0);
		\path[->] (h0m1l1ml0) edge (t0);
        \path (t0) edge [loop right] (t0);
		
		%%%%%%%%%%%%%%%%%%%%%%%%%%%%%%%%%%%%%%%%%%%%%%%%%%%%%%%%%%%%%%%%%%%%%%%%%%%%%%%%%%%%%
		\node[state] (s0) at (11,0){$s_1$};
		\node[BoxStyle] (m0) at (8.5,-0.5) {};
		\node[label] at (7.8, -0.5) {$\m_0$};
		\node[state, fill=OliveGreen!30] (h0m0l0) at (8.5,-2){\color{blue}$1$};
		%\node[BoxStyle] (h0m0l0m) at (-2.5,-3) {};
		\node[state] (left0) at (8.5, -4) {};
		%\node[BoxStyle] (left0m) at (-2.5,-5.2) {};
		\node[state, fill=red!20] (h0m0l0ml1) at (8.5,-6){\color{blue}$0$};
		
		\node[BoxStyle] (m1) at (13.5,-0.5) {};
		\node[label] at (14.2, -0.5) {$\m_1$};
		\node[state, fill=red!20] (h0m1l1) at (13.5,-2){\color{blue}$0$};    %\node[BoxStyle] (h0m1l0m) at (2.5,-2.8) {};
		\node[state] (right0) at (13.5, -4) {};
		%\node[BoxStyle] (right0m) at (2.5,-4.8) {};
		\node[state, fill=OliveGreen!30] (h0m1l1ml0) at (13.5,-6){\color{blue}$1$};
		
		\node[state] (t0) at (11,-8) {$t_1$};

		\path[-] (s0) edge node {} (m0);
		\path[-] (s0) edge node {} (m1);
		\path[->] (m0) edge node {} (h0m0l0);
		\path[->] (m1) edge node {} (h0m1l1);
		\path[->] (h0m0l0) edge node {} (left0);
		\path[->] (left0) edge node {} (h0m0l0ml1);
		\path[->] (h0m1l1) edge node {} (right0);
		\path[->] (right0) edge node {} (h0m1l1ml0);
		\path[->] (h0m0l0ml1) edge node [midway, left] {} (t0);
		\path[->] (h0m1l1ml0) edge node [midway, right]{} (t0);
        \path (t0) edge [loop right] (t0);
	\end{tikzpicture}
	\caption{
	The program from \cref{ex:non1} as an MDP.
	The states $s_0$ and $s_1$ have two available actions, $\m_0$ and $\m_1$.
	The default action $\m$ for the other states is omitted.
	Different colours (state labels) indicate the distinct values of the low data.
	Throughout the paper, transition probabilities out of each action are one unless explicitly specified.
	}
	\label{fig:security-example1}
	\end{figure}

    Here, $s_0$ and~$s_1$ correspond to initial states with $h=0$ and $h=1$, respectively.
    \QT{%This program has two possible \emph{orders} of execution.
    The two actions in the MDP, $\m_0$ and $\m_1$, correspond to the two possible orders of execution: action $\m_0$
    models the choice of executing $l  := h $ first, followed by $l  :=  \neg h$, while $\m_1$ models the reverse order.
    The different colours represent the distinct values of the low, observable data.
    For instance, in state $s_0$, if the scheduler selects $\m_0$ (the left branch of $s_0$), then $l$ becomes $0$ after
    executing  $l := h $ and $1$ after executing $l := \neg h$.
	All transitions are with probability one.}
    A memoryless strategy that chooses actions $\m_0, \m_1$ uniformly at random (i.e., with probability $0.5$ each) makes $s_0, s_1$ probabilistic bisimilar; i.e., $d(s_0,s_1) = 0$ under this strategy.
\lipicsEnd
\end{example}

\begin{example} \label{ex:non2}
Consider the following variant of \cref{ex:non1}.
\[
\begin{split}
& \text{repeat} \\
& \hspace{5mm} l \ := \ h \quad \vert \quad l \ := \ \neg h \\
& \text{until } \mathit{coin}(p) \lor h
\end{split}
\]
Here, $\mathit{coin}(p)$, for a fixed parameter $p \in [0,1]$, models a biased coin that returns $\mathit{true}$ with probability~$p$ and $\mathit{false}$ with probability~$1-p$.
One may model the program as the MDP in \cref{fig:security-example}, \emph{except} that $t_0, t_1$ are sinks, as in \cref{ex:non1}.
The value of~$h$ influences the termination condition of the loop and therefore ``leaks'' (with probability $1-p$).
As a result, under the optimal (in terms of minimising the distance) strategy, which is the same as in \cref{ex:non1}, we have now $d(s_0,s_1) = 1-p$.
The smaller~$p$, the ``worse'' the leak.
\lipicsEnd
\end{example}

The following example shows that general strategies may be needed for optimal security.

\begin{example} \label{ex:non3}
In order to mitigate the leak from \cref{ex:non2}, one might extend the program as follows, so that the scheduler is given an opportunity to disguise the fact that the program with $h=1$ tends to terminate earlier than the program with $h=0$:
\[
\begin{split}
& \text{repeat} \\
& \hspace{5mm} l \ := \ h \quad \vert \quad l \ := \ \neg h \\
& \text{until } \mathit{coin}(p) \lor h \\
& \text{repeat forever} \\
& \hspace{5mm} l \ := \ 0 \quad \oplus \quad l \ := \ 1
\end{split}
\]
Here, $\mathord{\oplus}$ stands for a nondeterministic choice, to be made by the scheduler, \QT{where exactly one of the
instructions $ l :=  0 $ and $ l := 1 $ is executed.}
%See \cref{fig:security-example}.
\QT{In \cref{fig:security-example},  this corresponds to taking actions $\m_2$ and $\m_3$, respectively.}
	
\begin{figure}[!htb]	
	\centering	
	\begin{tikzpicture}[yscale=0.65,xscale=.6,>=latex',shorten >=1pt,node distance=3cm,on grid,auto]
		\node[state] (s0) at (0,0){$s_0$};
		\node[BoxStyle] (m0) at (-2.5,-0.5) {};
		\node[label] at (-3.2, -0.5) {$\m_0$};
		\node[state, fill=red!20] (h0m0l0) at (-2.5,-2){\color{blue}$0$};
		%\node[BoxStyle] (h0m0l0m) at (-2.5,-3) {};
		\node[state] (left0) at (-2.5, -4) {};
		%\node[BoxStyle] (left0m) at (-2.5,-5.2) {};
		\node[state, fill=OliveGreen!30] (h0m0l0ml1) at (-2.5,-6){\color{blue}$1$};
		
		\node[BoxStyle] (m1) at (2.5,-0.5) {};
		\node[label] at (3.2, -0.5) {$\m_1$};
		\node[state, fill=OliveGreen!30] (h0m1l1) at (2.5,-2){\color{blue}$1$};    %\node[BoxStyle] (h0m1l0m) at (2.5,-2.8) {};
		\node[state] (right0) at (2.5, -4) {};
		%\node[BoxStyle] (right0m) at (2.5,-4.8) {};
		\node[state, fill=red!20] (h0m1l1ml0) at (2.5,-6){\color{blue}$0$};
		
		\node[state] (t0) at (0,-8) {$t_0$};
		\node[BoxStyle] (m2) at (-2.5,-8.5) {};
		\node[label] at (-3.2, -8.5) {$\m_2$};
		\node[state, fill=red!20] (t0l0) at (-2.5,-10) {\color{blue}$0$};
		\node[BoxStyle] (m3) at (2.5,-8.5) {};
		\node[label] at (3.2, -8.5) {$\m_3$};
		\node[state, fill=OliveGreen!30] (t0l1) at (2.5,-10) {\color{blue}$1$};
		
		\path[-] (s0) edge node {} (m0);
		\path[-] (s0) edge node {} (m1);
		\path[->] (m0) edge node {} (h0m0l0);
		\path[->] (m1) edge node {} (h0m1l1);
		\path[->] (h0m0l0) edge node {} (left0);
		\path[->] (left0) edge node {} (h0m0l0ml1);
		\path[->] (h0m1l1) edge node {} (right0);
		\path[->] (right0) edge node {} (h0m1l1ml0);
		\path[->] (h0m0l0ml1) edge node [yshift=-1mm,midway, left] {$p$} (t0);
		\path[->] (h0m1l1ml0) edge node [yshift=-1mm, midway, right]{$p$} (t0);
		\path[->] (h0m0l0ml1) edge [bend right=20] node [right, near start,xshift=-1em,yshift=-1.2em]{$1-p$} (s0);
		\path[->] (h0m1l1ml0) edge [bend left=20] node [left, near start,xshift=1em,yshift=-1.2em]{$1-p$} (s0);

		\path[-] (t0) edge node {} (m2);
		\path[-] (t0) edge node {} (m3);
		\path[->] (m2) edge node {} (t0l0);
		\path[->] (m3) edge node {} (t0l1);
		\path[->] (t0l0) edge node {} (t0);
		\path[->] (t0l1) edge node {} (t0);
		
		%%%%%%%%%%%%%%%%%%%%%%%%%%%%%%%%%%%%%%%%%%%%%%%%%%%%%%%%%%%%%%%%%%%%%%%%%%%%%%%%%%%%%
		\node[state] (s0) at (11,0){$s_1$};
		\node[BoxStyle] (m0) at (8.5,-0.5) {};
		\node[label] at (7.8, -0.5) {$\m_0$};
		\node[state, fill=OliveGreen!30] (h0m0l0) at (8.5,-2){\color{blue}$1$};
		%\node[BoxStyle] (h0m0l0m) at (-2.5,-3) {};
		\node[state] (left0) at (8.5, -4) {};
		%\node[BoxStyle] (left0m) at (-2.5,-5.2) {};
		\node[state, fill=red!20] (h0m0l0ml1) at (8.5,-6){\color{blue}$0$};
		
		\node[BoxStyle] (m1) at (13.5,-0.5) {};
		\node[label] at (14.2, -0.5) {$\m_1$};
		\node[state, fill=red!20] (h0m1l1) at (13.5,-2){\color{blue}$0$};    %\node[BoxStyle] (h0m1l0m) at (2.5,-2.8) {};
		\node[state] (right0) at (13.5, -4) {};
		%\node[BoxStyle] (right0m) at (2.5,-4.8) {};
		\node[state, fill=OliveGreen!30] (h0m1l1ml0) at (13.5,-6){\color{blue}$1$};
		
		\node[state] (t0) at (11,-8) {$t_1$};
		\node[BoxStyle] (m2) at (8.5,-8.5) {};
		\node[label] at (7.8, -8.5) {$\m_2$};
		\node[state, fill=red!20] (t0l0) at (8.5,-10) {\color{blue}$0$};
		\node[BoxStyle] (m3) at (13.5,-8.5) {};
		\node[label] at (14.2, -8.5) {$\m_3$};
		\node[state, fill=OliveGreen!30] (t0l1) at (13.5,-10) {\color{blue}$1$};
		
		\path[-] (s0) edge node {} (m0);
		\path[-] (s0) edge node {} (m1);
		\path[->] (m0) edge node {} (h0m0l0);
		\path[->] (m1) edge node {} (h0m1l1);
		\path[->] (h0m0l0) edge node {} (left0);
		\path[->] (left0) edge node {} (h0m0l0ml1);
		\path[->] (h0m1l1) edge node {} (right0);
		\path[->] (right0) edge node {} (h0m1l1ml0);
		\path[->] (h0m0l0ml1) edge node [midway, left] {} (t0);
		\path[->] (h0m1l1ml0) edge node [midway, right]{} (t0);
		
		\path[-] (t0) edge node {} (m2);
		\path[-] (t0) edge node {} (m3);
		\path[->] (m2) edge node {} (t0l0);
		\path[->] (m3) edge node {} (t0l1);
		\path[->] (t0l0) edge node {} (t0);
		\path[->] (t0l1) edge node {} (t0);
	\end{tikzpicture}
	
	\caption{
		The program from \cref{ex:non3} as an MDP.
		The states $s_0$ and $s_1$ have two available actions, $\m_0$ and $\m_1$.
		The states $t_0$ and $t_1$ also have two available actions, $\m_2$ and $\m_3$.	
		Different colours (state labels) indicate the distinct values of the low data.
	}
	\label{fig:security-example}
\end{figure}
One can show that the optimal \emph{memoryless} strategy chooses between $\m_0$ and~$\m_1$ uniformly at random (as before), and also chooses between $\m_2$ and~$\m_3$ uniformly at random.
Under this strategy we have $d(s_0, t_1) = 0.5 + 0.5(1-p) d(s_0,t_1)$, implying $d(s_0,t_1) = \frac{1}{1+p}$, and thus $d(s_0,s_1) = (1-p) d(s_0,t_1) = \frac{1-p}{1+p}$, which is, for $p \in (0,1)$, smaller (i.e., better) than the distance achievable in \cref{ex:non2}.

However, there is a general strategy~$\alpha$, not memoryless, that perfectly disguises when the first loop is exited.
This strategy~$\alpha$ chooses between $\m_0$ and~$\m_1$ uniformly at random (as before).
When the execution path visits $t_0$ or~$t_1$ for the $i$th time, $i \ge 1$, then, if $i$ is odd, $\alpha$ chooses between $\m_2$ and~$\m_3$ uniformly at random, and if $i$~is even, $\alpha$ chooses the action that was not taken upon the $(i-1)$th visit of $t_0$ or~$t_1$.
Under this strategy~$\alpha$ we have $d(s_0,s_1) = 0$, i.e., $s_0$ and~$s_1$ are probabilistic bisimilar.
\lipicsEnd
\end{example}

	\section{Memoryless Strategies: Distance Minimisation}
	\label{section:memoryless-strategies}
	
	In this section we consider the \emph{memoryless distance minimisation problem} which, given an MDP, two states $s_1, s_2$ of the MDP, and a rational number $\theta$, asks whether there is a memoryless strategy~$\alpha$ such that $d(s_1,s_2) \ls \theta$ holds in the LMC induced by~$\alpha$.

	We show that the memoryless distance minimisation problem is $\ETR$-complete.
	We prove the lower and upper bound in \cref{thm:memmin,thm:memmin-upper}, respectively.
	
	The \emph{existential theory of the reals, {\sf ETR},} is the set of valid formulas of the form
	$$\exists x_1 \dots \exists x_n~R(x_1, \dots, x_n),$$
	where $R$ is a Boolean combination of comparisons of the form
	$p(x_1, \dots, x_n) \sim 0$, in which $p(x_1, \dots, x_n)$ is a multivariate polynomial (with rational coefficients) and
	$\mathord{\sim} \in \{ \ls, \gr, \mathord{\le}, \mathord{\ge}, \mathord{=}, \mathord{\ne} \}$.
	The complexity class $\ETR$~\cite{SchaeferS17} consists of those problems that are many-one reducible to {\sf ETR} in polynomial time.
	Since {\sf ETR} is {\sf NP}-hard and in {\sf PSPACE}~\cite{Can88,Renegar92}, we have ${\sf NP} \subseteq \ETR \subseteq {\sf PSPACE}$.
	
	To prove that the memoryless distance minimisation problem is $\ETR$-hard (\cref{thm:memmin}), we proceed via a sequence of reductions, represented by the following lemmas, \cref{lem:ETR-1,lem:ETR-2,lem:ETR-3}.%, whose proofs are given in \cref{proof:ETR-1,proof:ETR-2,proof:ETR-3}.
	
	\begin{lemma} \label{lem:ETR-1}
		The following problem is $\ETR$-complete: given a multivariate polynomial $p : \mathbb{R}^n \to \mathbb{R}$ of (total) degree at most~$6$, does there exist $x \in \mathbb{R}^n$ with $p(x) \ls 0$?
		The problem remains $\ETR$-complete under the promise that if there is $x$ with $p(x) \ls 0$ then there is $x'$ with $p(x') \ls 0$ and $\|x'\| \ls 1$ (where $\| \cdot \|$ denotes the Euclidean norm).
	\end{lemma}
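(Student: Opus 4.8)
\emph{Membership.} The problem is trivially in $\ETR$: the sentence ``$\exists x_1\cdots\exists x_n\ p(x_1,\dots,x_n) \ls 0$'' is already of the required form, and neither the degree bound nor the norm promise affects membership. All the work is therefore in proving $\ETR$-hardness, and I would do this by reducing from a standard $\ETR$-complete feasibility problem: given integer polynomials $f_1,\dots,f_m$, decide whether they have a common real zero.

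\emph{The reduction pipeline.} I would build $p$ through a short chain of normal-form steps, writing $x$ for the current (growing) vector of variables. (i) \emph{Degree reduction.} For each product needed, introduce an auxiliary variable together with a defining equation $y = x_i x_j$, rewriting every $f_k = 0$ as an equivalent system of equations of total degree at most~$2$; this preserves real feasibility and blows up the number of variables only polynomially. (ii) \emph{Compactification.} Add one further degree-$2$ equation $\|x\|^2 + w^2 = R^2$ with a fresh slack variable~$w$; since $w^2 \ge 0$, this confines every solution to the ball of radius~$R$, and after rescaling all variables by a suitable factor one may assume that, if the system is satisfiable at all, it has a solution of norm $\ls \tfrac12$. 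This step is the source of the norm promise. (iii) \emph{Aggregation.} Writing $q_1,\dots,q_M$ for the resulting quadratics, set $g \coloneqq \sum_j q_j^2$, a nonnegative polynomial of degree at most~$4$ with $g = 0$ exactly on the (rescaled, bounded) solution set. (iv) \emph{Strictness gadget.} With one more fresh variable~$t$ put
\[
 p(x,w,t) \ \coloneqq\ g(x,w)\,(1 + t^2) \;-\; \tfrac12\, t^2 ,
\]
a polynomial of total degree at most~$6$. A direct computation shows that, for fixed $(x,w)$, there is a~$t$ with $p \ls 0$ if and only if $g(x,w) \ls \tfrac12$, and that at a genuine zero $g = 0$ the point $t = \tfrac12$ already gives $p = -\tfrac18 \ls 0$. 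Choosing such a witness near a norm-$\ls\tfrac12$ solution (and, if needed, shrinking~$t$) keeps the whole witness inside the open unit ball, which delivers the promise.

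\emph{The crux and the main obstacle.} Completeness of the reduction (satisfiable $\Rightarrow$ $p$ goes strictly negative) is handled by step~(iv). The difficult direction is \emph{soundness}: if the original system is infeasible I must guarantee $p \ge 0$ everywhere, i.e.\ $g(x,w) \ge \tfrac12$ for all $(x,w)$. This is precisely the passage from the \emph{closed} condition ``$g$ has a zero'' to the \emph{open} condition ``$p \ls 0$ is satisfiable'', and it is genuinely delicate because a nonnegative polynomial can approach~$0$ without attaining it, so a fixed threshold does not in general separate ``$g$ has a zero'' from ``$\inf g = 0$''. Here the compactification is essential: the slack term $(\|x\|^2 + w^2 - R^2)^2$ forces $g \to \infty$ at infinity, so the sublevel set $\{g \ls \tfrac12\}$ is bounded and its compact closure carries a genuine positive minimum in the infeasible case.

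\emph{} The remaining and hardest point is to ensure that this positive minimum actually exceeds the gadget threshold without appealing to an a~priori root-separation bound (which would be doubly exponential and hence destroy polynomial time). I expect this to be the main obstacle, and I would resolve it by reducing not from raw feasibility but from a bounded-domain, \emph{robust} $\ETR$-complete normal form in which infeasibility is witnessed by a uniform constant lower bound on the aggregate~$g$; at that point the fixed threshold~$\tfrac12$ suffices and the two directions match. Finally I would discharge the routine bookkeeping: every step runs in polynomial time, the total degree never exceeds~$6$, and the produced instances satisfy the stated norm promise.
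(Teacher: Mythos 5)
There is a genuine gap, and you have in fact pointed at it yourself: your step (iv) only works if, in the infeasible case, the aggregate $g$ is bounded below by the \emph{constant} $\tfrac12$ everywhere on the compactified domain, and you defer this to an unspecified ``bounded-domain, robust $\ETR$-complete normal form in which infeasibility is witnessed by a uniform constant lower bound on~$g$.'' No such normal form is available off the shelf, and obtaining one is essentially the whole content of the lemma. What is available (and what the paper uses, via \cite[Corollary~3.4]{SchaeferS17}) is a separation bound of the form $2^{-2^k}$ with $k$ polynomial in the input size: if the degree-$2$ system has no common zero in the unit ball, then the sum of squares $q$ stays above $2^{-2^k}$ there. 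Your stated reason for rejecting this route --- that a doubly-exponential bound ``would destroy polynomial time'' --- is a misconception: one never writes $2^{-2^k}$ down as a coefficient. Instead the paper introduces $k$ auxiliary variables $y_1,\dots,y_k$ and the degree-$6$ penalty $(y_1-4)^2+(y_2-y_1^2)^2+\cdots+(y_k-y_{k-1}^2)^2+y_k^2q(x)+\|x\|^2-1$; negativity of this expression forces $y_k\ge 2^{2^{k-1}}$ by repeated squaring and hence $q(x)\ls 2^{-2^k}$, which by the separation bound implies $q$ has an actual root. This repeated-squaring encoding is the missing idea in your proposal, and without it (or the phantom ``robust normal form'') your soundness direction does not go through.

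Two smaller problems. First, your compactification step (ii) asserts that ``after rescaling all variables by a suitable factor'' solutions can be confined to norm $\ls\tfrac12$; but genuine solutions of polynomial systems can have doubly-exponential magnitude, so the radius $R$ and the rescaling factor cannot in general be polynomial-size rationals --- you would again need either the repeated-squaring trick or to start from the already-bounded $\ETR$-complete problem of \cite[Lemma~3.9]{Schaefer13}, which is exactly the paper's starting point. Second, the norm promise in the conclusion needs the witness of $p\ls 0$ to lie in the open unit ball even though the auxiliary variables $y_i$ in the penalty above are astronomically large; the paper handles this by a final homogenization $p(x,y,z):=z^6\,r(x/z,y/z)$, which lets one shrink the entire witness by choosing $z$ small. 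Your gadget has no analogous mechanism, so even if soundness were repaired, the promise part of the statement would still be unproved.
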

	
	%	\QT{The proof of \Cref{lem:ETR-1} is technical and can be found . It} enables us to prove the following.
	\begin{proof}
			Membership in~$\ETR$ is clear.
		It remains to prove $\ETR$-hardness.
		It is shown in \cite[Lemma~3.9]{Schaefer13} that the following problem is $\ETR$-complete: given multivariate polynomials $f_1, \ldots, f_s: \mathbb{R}^n \to \mathbb{R}$, each of degree at most~$2$, does there exist $x \in \mathbb{R}^n$ with $\|x\| \ls 1$ such that $\bigwedge_{i=1}^s (f_i(x) = 0)$?
		It follows from the proof that the problem remains $\ETR$-complete under the promise that $\bigwedge_i f_i(x) = 0$ implies $\|x\| \ls 1$.
		We reduce from this promise problem.
		Let $f_1, \ldots, f_s: \mathbb{R}^n \to \mathbb{R}$, each of degree at most~$2$, such that for all $x \in \mathbb{R}^n$ we have that $\bigwedge_i f_i(x) = 0$ implies $\|x\| \ls 1$.
		Define the polynomial $q : \mathbb{R}^n \to \mathbb{R}$ by $q(x) := \sum_{i=1}^s f_i(x)^2$.
		Clearly, $q(x) \ge 0$ always holds, and we have $q(x) = 0$ if and only if $\bigwedge_i f_i(x) = 0$.
		Consider the two sets $\{(q(x),x) \in \mathbb{R}^{n+1} \mid \|x\| \le 1\}$ and $\{(0,x) \in \mathbb{R}^{n+1} \mid \|x\| \le 1\}$.
		If $q$ has a root~$x$, then the two sets overlap in the point $(0,x)$; otherwise, by \cite[Corollary~3.4]{SchaeferS17}, they have distance at least $2^{2^{-k}}$, where $k$~is a natural number whose unary representation can be computed in polynomial time.
		It follows that if $\|x\| \le 1$ and $q(x) \ls 2^{2^{-k}}$ then \QT{there exists $x'$ such that $q(x') = 0$}.
		
		In the following let us use real-valued variables $x_1, \ldots, x_n, y_{1}, \ldots, y_{k}$ and write $x = (x_1, \ldots, x_n)$ and $y = (y_1, \ldots, y_k)$.
		Define the polynomial $r : \mathbb{R}^{n+k} \to \mathbb{R}$ (of degree at most~$6$) by
		\[
		r(x,y) \ := \ (y_{1} - 4)^2 + (y_{2} - y_{1}^2)^2 + \cdots + (y_{k} - y_{k-1}^2)^2 + y_{k}^2 q(x) + \|x\|^2 - 1 \,.
		\]
		Let us also use a real-valued variable~$z$.
		Define the polynomial $p : \mathbb{R}^{n+k+1}$ (of degree at most~$6$) by
		\[
		p(x,y,z) \ := \ z^6 r\left(\frac{x_1}{z}, \ldots, \frac{x_{n}}{z}, \frac{y_1}{z}, \ldots, \frac{y_k}{z}\right)\,.
		\]
		
		Suppose there is $x \in \mathbb{R}^n$ with $\bigwedge_i f_i(x) = 0$.
		%By the (promised) property of the $f_i$ we have $\|x\| \ls 1$.
		Then $q(x) = 0$.
		For $1 \le i \le k$, set $y_{i} := 2^{2^{i}}$.
		Then $r(x,y) = \|x\|^2 - 1 \ls 0$.
		Set $z \gr 0$ small enough so that $z^2 \left( \|x\|^2 + \|y\|^2 + 1 \right) \ls 1$.
		For $1 \le i \le n$, set $x_i' := x_i z$.
		For $1 \le i \le k$, set $y_i' := y_i z$.
		Then $p(x',y',z) = z^6 r(x,y) \ls 0$ and $\|x'\|^2 + \|y'\|^2 + z^2 = z^2 \left( \|x\|^2 + \|y\|^2 + 1 \right) \ls 1$.
		
		Towards the other direction, suppose there is $(x',y',z) \in \mathbb{R}^{n+k+1}$ with $p(x',y',z) \ls 0$.
		Since $p$~is a polynomial, it is continuous.
		So we can assume without loss of generality that $z \ne 0$.
		For $1 \le i \le n$, set $x_i := x_i' / z$.
		For $1 \le i \le k$, set $y_i := y_i' / z$.
		Then $r(x,y) = p(x',y',z)/z^6 \ls 0$.
		This implies $y_k^2 q(x) \ls 1$ and $\|x\| \ls 1$.
		Using $r(x,y) \ls 0$, we show by induction that $y_i \ge 2^{2^{i-1}} + 1$ holds for all $i \in \{1, \ldots, k\}$.
		For the induction base ($i=1$) we have $(y_1-4)^2 \le 1$.
		Thus, $y_1 - 4 \ge -1$, and so $y_1 \ge 3 = 2^{2^{1-1}} + 1$.
		For the step ($1 \le i \le k-1$), suppose that $y_i \ge 2^{2^{i-1}} + 1$.
		Since $r(x,y) \ls 0$, we have $(y_{i+1} - y_i^2)^2 \le 1$, and so
		\[
		y_{i+1} \ \ge \ y_i^2 - 1 \ \ge \ (2^{2^{i-1}} + 1)^2 - 1  \ = \ 2^{2^i} + 2 \cdot 2^{2^{i-1}} \ \ge \ 2^{2^i} + 1\,.
		\]
		Hence, we have shown that $y_k \ge 2^{2^{k-1}} + 1 \gr 2^{2^{k-1}}$.
		It follows that $q(x) \ls 1/y_k^2 \ls 2^{-2^k}$.
		Since $\|x\| \ls 1$, it follows from the argument at the beginning that \QT{there exists $x'$ such that $q(x') = 0$ and so $\bigwedge_i f_i(x') = 0$}.
		
		This completes the hardness proof.
		Note that by combining the two directions, it follows that if there is $w \in \mathbb{R}^{n+k+1}$ with $p(w) \ls 0$, then there is $w' \in \mathbb{R}^{n+k+1}$ with $p(w') \ls 0$ and $\|w'\| \ls 1$, showing also $\ETR$-hardness of the promise version of the problem.
	\end{proof}

	\begin{lemma} \label{lem:ETR-2}
		The following problem is $\ETR$-complete: given a multivariate polynomial $p : \mathbb{R}^n \to \mathbb{R}$ of degree at most~$6$, does there exist $x \in [0,1]^n$ with $p(x) \gr 0$?
	\end{lemma}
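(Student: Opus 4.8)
The plan is to prove membership in $\ETR$ trivially and to establish $\ETR$-hardness by reducing from the \emph{promise} version of the problem in \cref{lem:ETR-1}. For membership, observe that a candidate $x \in [0,1]^n$ with $p(x) \gr 0$ is witnessed directly by the {\sf ETR} formula $\exists x_1 \cdots \exists x_n \bigl( \bigwedge_{i=1}^n (x_i \ge 0 \wedge x_i \le 1) \wedge p(x) \gr 0 \bigr)$, which is of the required form. The substance is therefore the hardness direction, and the natural idea is to pass from the unconstrained sign problem over $\mathbb{R}^n$ to a sign problem over the cube by an affine rescaling, exploiting the promise from \cref{lem:ETR-1} that a witness, if one exists, can be found strictly inside the unit ball.

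Concretely, I would take an instance $p : \mathbb{R}^n \to \mathbb{R}$ of degree at most~$6$ satisfying the promise of \cref{lem:ETR-1} (if some $x$ has $p(x) \ls 0$, then some $x'$ has $p(x') \ls 0$ and $\|x'\| \ls 1$), and define the affine change of variables $x_i = 2 y_i - 1$, which maps $[0,1]^n$ bijectively onto $[-1,1]^n$. I then set $\tilde p(y) := -\,p(2 y_1 - 1, \ldots, 2 y_n - 1)$, which is again a polynomial of degree at most~$6$, since an affine substitution together with negation preserves total degree, and it is computable in polynomial time. For the forward direction: if there is $x$ with $p(x) \ls 0$, the promise yields $x'$ with $p(x') \ls 0$ and $\|x'\| \ls 1$, so each $|x_i'| \ls 1$ and hence $y_i := (x_i' + 1)/2 \in (0,1) \subseteq [0,1]$; this gives $\tilde p(y) = -p(x') \gr 0$. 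For the backward direction: any $y \in [0,1]^n$ with $\tilde p(y) \gr 0$ produces $x := (2y_1 - 1, \ldots, 2y_n - 1) \in [-1,1]^n$ with $p(x) = -\tilde p(y) \ls 0$, a genuine witness for \cref{lem:ETR-1}.

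The one delicate point, and the only place where the argument could fail without care, is the interaction between the \emph{strict} inequalities and the boundary of the cube. A witness of \cref{lem:ETR-1} that lies only on or outside the sphere $\|x\| = 1$ could, after the substitution, land on the boundary of $[0,1]^n$ or outside it, which would be harmless for the problem itself but would not let me assert membership in the \emph{open} cube; more importantly, I need the strict sign to survive. This is exactly what the promise resolves: it guarantees a witness with $\|x'\| \ls 1$, hence strictly inside the open cube $(0,1)^n$, so $\tilde p(y) \gr 0$ holds strictly. The remaining verifications (degree bound, polynomial-time computability, and that no-instances map to no-instances because $p$ being everywhere $\ge 0$ forces $\tilde p \le 0$ on all of $[0,1]^n$) are routine.
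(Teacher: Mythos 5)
Your proof is correct and follows essentially the same strategy as the paper's: reduce from the promise version of \cref{lem:ETR-1}, negate the polynomial, and use the guarantee $\|x'\| \ls 1$ to fit the witness into the cube, with the backward direction producing a genuine witness unconditionally. The only difference is cosmetic: you reparametrise via the affine map $x_i = 2y_i - 1$ on $n$ cube variables, whereas the paper writes $x_i = y_i - z_i$ with $2n$ variables in $[0,1]$; both substitutions preserve the degree bound and polynomial size, and both directions of the equivalence go through identically.
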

	
	%Using \cref{lem:ETR-2} one can show the following.
	\begin{proof}
			Membership in~$\ETR$ is clear.
		For hardness we reduce from the promise problem from the previous lemma.
		Let $p : \mathbb{R}^n \to \mathbb{R}$ be a multivariate polynomial of degree at most~$6$ such that if there is $x \in \mathbb{R}^n$ with $p(x) \ls 0$ then there is $x' \in \mathbb{R}^n$ with $p(x') \ls 0$ and $\|x'\| \ls 1$.
		Define the polynomial $q : \mathbb{R}^{2 n} \to \mathbb{R}$ by $q(y_1, \ldots, y_n, z_1, \ldots, z_n) := -p(y_1 - z_1, \ldots, y_n - z_n)$.
		The degree of~$q$ is at most~$6$.
		We have to show that there is $x \in \mathbb{R}^n$ with $p(x) \ls 0$ if and only if there are $y_1, \ldots, y_n, z_1, \ldots, z_n \in [0,1]$ with $q(y_1, \ldots, y_n, z_1, \ldots, z_n) \gr 0$.
		
		Suppose there are $x_1, \ldots, x_n \in \mathbb{R}$ with $p(x_1, \ldots, x_n) \ls 0$.
		By the property of~$p$ we can assume that $x_1^2 + \cdots + x_n^2 \ls 1$.
		It follows that $x_i \in [-1,1]$ holds for all~$i$.
		For all~$i$ with $x_i \ge 0$ define $y_i := x_i$ and $z_i := 0$.
		For all~$i$ with $x_i \ls 0$ define $y_i := 0$ and $z_i := -x_i$.
		Then we have $x_i = y_i - z_i$ and $y_i,z_i \in [0,1]$ for all~$i$.
		Further,
		\[
		q(y_1, \ldots, y_n, z_1, \ldots, z_n) \ = \ -p(y_1 - z_1, \ldots, y_n - z_n) \ = \ - p(x_1, \ldots, x_n) \ \gr \ 0\,.
		\]
		
		Towards the other direction, suppose that there are $y_1, \ldots, y_n, z_1, \ldots, z_n \in [0,1]$ with $q(y_1, \ldots, y_n, z_1, \ldots, z_n) \gr 0$.
		For all $i$ define $x_i := y_i - z_i$.
		Then we have
		\[
		p(x_1, \ldots, x_n) \ = \ p(y_1 - z_1, \ldots, y_n - z_n) \ = \ - q(y_1, \ldots, y_n, z_1, \ldots, z_n) \ \ls \ 0\,,
		\]
		as required.
	\end{proof}
	
	\begin{lemma} \label{lem:ETR-3}
		The following problem is $\ETR$-complete: given a rational number $\theta \ge 0$ and a multivariate (degree-$6$) polynomial $p : \mathbb{R}^n \to \mathbb{R}$ of the form $p(x) = \sum_{j=1}^k f_j(x)$ where each $f_j(x_1, \ldots, x_n)$ is a product of a nonnegative coefficient and $6$~terms of the form $x_i$ or $(1-x_i)$, does there exist $x \in [0,1]^n$ with $p(x) \gr \theta$?
	\end{lemma}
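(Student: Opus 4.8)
Membership in $\ETR$ is immediate, so the task is $\ETR$-hardness, which I would prove by reducing from the problem of \cref{lem:ETR-2}: given a polynomial $p:\mathbb{R}^n\to\mathbb{R}$ of degree at most~$6$, is there $x\in[0,1]^n$ with $p(x)\gr 0$? The plan is to rewrite~$p$ on the cube $[0,1]^n$ in the shape $p(x)=P(x)-C$, where $P$ has exactly the required form (a sum of products of six factors of the form $x_i$ or $(1-x_i)$, each scaled by a nonnegative coefficient) and $C\ge 0$ is a rational constant computable in polynomial time. Then there is $x\in[0,1]^n$ with $p(x)\gr 0$ if and only if there is such an~$x$ with $P(x)\gr C$, so taking $\theta:=C$ completes the reduction.

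The construction rests on two elementary identities, each of which produces only nonnegative coefficients. The first is the padding identity $(x_i+(1-x_i))^{m}=1$: expanding it by the binomial theorem writes~$1$ as a nonnegative combination of products of $m$~factors $x_i$ or $(1-x_i)$. Multiplying a monomial of total degree $d\le 6$ by this expression with $m=6-d$ turns it, without changing its value on $\mathbb{R}^n$, into a nonnegative combination of products of exactly six such factors. A monomial $c\cdot x_1^{a_1}\cdots x_n^{a_n}$ with $c\gr 0$ is thus brought directly into the required form. The second identity handles negative coefficients, which is the only genuine obstacle, since the target form forbids them. Writing a monomial as a product $y_1\cdots y_m$ of $m=\sum_i a_i$ factors $y_j\in\{x_1,\dots,x_n\}$, the telescoping identity
\[
1-y_1\cdots y_m \ = \ \sum_{j=1}^{m}(1-y_j)\prod_{l=1}^{j-1}y_l
\]
(with empty products read as~$1$) expresses $1-x_1^{a_1}\cdots x_n^{a_n}$ as a nonnegative combination of products of at most six factors of the form $x_i$ or $(1-x_i)$, which I would then pad to degree exactly six using the first identity.

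Now I would process~$p$ monomial by monomial. Each monomial with positive coefficient $c\cdot x^\alpha$ is padded to degree six via the first identity and contributes to~$P$. Each monomial with negative coefficient $-c\cdot x^\alpha$ (with $c\gr 0$) I would rewrite as $c\,(1-x^\alpha)-c$; here $c\,(1-x^\alpha)$ is put into the required form using the telescoping identity followed by padding, and so contributes to~$P$, while the constant $-c$ is accumulated into $-C$. Summing the accumulated constants yields a rational $C\ge 0$, and by construction $p(x)=P(x)-C$ holds on all of $\mathbb{R}^n$ with $P$ of the prescribed form. The entire transformation is plainly computable in polynomial time, which establishes $\ETR$-hardness and hence, together with membership, the claim. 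The point worth stressing is that the nonnegativity-of-coefficients requirement is precisely what forces the detour through $1-x^\alpha$, and it is essential that this quantity lies in $[0,1]$ on the cube, since that is what guarantees a representation with nonnegative coefficients.
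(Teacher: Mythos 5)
Your proposal is correct and takes essentially the same route as the paper: the paper uses the identical telescoping identity $-c\prod_{j}x_{i_j} = -c + \sum_{k} c(1-x_{i_k})\prod_{j\gr k} x_{i_j}$ to push each negative coefficient into an accumulated constant that becomes~$\theta$, and the same padding trick (repeatedly splitting $f_j$ into $x_1 f_j$ and $(1-x_1)f_j$, i.e.\ your binomial expansion of $(x_1+(1-x_1))^{6-d}$) to reach degree exactly six. The only differences are cosmetic (indexing of the telescoping sum and whether the padding is described iteratively or via the binomial theorem).
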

	\begin{proof}
		Membership in~$\ETR$ is clear.
		Towards hardness, suppose $m : \mathbb{R}^n \to \mathbb{R}$ is a monomial with a negative coefficient, i.e.,
		\[
		m(x_1, \ldots, x_n) \ = \ - c \prod_{j=1}^{d} x_{i_j} \qquad \text{for some } c \gr 0 \text{ and } i_1, \ldots, i_d \in \{1, \ldots, n\}\,.
		\]
		Then we have
		\begin{align*}
			m(x_1, \ldots, x_n) \ &= \ - c \prod_{j=1}^{d} x_{i_j} \ = \  c (1-x_{i_1}) \prod_{j=2}^{d} x_{i_j} - c \prod_{j=2}^{d} x_{i_j} \ = \ \ldots \\
			&= \ -c + \sum_{k=1}^d c(1-x_{i_k}) \prod_{j=k+1}^d x_{i_j} \,.
		\end{align*}
		
		We reduce from the problem from \cref{lem:ETR-2}.
		Let $p : \mathbb{R}^n \to \mathbb{R}$ be a multivariate polynomial of degree at most~$6$.
		By rewriting each monomial of~$p$ that has a negative coefficient using the pattern above, we can write $p(x) = -\theta + q(x)$ for some $\theta \ge 0$ and some $q: \mathbb{R}^n \to \mathbb{R}$ of the form $q(x) = \sum_{j=1}^k f_j(x)$ where each $f_j(x)$ is a product of a nonnegative coefficient and at most $6$~terms of the form $x_i$ or $(1-x_i)$.
		As long as there is an $f_j(x_1, \ldots, x_n)$ of degree less than~$6$, we can replace it by the two summands $x_1 f_j(x_1, \ldots, x_n)$ and $(1-x_1) f_j(x_1, \ldots, x_n)$.
		So we can assume that every $f_j(x)$ has the required form.
		For all $x \in \mathbb{R}^n$ we have that $p(x) \gr 0$ if and only if $q(x) \gr \theta$, as required.
	\end{proof}

	\QT{
		To show that the memoryless distance minimisation problem is $\ETR$-hard, we reduce from the problem in \cref{lem:ETR-3}.
		We give a brief outline of the reduction.
		Given a multivariate polynomial $p: \mathbb{R}^n \to \mathbb{R}$ of the form as in \cref{lem:ETR-3},
		we construct an MDP with initial states $s_1$ and $s_2$ such that each assignment $x \in [0, 1]^n$ corresponds to a memoryless strategy $\alpha(x)$ of the MDP.
		The distance of $s_1$ and $s_2$ in the LMC induced by the memoryless strategy $\alpha(x)$ is $1 - c \cdot p(x)$ where $c$ is a constant.
		Therefore, there exists $x \in [0,1]^n$ with $p(x) \gr \theta$ if and only if there exists a memoryless strategy $\alpha(x)$ such that the distance of $s_1$ and $s_2$ is less than $1 - c \cdot \theta$.
	}
	
	\begin{theorem} \label{thm:memmin}
		The memoryless distance minimisation problem is $\ETR$-hard.
	\end{theorem}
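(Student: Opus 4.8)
We reduce from the $\ETR$-complete problem of \cref{lem:ETR-3}. So let $p(x)=\sum_{j=1}^k f_j(x)$ with each $f_j(x)=c_j\prod_{\ell=1}^6\xi_{j,\ell}$, where $c_j\ge 0$ and each $\xi_{j,\ell}$ is either $x_i$ or $(1-x_i)$ for some $i$, and let $\theta\ge 0$ be the given threshold; write $C:=\sum_{j=1}^k c_j$ (if $C=0$ then $p\equiv 0$ and the instance is trivial, so assume $C>0$). The plan is to build an MDP whose \emph{only} nondeterministic choices sit at $n$ dedicated \emph{variable states} $v_1,\dots,v_n$, each with exactly two actions; every other state has a single available action. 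Thus a memoryless strategy~$\alpha$ is determined precisely by the mixing probabilities $x_i:=\alpha(v_i)(\m_1)\in[0,1]$, giving an exact correspondence between memoryless strategies and points $x\in[0,1]^n$. The goal is to arrange the construction so that in the induced (finite) LMC with transition matrix $t\mapsto\sum_\m\alpha(s)(\m)\varphi(s,\m)(t)$ we have $d(s_1,s_2)=1-\tfrac1C\,p(x)$; then, since $0\le p(x)\le C$, this distance lies in $[0,1]$, and $p(x)\gr\theta$ holds iff $d(s_1,s_2)\ls 1-\tfrac\theta C$. Taking the rational threshold $1-\tfrac\theta C$ (matching the claimed $1-c\cdot\theta$ with $c=1/C$) completes the reduction.

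The root states $s_1,s_2$ carry the same label and each has a single action splitting, with weights $c_j/C$, into one of $k$ \emph{term gadgets}: $s_1\to\sum_j \tfrac{c_j}{C}\mathbf 1_{e_j^1}$ and $s_2\to\sum_j \tfrac{c_j}{C}\mathbf 1_{e_j^2}$. Because all term entries carry pairwise matching labels across the two sides, the coupling that pairs $e_j^1$ with $e_j^2$ is optimal, giving $d(s_1,s_2)=\sum_j\tfrac{c_j}{C}\,d(e_j^1,e_j^2)$. Each term gadget is a length-$6$ chain computing $d(e_j^1,e_j^2)=1-\prod_{\ell=1}^6\xi_{j,\ell}$ by \emph{peeling off} one factor per level. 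Writing $d_m$ for the distance of the compared pair at level~$m$ (so $d_1=d(e_j^1,e_j^2)$) and using base $d_7=0$, the chain realises the recursion
\[
d_m \ = \ (1-\xi_{j,m})\cdot 1 \ + \ \xi_{j,m}\cdot d_{m+1}\,,
\]
which unfolds to $d_1=1-\prod_{\ell=1}^6\xi_{j,\ell}$. Each level is a convex-combination gadget: the two compared states route, with the branch probability equal to $\xi_{j,m}$, their ``continue'' mass to the next compared pair and, with complementary probability, their ``fail'' mass to a pair of \emph{label-mismatched} anchor states (hence at distance~$1$). With all non-continue/non-fail cross-distances also equal to~$1$, the coupling is forced to pair continue-with-continue and fail-with-fail, yielding exactly the displayed recursion. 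The branch probability $\xi_{j,m}$ is obtained from the relevant variable state $v_i$, and whether the factor is $x_i$ or $(1-x_i)$ is selected simply by designating the $\m_1$-successor or the $\m_2$-successor of $v_i$ as the ``continue'' direction; routing every occurrence of variable~$i$ through the single state $v_i$ guarantees that the \emph{same} value $x_i$ is used throughout both marginals.

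For correctness I would prove both directions via the coupling viewpoint. \emph{Soundness}: given $x$ with $p(x)\gr\theta$, take $\alpha$ with $\alpha(v_i)(\m_1)=x_i$ and exhibit the intended parallel/peeling couplings level by level to \emph{upper-bound} $d(s_1,s_2)$ by $1-\tfrac1C p(x)\ls 1-\tfrac\theta C$. \emph{Completeness}: conversely, any memoryless strategy is some $\alpha(x)$, and one \emph{lower-bounds} the distance by the same value, using that the distance-$1$ anchors make any non-parallel coupling at least as costly, so that the least fixed point of $\Delta$ attains exactly $1-\tfrac1C p(x)$ at $(s_1,s_2)$. The main obstacle is precisely this last rigidity argument together with the consistency issue behind it: one must wire the construction so that (i) all occurrences of a variable genuinely share a single value while the two compared states at every level remain distinct (so the peeled distances are not collapsed to~$0$), and (ii) the intended coupling is optimal at \emph{every} reachable pair, so that the fixed-point equations $d=\Delta(d)$ hold for the claimed solution and, being the \emph{least} fixed point, $d$ equals it rather than something strictly smaller. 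Restricting nondeterminism to the variable states (all other states single-action) removes any spurious strategic freedom and makes the final ``$d(s_1,s_2)=1-\tfrac1C p(x)$'' an exact identity, which is what drives both directions of the equivalence.
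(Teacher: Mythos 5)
Your reduction source (\cref{lem:ETR-3}), the bijection between memoryless strategies and points $x\in[0,1]^n$, and the target identity $d(s_1,s_2)=1-c\cdot p(x)$ all match the paper's outline, and the six-step peeling recursion $1-d_m=\xi_{j,m}(1-d_{m+1})$ is the right shape. The gap is in the one place where the construction is actually hard: how a transition ``with branch probability $\xi_{j,m}$'' is realised. In an MDP the value $x_i$ exists only as the strategy's randomisation at the single nondeterministic state $v_i$, so any state whose outgoing probability is to equal $x_i$ must pass through $v_i$ itself. If both compared states of a level route through the same $v_i$ (as your symmetric ``convex-combination gadget'' requires, since both are supposed to branch with probability $\xi_{j,m}$), the pair collapses to $(v_i,v_i)$ and its distance to $0$, destroying the recursion; if instead each occurrence of $x_i$ gets its own copy of the variable gadget, consistency is lost---the strategy may choose different probabilities at different copies, and e.g.\ $x_1(1-x_1)$ (maximum $1/4$) decouples into $x_1(1-x_1')$ (maximum $1$), so the reduction becomes unsound. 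You name exactly these two requirements as ``the main obstacle \dots one must wire the construction so that (i) \dots (ii) \dots'', but wiring them is the substance of the theorem; the gadget you describe does not do it.

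The paper resolves this with an asymmetric construction: the $s_1$-side is a purely deterministic labelled chain $u_{j,k}\to v_{j,k}\to w_{j,k}$ in which $w_{j,k}$ carries the label $\ell(j,k)\in\{\pm 1,\dots,\pm n\}$ of the literal, while the $s_2$-side is a single hub $u$ that moves to each variable state $v_i$ (and a sink) with probability $\frac{1}{n+1}$; each $v_i$ is nondeterministic and sends probability $x_i$ to $w_i$ and $1-x_i$ to $w_{-i}$, both returning to $u$. The forced coupling out of $u$ and the label matching at the $w$-states select the correct literal while keeping the compared pair distinct, at the price of a damping factor $\frac{1}{n+1}$ per level, yielding $d(s_1,s_2)=1-p(x)/(n+1)^7$ rather than your $1-p(x)/C$. (A smaller, fixable issue: with $s_1$ and $s_2$ both branching over $j$ into same-labelled term entries, nothing forces the diagonal coupling $e_j^1\leftrightarrow e_j^2$, and cross-term pairings could be cheaper; the paper sidesteps this by giving $s_2$ a single successor.) Without a concrete gadget meeting your conditions (i) and (ii), the proposal does not yet constitute a proof.
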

	\begin{proof}
		We reduce from the problem from \cref{lem:ETR-3}.
		Let $\theta \ge 0$ and let $p : \mathbb{R}^n \to \mathbb{R}$ be a multivariate polynomial of the form $p(x) = \sum_{j=1}^m f_j(x)$ where each $f_j(x_1, \ldots, x_n)$ is a product of a nonnegative coefficient and $6$~terms of the form $x_i$ or $(1-x_i)$.
		Let us write $f_j(x_1, \ldots, x_n) = c_j \prod_{k=1}^6 x_{\ell(j,k)}$ where each $c_j \ge 0$ and each $\ell(j,k) \in \{-n, \ldots, -1, 1, \ldots, n\}$ and we use the notation $x_{-i}$ for $i \gr 0$ to mean $1-x_{i}$.
		We can assume that $\sum_{j=1}^m c_j = 1$ (otherwise, divide $\theta$ and each~$c_j$ by $\sum_{j=1}^m c_j$).
		
		
\begin{figure}[t]	
			\centering

			\tikzstyle{BoxStyle} = [draw, circle, fill=black, scale=0.4,minimum width = 1pt, minimum height = 1pt]
			
			\begin{tikzpicture}[yscale=0.95,xscale=.6,>=latex',shorten >=1pt,node distance=3cm,on grid,auto]
				\node[state] (s1) at (-1,0) {$s_1$};
				\node[state] (u11) at (2,1) {$u_{11}$};
				\node[state] (v11) at (4,1) {$v_{11}$};
				\node[state] (w11) at (6,1) {$w_{11}$};
				\node[label] at (6,1.7) {\color{blue}$\mathbf{1}$};
				\node[state] (u12) at (8,1) {$u_{12}$};
				\node[state] (v12) at (10,1) {$v_{12}$};
				\node[state] (w12) at (12,1) {$w_{12}$};
				\node[label] at (12,1.7) {\color{blue}$\mathbf{1}$};
				\node[state] (u13) at (14,1) {$u_{13}$};
				\node[state] (v13) at (16,1) {$v_{13}$};
				\node[state] (w13) at (18,1) {$w_{13}$};
				\node[label] at (18,1.7) {\color{blue}$\mathbf{-2}$};
				\node[state] (u21) at (2,-1) {$u_{21}$};
				\node[state] (v21) at (4,-1) {$v_{21}$};
				\node[state] (w21) at (6,-1) {$w_{21}$};
				\node[label] at (6,-0.3) {\color{blue}$\mathbf{2}$};
				\node[state] (u22) at (8,-1) {$u_{22}$};
				\node[state] (v22) at (10,-1) {$v_{22}$};
				\node[state] (w22) at (12,-1) {$w_{22}$};
				\node[label] at (12,-0.3) {\color{blue}$\mathbf{4}$};
				\node[state] (u23) at (14,-1) {$u_{23}$};
				\node[state] (v23) at (16,-1) {$v_{23}$};
				\node[state] (w23) at (18,-1) {$w_{23}$};
				\node[label] at (18,-0.3) {\color{blue}$\mathbf{-4}$};
				\node[state] (up) at (21,0) {$u'$};
				\node[state] (x) at (21,-1.5) {$t$};

				\path[->] (s1) edge node [midway, above] {$\frac{1}{3}$} (u11);
				\path[->] (u11) edge node [midway, above] {} (v11);
				\path[->] (v11) edge node [midway, above] {} (w11);
				\path[->] (w11) edge node [midway, above] {} (u12);
				\path[->] (u12) edge node [midway, above] {} (v12);
				\path[->] (v12) edge node [midway, above] {} (w12);
				\path[->] (w12) edge node [midway, above] {} (u13);
				\path[->] (u13) edge node [midway, above] {} (v13);
				\path[->] (v13) edge node [midway, above] {} (w13);
				
				\path[->] (s1) edge node [midway, below] {$\frac{2}{3}$} (u21);
				\path[->] (u21) edge node [midway, above] {} (v21);
				\path[->] (v21) edge node [midway, above] {} (w21);
				\path[->] (w21) edge node [midway, above] {} (u22);
				\path[->] (u22) edge node [midway, above] {} (v22);
				\path[->] (v22) edge node [midway, above] {} (w22);
				\path[->] (w22) edge node [midway, above] {} (u23);
				\path[->] (u23) edge node [midway, above] {} (v23);
				\path[->] (v23) edge node [midway, above] {} (w23);
				\path[->] (w13) edge node [midway, above] {} (up);
				\path[->] (w23) edge node [midway, above] {} (up);
				\path[->] (up) edge node [midway, above] {} (x);
				\path (x) edge [loop below] node {} (x);
				
%				\node[BoxStyle] (du) at (-1.25,-4) {};
				\coordinate (du) at (-1.25,-4) {};
				%\draw (node cs:name=w13,anchor=east) -| (du);
				%\draw (node cs:name=w23,anchor=east) -| (du);
				
				\node[state] (s2) at (11.5,-2.5) {$s_2$};
				\node[state] (u) at (11.5,-4) {$u$};
				\node[state] (v1) at (1.5,-5.5) {$v_1$};
				\node[state] (v2) at (6.5,-5.5) {$v_2$};
				\node[state] (v3) at (11.5,-5.5) {$v_3$};
				\node[state] (v4) at (16.5,-5.5) {$v_4$};
				\node[state] (xp) at (21,-5.5) {$t'$};
				
				\node[BoxStyle] (dv11) at (0.25,-6.3) {};
				\node[BoxStyle] (dv12) at (2.75,-6.3) {};
				\node[BoxStyle] (dv21) at (5.25,-6.3) {};
				\node[BoxStyle] (dv22) at (7.75,-6.3) {};
				\node[BoxStyle] (dv31) at (10.25,-6.3) {};
				\node[BoxStyle] (dv32) at (12.75,-6.3) {};
				\node[BoxStyle] (dv41) at (15.25,-6.3) {};
				\node[BoxStyle] (dv42) at (17.75,-6.3) {};
				
				\node[state] (w1) at (0.25,-7.5) {$w_1$};
				\node[label] at (-.75,-7.5) {\color{blue}$\mathbf{1}$};
				\node[state] (w1m) at (2.75,-7.5) {$w_{-1}$};
				\node[label] at (1.5,-7.5) {\color{blue}$\mathbf{-1}$};
				\node[state] (w2) at (5.25,-7.5) {$w_{2}$};
				\node[label] at (4.25,-7.5) {\color{blue}$\mathbf{2}$};
				\node[state] (w2m) at (7.75,-7.5) {$w_{-2}$};
				\node[label] at (6.5,-7.5) {\color{blue}$\mathbf{-2}$};
				\node[state] (w3) at (10.25,-7.5) {$w_{3}$};
				\node[label] at (9.25,-7.5) {\color{blue}$\mathbf{3}$};
				\node[state] (w3m) at (12.75,-7.5) {$w_{-3}$};
				\node[label] at (11.5,-7.5) {\color{blue}$\mathbf{-3}$};
				\node[state] (w4) at (15.25,-7.5) {$w_{4}$};
				\node[label] at (14.25,-7.5) {\color{blue}$\mathbf{4}$};
				\node[state] (w4m) at (17.75,-7.5) {$w_{-4}$};
				\node[label] at (16.5,-7.5) {\color{blue}$\mathbf{-4}$};
				
				\path[->] (s2) edge node {} (u);
				\path[->] (u) edge node [midway, above] {$\frac{1}{5}$} (v1);
				\path[->] (u) edge node [below] {$\frac{1}{5}$} (v2);
				\path[->] (u) edge node [left] {$\frac{1}{5}$} (v3);
				\path[->] (u) edge node [midway, below] {$\frac{1}{5}$} (v4);
				\path[->] (u) edge node [midway, above] {$\frac{1}{5}$} (xp);
				\path (xp) edge [loop below] node {} (xp);
				
				\path[-] (v1) edge node {} (dv11);
				\path[-] (v1) edge node {} (dv12);
				\path[-] (v2) edge node {} (dv21);
				\path[-] (v2) edge node {} (dv22);
				\path[-] (v3) edge node {} (dv31);
				\path[-] (v3) edge node {} (dv32);
				\path[-] (v4) edge node {} (dv41);
				\path[-] (v4) edge node {} (dv42);
				
				\path[->] (dv11) edge node [midway, above] {} (w1);
				\path[->] (dv12) edge node [midway, above] {} (w1m);
				\path[->] (dv21) edge node [midway, above] {} (w2);
				\path[->] (dv22) edge node [midway, above] {} (w2m);
				\path[->] (dv31) edge node [midway, above] {} (w3);
				\path[->] (dv32) edge node [midway, above] {} (w3m);
				\path[->] (dv41) edge node [midway, above] {} (w4);
				\path[->] (dv42) edge node [midway, above] {} (w4m);
				
				\draw (node cs:name=w1,anchor=south) |- (17.75,-8.3);
				\draw (node cs:name=w1m,anchor=south) |- (17.75,-8.3);
				\draw (node cs:name=w2,anchor=south) |- (17.75,-8.3);
				\draw (node cs:name=w2m,anchor=south) |- (17.75,-8.3);
				\draw (node cs:name=w3,anchor=south) |- (17.75,-8.3);
				\draw (node cs:name=w3m,anchor=south) |- (17.75,-8.3);
				\draw (node cs:name=w4,anchor=south) |- (17.75,-8.3);
				\draw[->] (node cs:name=w4m,anchor=south) |- (17.75,-8.3) -| (du) -- (u); %(node cs:name=du,anchor=south);
				%\path[->] (du) edge node {} (u);
			\end{tikzpicture}
			
			\caption{
				An illustration of the proof of \cref{thm:memmin}.
				Consider the polynomial~$p$ with $p(x_1, x_2, x_3, x_4) = \frac13 x_1^2 (1-x_2) + \frac23 x_2 x_4 (1-x_4)$.
				This example polynomial has degree~$3$ (instead of degree~$6$ in the proof) to allow for a more succinct picture.
				The analogous construction from the reduction yields the shown MDP.
				\QT{The labels are written next to the states in blue, unlike the other figures in this paper where we usually use different colours to indicate different state labels. 
					We omit label~$0$.}
				There is a one-to-one correspondence between an assignment $x \in [0,1]^4$ and a memoryless strategy~$\alpha(x)$ in the MDP.
				It is such that \QT{$d(s_1,s_2) = 1 - \frac{p(x)}{5^4}$}, establishing a connection between an evaluation of~$p$ and the distance.
			}
			\label{fig:example}
		\end{figure}
		
		Construct an MDP \QT{which consists of two disjoint parts} as follows; see \cref{fig:example} for an illustration.
		\QT{The first part is an LMC.}
		Include states $u_{j,k}, v_{j,k}, w_{j,k}$ for each $j \in \{1, \ldots, m\}$ and each $k \in \{1, \ldots, 6\}$.
		Each $u_{j,k}, v_{j,k}$ has label~$0$, and each $w_{j,k}$ has label~$\ell(j,k)$.
		Each $u_{j,k}$ transitions with probability~$1$ to~$v_{j,k}$.
		Each $v_{j,k}$ transitions with probability~$1$ to~$w_{j,k}$.
		Each $w_{j,k}$, except those with $k=6$, transitions with probability~$1$ to~$u_{j,k+1}$.
		Include also states~$s_1$\QT{, $u'$ and $t$} with label~$0$.
		State~$s_1$ transitions with probability~$c_j$ to $u_{j,1}$, for each~$j$.
		\QT{
			State $u'$ transitions with probability~$1$ to~$t$.
			State $t$ is a sink state, that is, it transitions with probability~$1$ to itself.
			Also each $w_{j,6}$ transitions with probability~$1$ to~$u'$.
		}
		
		\QT{The second part is an MDP.}
		Include states \QT{$s_2, u, t'$} with label~$0$.
		State $s_2$ transitions with probability~$1$ to~$u$.
		Include also states $v_1, \ldots, v_n$, each with label~$0$.
		\QT{State $u$ transitions to each $v_i$ and $t'$ with probability~$\frac{1}{n+1}$.
			State $t'$ is a sink state.
		}
		Include also states $w_{-n}, \ldots, w_{-1}, w_1, \ldots, w_{n}$, where each $w_i$ has label~$i$.
		Each $v_i$ has two actions, one of which leads with probability~$1$ to~$w_i$, the other one with probability~$1$ to~$w_{-i}$.
		Each $w_i$ transitions with probability~$1$ to~$u$.
		%Also each $w_{j,6}$ transitions with probability~$1$ to~$u$.
		
		Each assignment $x \in [0,1]^n$ corresponds to a memoryless strategy~$\alpha(x)$ such that in state~$v_i$ the memoryless strategy $\alpha(x)$ takes with probability~$x_i$ the action that leads to~$w_i$, and $\alpha(x)$~takes with probability~$1-x_i$ the action that leads to~$w_{-i}$.
		In fact, this mapping~$\alpha$ (from an assignment to a memoryless strategy) is a bijection.
		Fix an arbitrary $x \in [0,1]^n$ and consider the distances in the LMC induced by~$\alpha(x)$.
		For notational convenience, for any states $s,s'$ let us write $\overline{d}(s,s') := 1-d(s,s')$.
		Further, let us write $u_{j,7}$ for~\QT{$u'$}.
		
		%%%%%%%%%%%%%%%%%%%%%%%%%%%%%%%%%%%
		Let $j \in \{1, \ldots, m\}$ and $k \in \{1, \ldots, 6\}$.
		Then we have
		\[
		\overline{d}(u_{j,k},u) \ = \ \frac{1}{n+1} \overline{d}(v_{j,k},v_{|\ell(j,k)|}) \ = \ \frac{1}{n+1} x_{\ell(j,k)} \overline{d}(w_{j,k}, w_{\ell(j,k)}) \ = \ \frac{1}{n+1} x_{\ell(j,k)} \overline{d}(u_{j,k+1}, u)\,.
		\]
		Since $\overline{d}(u_{j,7},u) = \overline{d}(u',u) = \frac{1}{n+1}$, it follows
		\[
		\overline{d}(u_{j,1},u) \ = \ \left(\frac{1}{n+1}\right)^7 \prod_{k=1}^6 x_{\ell(j,k)}\,.
		\]
		Hence,
		\begin{align*}
		\overline{d}(s_1,s_2) \ &= \ \sum_{j=1}^m c_j \overline{d}(u_{j,1},u) \ = \ \sum_{j=1}^m c_j \left(\frac{1}{n+1}\right)^7 \prod_{k=1}^6 x_{\ell(j,k)} \ = \ \left(\frac{1}{n+1}\right)^7 \sum_{j=1}^m f_j(x) \ \\
                                &= \ \frac{p(x)}{(n+1)^7}\,.
		\end{align*}
		Thus, we have $p(x) \gr \theta$ if and only if $\overline{d}(s_1,s_2) \gr \frac{\theta}{(n+1)^7}$ if and only if $d(s_1, s_2) \ls 1 - \frac{\theta}{(n+1)^7}$.
		%%%%%%%%%%%%%%%%%%%%%%%%%%%%%%%%%%%
		This completes the hardness proof.
	\end{proof}
	
	\QT{The following theorem, proved in \cref{proof:thm:memmin-upper},  provides a matching upper bound. }
	
	\begin{restatable}{theorem}{theoremMemMinUpper}\label{thm:memmin-upper}
		The memoryless distance minimisation problem is in $\ETR$.
	\end{restatable}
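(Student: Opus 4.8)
The plan is to exhibit a single existential first-order formula over the reals, of size polynomial in the MDP, whose truth is equivalent to the existence of a memoryless strategy witnessing $d(s_1,s_2) \ls \theta$; constructing this formula in polynomial time is exactly a many-one reduction to {\sf ETR}, proving membership in $\ETR$.

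First I would reduce to a finite LMC. A memoryless strategy $\alpha : S \to \Dist(\Act)$ induces a \emph{finite} LMC on the state space~$S$ with transition probabilities $\tau(s)(u) = \sum_{\m \in \Act(s)} \alpha(s)(\m)\,\varphi(s,\m)(u)$ and labelling~$\ell$; the map $\rho \mapsto \last(\rho)$ is a label- and probability-preserving homomorphism from $\D(\alpha)$ onto this finite LMC, so the distance $d(s_1,s_2)$ in $\D(\alpha)$ equals the distance between $s_1,s_2$ in the finite LMC. I encode the strategy by variables $\alpha_{s,\m} \ge 0$ with $\sum_{\m \in \Act(s)} \alpha_{s,\m} = 1$; each $\tau(s)(u)$ is then a linear expression in these variables.

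The crux is to assert inside an \emph{existential} formula that the vector $d = (d_{s,t})_{(s,t) \in S \times S}$ equals the least fixed point of~$\Delta$. Here I would use two ideas. First, since the least fixed point lies pointwise below every fixed point in the complete lattice $[0,1]^{S \times S}$, the least fixed point satisfies $d(s_1,s_2) \ls \theta$ if and only if \emph{some} fixed point of~$\Delta$ satisfies $d_{s_1,s_2} \ls \theta$; thus I may existentially guess an arbitrary fixed point and avoid any universally quantified minimality condition. Second, to state that $d_{s,t}$ equals the coupling minimum $\min_{\omega \in \Omega(\tau(s),\tau(t))} \sum_{u,v} \omega(u,v)\, d_{u,v}$ for a same-label pair, I would invoke Kantorovich (LP) duality for the transportation problem: introduce primal coupling variables $\omega^{s,t}_{u,v}$ and dual variables $\phi^{s,t}_u, \psi^{s,t}_v$, require the coupling to be feasible with objective at most $d_{s,t}$, and require the dual to be feasible ($\phi^{s,t}_u + \psi^{s,t}_v \le d_{u,v}$ for all $u,v$) with objective at least $d_{s,t}$. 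Weak duality then sandwiches the minimum between these bounds and forces $d_{s,t}$ to equal it exactly. All constraints are polynomial of degree at most two (the bilinear terms $\omega^{s,t}_{u,v} d_{u,v}$ and $\tau(s)(u)\,\phi^{s,t}_u$) with rational coefficients.

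Putting this together, the formula existentially quantifies over the strategy variables, the distances $d_{s,t}$, and the primal/dual variables for every same-label pair, and conjoins: the strategy constraints; $0 \le d_{s,t} \le 1$; $d_{s,t} = 1$ whenever $\ell(s) \ne \ell(t)$; the primal/dual fixed-point constraints for pairs with $\ell(s) = \ell(t)$; and finally $d_{s_1,s_2} \ls \theta$. The number of variables and constraints is polynomial in $|S|$ and $|\Act|$, with bounded degree and polynomially sized rational coefficients, so the formula is constructible in polynomial time. The main obstacle is exactly the twofold nesting of a least fixed point around an inner minimisation over couplings; the combination of the ``below every fixed point'' observation with LP duality is what collapses both into purely existential, bounded-degree polynomial constraints, which is what $\ETR$ membership requires.
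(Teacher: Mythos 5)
Your proposal is correct, but it takes a detectably different route from the paper's, and it carries machinery the paper shows to be unnecessary. Both proofs guess the strategy probabilities, a candidate distance vector $(d_{s,t})$, and one coupling per same-label pair, all as existentially quantified reals, yielding a degree-two formula of polynomial size. The divergence is in how the least-fixed-point condition is discharged. You insist on certifying that the guessed vector is an \emph{exact} fixed point of $\Delta$, which forces you to encode the inner minimisation over couplings exactly; you do this via Kantorovich/LP duality, adding dual variables $\phi^{s,t}_u,\psi^{s,t}_v$ and the weak-duality sandwich (plus an implicit appeal to strong duality in the completeness direction, to guarantee that dual optima exist and the formula is satisfiable when a good strategy exists). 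The paper instead observes that, by Knaster--Tarski, the true distance is the least \emph{pre}-fixed point of $\Delta$, i.e.\ it lies below every $d'$ with $\Delta(d') \sqsubseteq d'$. A single feasible coupling per pair with $\sum_{u,v}\omega(u,v)\,d_{u,v} = d_{s,t}$ already certifies $\Delta(d)(s,t) \le d_{s,t}$, so the primal witness alone makes $d$ a pre-fixed point and hence an upper bound on the true distance --- which is all that the one-sided comparison $d(s_1,s_2) \ls \theta$ requires. Your own observation that the least fixed point lies below every fixed point is exactly this idea in a weaker form; had you pushed it to pre-fixed points, the dual variables and both duality theorems would have dropped out. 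What your heavier encoding buys is an exact characterisation of the distance vector rather than just an upper bound, but for this decision problem that extra precision is not needed, and it does not extend to the maximisation direction either (a non-least fixed point can still overestimate the distance). Both arguments are sound and both yield polynomial-size, bounded-degree existential formulas, so membership in $\ETR$ follows either way.
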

	
	Together with \cref{thm:memmin} we obtain:
	\begin{corollary}
		The memoryless distance minimisation problem is $\ETR$-complete.
	\end{corollary}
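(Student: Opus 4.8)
The plan is to establish $\ETR$-completeness by proving the two matching bounds separately and then combining them. Completeness requires both that the problem lies in $\ETR$ (membership) and that every $\ETR$ problem reduces to it (hardness); these are exactly the contents of \cref{thm:memmin-upper} and \cref{thm:memmin}, respectively, so the corollary follows by invoking both. Since each theorem is already available, the corollary itself is immediate, and the substance of the argument lies in the two directions.

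For the upper bound (membership), I would encode the memoryless distance minimisation problem as an {\sf ETR} sentence. The idea is to existentially quantify over the strategy probabilities $\alpha(s)(\m)$ (constrained to form a distribution on $\Act(s)$), over the entries $d(s,t)$ of the distance vector, and over couplings $\omega \in \Omega(\tau(s),\tau(t))$ witnessing each application of~$\Delta$. One then asserts that the quantified $d$ satisfies the defining fixed-point equations of~$\Delta$ in the induced LMC, using the marginal conditions to express $\Omega(\tau(s),\tau(t))$ as polynomial (indeed linear) constraints, together with $d(s_1,s_2) \ls \theta$. The delicate point is that $d$ is the \emph{least} fixed point of~$\Delta$, not an arbitrary one; this must be enforced (so that the witness is the genuine bisimilarity distance rather than some larger fixed point), and it is handled in \cref{proof:thm:memmin-upper}.

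For the lower bound (hardness), I would reduce from the degree-$6$ polynomial problem of \cref{lem:ETR-3}, itself obtained through the chain \cref{lem:ETR-1,lem:ETR-2,lem:ETR-3}. Given $p(x)=\sum_j f_j(x)$ in the prescribed product form with $\sum_j c_j = 1$, the construction of \cref{thm:memmin} builds an MDP whose memoryless strategies are in bijection with assignments $x \in [0,1]^n$: the strategy reads off $x_i$ as the probability of choosing $w_i$ over $w_{-i}$ at state~$v_i$. Writing $\overline{d} := 1-d$, the telescoping recurrence through the $u_{j,k},v_{j,k},w_{j,k}$ gadgets yields $\overline{d}(u_{j,1},u) = (n+1)^{-7}\prod_{k=1}^6 x_{\ell(j,k)}$, and summing against the weights~$c_j$ gives $\overline{d}(s_1,s_2) = p(x)/(n+1)^7$. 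Hence $p(x) \gr \theta$ iff $d(s_1,s_2) \ls 1 - \theta/(n+1)^7$, which is the desired reduction.

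Combining the two directions yields $\ETR$-completeness. I expect the genuinely hard part to be the hardness chain rather than the corollary's final combination step: in particular \cref{lem:ETR-1}, whose tower-of-squares polynomial~$r$ forces the auxiliary variable $y_k$ to grow doubly exponentially so that the root-separation bound of \cite[Corollary~3.4]{SchaeferS17} can be invoked to turn an approximate root of~$q$ into an exact one, and the bookkeeping in \cref{thm:memmin} verifying that $\alpha$ is a bijection and that the recurrence for $\overline{d}$ evaluates to $p(x)$ exactly.
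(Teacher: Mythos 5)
Your proposal is correct and takes essentially the same route as the paper: the corollary is obtained simply by combining \cref{thm:memmin} (hardness via the chain \cref{lem:ETR-1,lem:ETR-2,lem:ETR-3} and the gadget MDP with $\overline{d}(s_1,s_2)=p(x)/(n+1)^7$) with \cref{thm:memmin-upper} (membership via an existential sentence over strategy probabilities, couplings, and distance values). One small correction to your ``delicate point'': leastness of the fixed point need not be enforced in the {\sf ETR} sentence --- since $d(s_1,s_2) \ls \theta$ is a one-sided condition and the genuine distance is the \emph{least} (pre-)fixed point of~$\Delta$ by Knaster--Tarski, any quantified solution of the fixed-point equations upper-bounds it, which is exactly how \cref{proof:thm:memmin-upper} argues.
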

	
	\section{General Strategies: Distance Minimisation}\label{section:general-strategies}
	In this section we consider the \emph{general distance minimisation problem} which, given an MDP, two states $s_1, s_2$ of the MDP, and a rational number $\theta$, asks whether there is a general strategy~$\alpha$ such that $d(s_1,s_2) \ls \theta$ holds in the LMC induced by~$\alpha$.
	
	To show that the general distance minimisation problem is undecidable, we establish a reduction from the emptiness problem for probabilistic automata.
	
	A probabilistic automaton is a tuple
	$\A = <Q, q_0, L, \delta, F>$ consisting of a finite set $Q$ of states, an initial state $q_0 \in Q$,
	a finite set $L$ of letters, a transition function $\delta: Q \times L \to \Dist(Q)$ assigning to every state and letter a distribution over states, and a set $F$ of final states.
	We also extend $\delta$ to words, by letting $\delta(q_0, \varepsilon) = \mathbf{1}_{q_0}$ and $\delta(q_0, \sigma w) = \sum_{q \in Q}\delta(q_0, \sigma)(q)\delta(q, w)$ for $\sigma \in L$ and $w \in L^{*}$.
	For a state $q \in Q$, $\A_q$ is the probabilistic automaton obtained from $\A$ by making $q$ the initial state.
	
	%For a word $w \in L^{*}$, $\delta(q_0, w) \in \Dist(Q)$ is the distribution such that, for all $q \in Q$, we have that $\delta(q_0, w)(q)$ is the probability that after inputting $w$ the automaton $\A$ is in state q.
	We write $\Pr_{\A}(w) = \sum_{q \in F} \delta(q_0, w)(q)$ to denote the probability that $\A$ accepts a word $w$.
	The emptiness problem asks, given a probabilistic automaton $\A$, whether there exists a word $w$ such that $\Pr_{\A}(w) \gr \frac{1}{2}$ holds.
	The probabilistic automaton $\A$ is called empty if no such word exists.
	\QT{This problem is known to be undecidable \cite{Fijalkow2017,Paz2014}, even for probabilistic automata with only two letters \cite{BC2003}\footnote{It is stated in \cite[Theorem 2.1]{BC2003} that the emptiness problem with unfixed threshold $\lambda$, i.e., whether there exists a word $w$ such that $\Pr_{\A}(w) \gr \lambda$, is undecidable for probabilistic automata with only two letters. It is easy to adapt the proof to show undecidability of the emptiness problem with fixed threshold~$\frac{1}{2}$.}.}
	
	Let $\A = <Q, q_0, L, \delta, F>$ be a probabilistic automaton; without loss of generality we
	assume that $q_0 \not\in F$ and $L = \{a, b\}$.
	We construct an MDP $\D$ with states $s_1$ and $s_2$ and a number $\theta$ such that $\A$ is nonempty if and only if there is a general strategy such that $d(s_1,s_2) \ls \theta$ in the induced LMC.
	
	%where $\theta$ depends on $\A$ and can be computed in polynomial time.
	\QT{
		Let us first outline the idea of the construction.
		Our MDP includes the part shown in \cref{fig:general-MDP1}, where after a random word $w \in L^*$ is produced, the strategy must choose between taking the transition to $x$ or to~$y$.
		\Cref{lem:general-distance-expression} below characterises the distance of $s_1$ and $s_2$ under strategy $\alpha$ in terms of $\alpha$ and $\Pr_{\A}$.
		It follows from \cref{lem:general-distance-expression} that the following strategy minimises the distance: if the random word $w$ satisfies $\Pr_{\A}(w) \le \frac{1}{2}$, choose the transition to~$x$; otherwise choose the transition to~$y$.
	}
	\SK{
		Setting $\theta$ as the distance under the strategy that always chooses the transition to~$x$, we obtain that the distance can be made less than~$\theta$ if and only if there is a word~$w$ with $\Pr_{\A}(w) \gr \frac{1}{2}$.
	}
	
	
\begin{figure}[!htb]	
		\centering
		\tikzstyle{BoxStyle} = [draw, circle, fill=black, scale=0.4,minimum width = 1pt, minimum height = 1pt]
		
		\begin{tikzpicture}[xscale=.6,>=latex',shorten >=1pt,node distance=3cm,on grid,auto]
			% \node[state, fill=orange!20] (s1) at (0,0){$s_1$};
			% \node[BoxStyle] (ds1) at (0,-0.8) {};
			\node[state, fill=orange!20] (a) at (-2, -2.5) {$a$};
			\node[state, fill=blue!20] (b) at (2, -2.5) {$b$};
			%\node[BoxStyle] (da1) at (0,-2.8) {};
			\node[state] (dollar) at (0,-4.2){$\$$};
			\node[BoxStyle] (mx) at (-2, -4.5) {};
			\node[BoxStyle] (my) at (2, -4.5) {};
			\node[label] at (-2.7, -4.5) {$\m_x$};
			\node[label] at (2.7, -4.5) {$\m_y$};
			\node[state, fill=red!20] (x) at (-2, -5.7) {$x$};
			\node[state, fill=OliveGreen!30] (y) at (2, -5.7) {$y$};
			%\node[BoxStyle] at (-1.8, -7.1) {};
			%\node[BoxStyle] at (2.2, -7.1) {};
			
			% \path[-] (s1) edge node {} (ds1);
			% \path[->] (ds1) edge node [near end, above] {$\frac{1}{2}$} (a);
			% \path[->] (ds1) edge node [near end, above] {$\frac{1}{2}$} (b);
			
			%\path[-] (a) edge node {} (da1);
			%\path[-] (b) edge node {} (da1);

			%\path[->] (da1) edge [bend left=50] node [near end, left] {$\frac{1}{3}$} (a); %out=-165,in=-90,looseness=1
			%\path[->] (da1) edge [bend right=50] node [near end, right] {$\frac{1}{3}$} (b);
			%\path[->] (da1) edge node [near end, right] {$\frac{1}{3}$} (dollar);
			
			\path (a) edge [loop left] node {$\frac{1}{3}$} (a);
			\path (b) edge [loop right] node {$\frac{1}{3}$} (b);
			\path[->] (a) edge [bend right=15] node [above] {$\frac{1}{3}$} (b);
			\path[->] (b) edge [bend right=15] node [above] {$\frac{1}{3}$} (a); 								
			\path[->] (a) edge node [left] {$\frac{1}{3}$} (dollar);
			\path[->] (b) edge node [right] {$\frac{1}{3}$} (dollar); 						
			\path[-] (dollar) edge node {} (mx);
			\path[-] (dollar) edge node {} (my);
			\path[->] (mx) edge node [pos=0.6, left] {$1$} (x);
			\path[->] (my) edge node [pos=0.6, right] {$1$} (y);
			\path (x) edge [loop left] node {$1$} (x);
			\path (y) edge [loop right] node {$1$} (y);
		\end{tikzpicture}
		
		\caption{
			The first part of the MDP $\D$.
			The $\$$ state is the only one that has nondeterministic choices: it has two available actions, $\m_x$ and $\m_y$.
			The default action $\m$ for the other states is omitted. %$s_1,a,b,x,y$
			Different colours indicate different state labels.
		}
		\label{fig:general-MDP1}
	\end{figure}
	
	\SK{		We now give the details of the construction.}
	The MDP $\D = <S, \Act, L', \varphi, \ell>$ consists of two disjoint parts as follows; see \cref{fig:general-MDP1} and \cref{fig:general-MDP2}.
	The set of actions is $\Act = \{\m, \m_x, \m_y\}$.
	The set of labels is $L' = \{a, b, \$, x, y\}$.
	
	The first part is an MDP shown in \cref{fig:general-MDP1}.
	Its set of states is $\{a, b, \$, x, y\}$.
	The state $s_1$ is defined to be $a$.
	The transitions $\varphi$ are defined as follows:
	\begin{itemize}
		%\item
		%The state $s_1$ transitions with uniform probability to the two successors $a$ and $b$, that is,  $\varphi(s_1, \m)(a) = \varphi(s_1, \m)(b) = \frac{1}{2}$.
		\item
		The state $a$ (resp. $b$) transitions with uniform probability to its three successors $a$, $b$ and $\$$, that is, $\varphi(s, \m)(a) = \varphi(s, \m)(b) = \varphi(s, \m)(\$) = \frac{1}{3}$ for $s \in \{a, b\}$.
		\item
		The state $\$$ has two actions $\m_x$ and $\m_y$; the action $\m_x$ goes with probability $1$ to $x$ and the action  $\m_y$ goes with probability $1$ to $y$. That is,
		$\varphi(\$, \m_x)(x) = \varphi(\$, \m_y)(y) = 1$.
		\item
		The states $x$ and $y$ are sink states, that is, $\varphi(s, \m)(s) = 1$ for $s \in \{x, y\}$.
	\end{itemize}
	%The state $s_1$ is labelled with $a$.
	Each of the states is labelled with its name, that is, $\ell(s) = s$ for $s \in \{a, b, \$, x, y\}$.
	This sub-MDP ``is almost'' an MC, in the sense that a strategy $\alpha$ does not influence its behaviour until eventually a transition to $x$ or $y$ is taken.
	Since $a$, $b$, $x$ and $y$ have only one available action, we may omit the default action $\m$ in the paths that contain $\m$ only.
	For example, we may write $s_1ab\$$ to represent the path $s_1 \m a \m b \m \$$.
	
	
\begin{figure}[!htb]	
		\centering
		\tikzstyle{BoxStyle} = [draw, circle, fill=black, scale=0.4,minimum width = 1pt, minimum height = 1pt]
		
		\begin{tikzpicture}[xscale=.6,>=latex',shorten >=1pt,node distance=3cm,on grid,auto]
			\node[state] (q) at (0, 0){$q$};
			\node[state] (q1) at (5, 1) {$q_1$};
			\node[state] (q2) at (5, -1) {$q_2$};
			\node[label] at (0, -0.8) {$q \not\in F$};
			
			\node[label] at (2,-3) {\parbox{3.5cm}{$p_1$ and $p_2$ are transition probabilities: \\$p_1 = \delta(q, a)(q_1)$ and $p_2 = \delta(q, b)(q_2)$.}};
			
			\path[->] (q) edge node [midway, above] {$p_1, a$} (q1);
			\path[->] (q) edge node [midway, below] {$p_2, b$} (q2);
			
			\node[state, fill=orange!20] (aq) at (10, 0.2) {$(a, q)$};
			%\node[BoxStyle] (daq) at (10, 0.2) {};
			\node[state, fill=orange!20] (aq1) at (16, 1.2) {$(a, q_1)$};
			\node[state, fill=blue!20] (bq2) at (16, -0.8) {$(b, q_2)$};
			\node[state] (dollarx) at (10, -1.5) {$\$_x$};
			%\node[BoxStyle] (ddx) at (10, -1.8) {};
			\node[state, fill=red!20] (x) at (10, -3) {$x'$};
			%\node[BoxStyle] at (10.2, -3.9) {};
			
			%\path[-] (aq) edge node {} (daq);
			\path[->] (aq) edge node [midway, above] {$\frac{p_1}{3}$} (aq1);
			\path[->] (aq) edge node [midway, below] {$\frac{p_2}{3}$} (bq2);
			\path[->] (aq) edge node [midway, right] {$\frac{1}{3}$} (dollarx);
			%\path[-] (dollarx) edge node {} (ddx);
			\path[->] (dollarx) edge node [midway, right] {$1$} (x);
			\path (x) edge [loop right] node {$1$} (x);
			
			%%%%%%%%%%%%%%%%%%%%%%%%%%%%%%%%%%%%%%%%%%%%%%%%%%%%%%%%%%%%%%%%%%%
			\node[state,accepting] (f) at (0, -6){$f$};
			\node[state] (q3) at (5, -5) {$q_3$};
			\node[state] (q4) at (5, -7) {$q_4$};
			\node[label] at (0, -6.8) {$f \in F$};
			
			\node[label] at (2,-9) {\parbox{3.5cm}{$p_3$ and $p_4$ are transition probabilities: \\$p_3 = \delta(f, a)(q_3)$ and $p_4 = \delta(f, b)(q_4)$.}};
			
			\path[->] (f) edge node [midway, above] {$p_3, a$} (q3);
			\path[->] (f) edge node [midway, below] {$p_4, b$} (q4);
			
			\node[state, fill=orange!20] (af) at (10, -5.8) {$(a, f)$};
			%\node[BoxStyle] (daf) at (10, -5.8) {};
			\node[state, fill=orange!20] (aq3) at (16, -4.8) {$(a, q_3)$};
			\node[state, fill=blue!20] (bq4) at (16, -6.8) {$(b, q_4)$};
			\node[state] (dollary) at (10, -7.5) {$\$_y$};
			%\node[BoxStyle] (ddy) at (10, -8.8) {};
			\node[state, fill=OliveGreen!30] (y) at (10, -9) {$y'$};
			%\node[BoxStyle] at (10.2, -10.9) {};
			
			%\path[-] (af) edge node {} (daf);
			\path[->] (af) edge node [midway, above] {$\frac{p_3}{3}$} (aq3);
			\path[->] (af) edge node [midway, below] {$\frac{p_4}{3}$} (bq4);
			\path[->] (af) edge node [midway, right] {$\frac{1}{3}$} (dollary);
			%\path[-] (dollary) edge node {} (ddy);
			\path[->] (dollary) edge node [midway, right] {$1$} (y);
			\path (y) edge [loop right] node {$1$} (y);
			
			%big boxes
			\node[label] at (2.5, 2.3) {The probabilistic automaton $\A$};
			\node[rectangle,draw,dashed, minimum width = 5cm, minimum height = 12cm] at (2.5,-4.1) {};
			\node[label] at (13, 2.3) {The second part of $\D$};
			\node[rectangle,draw,dashed, minimum width = 6cm, minimum height = 12cm] at (13,-4.1) {};
		\end{tikzpicture}
		
		\caption{
			The second part of the MDP $\D$ is an LMC, constructed from the probabilistic automaton $\A$.
			The default deterministic action $\m$ for all states is omitted.
			%Different colours indicate different state labels.
			The state $(b, q)$ in the MDP $\D$, where $q \in Q$, has the same transitions as the state $(a, q)$;
			it is labelled with $b$.
		}
		\label{fig:general-MDP2}
	\end{figure}
	
	The other part of $\D$ is an LMC constructed from $\A$ as follows; see \cref{fig:general-MDP2}.
	The set of states is $(L \times Q) \cup \{\$_x, \$_y, x', y'\}$.
	The state $s_2$ is defined to be $(a, q_0)$.
	
	We describe the transitions of the LMC using the transition function $\delta$ of $\A$.
	Consider a letter $\sigma \in L$ and a state $q \in Q$.
	The state $(\sigma, q)$ with probability $\frac{1}{3}$ simulates the probabilistic automaton $\A$ reading the letter $a$, and with probability $\frac{1}{3}$ simulates the probabilistic automaton $\A$ reading the letter $b$.
	That is, $\varphi\big((\sigma, q), \m \big) \big( (a, q') \big) = \frac{1}{3}\delta(q, a)(q')$ and $\varphi\big((\sigma, q), \m \big) \big( (b, q') \big) = \frac{1}{3}\delta(q, b)(q')$.
	
	For the remaining probability of $\frac{1}{3}$, we distinguish the following two cases:
	\begin{itemize}
		\item
		If $q \not\in F$, the state $(\sigma, q)$ transitions to $\$_x$ with probability $\frac{1}{3}$, that is, $\varphi\big((\sigma, q), \m \big) (\$_x) = \frac{1}{3}$.
		\item
		Otherwise, if $q \in F$, the state $(\sigma, q)$ transitions to $\$_y$ with probability $\frac{1}{3}$, that is,  $\varphi\big((\sigma, q), \m \big) (\$_y) = \frac{1}{3}$.
	\end{itemize}
	The state $\$_x$ (resp. $\$_y$) transitions with probability one to the sink state $x'$ (resp. $y'$).
	That is, $\varphi(\$_x, \m)(x') = \varphi(\$_y, \m)(y') = \varphi(x', \m)(x') = \varphi(y', \m)(y') = 1$.
	
	A state $(\sigma, q) \in L \times Q$ is labelled with $\sigma$.
	The states $\$_x$ and $\$_y$ are labelled with $\$$.
	The states $x'$ and $y'$ are labelled with $x$ and $y$, respectively.
	
	Given a general strategy $\alpha$, the next lemma expresses the distance between $s_1$ and $s_2$ in terms of $\alpha$ and $\Pr_{\A}$.
	The proof is technical and can be found in \cref{proof:lem:general-distance-expression}.
	\begin{lemma}\label{lem:general-distance-expression}
		For any general strategy $\alpha$, we have
		\[d_\alpha(s_1, s_2)
		= \sum_{w \in L^*} \frac{1}{3^{|w| + 1}} \big( (1-\textstyle\Pr_{\A}(w)) \alpha(s_1 w
		\$)(\m_y) + \Pr_{\A}(w) \alpha(s_1 w \$)(\m_x) \big).\]
	\end{lemma}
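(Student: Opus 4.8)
The plan is to exploit the fact that the two halves of~$\D$ emit labels in exactly the same way, and to reduce the distance to a reachability probability under a single, canonical coupling. The crucial structural observation is that from every relevant state the transition distribution splits its mass as $\tfrac13,\tfrac13,\tfrac13$ over the three label classes $\{a\}$, $\{b\}$ and $\{\$\}$: in the first part the state $a$ (or~$b$) moves to $a$, $b$, $\$$ each with probability~$\tfrac13$, and in the second part a state $(\sigma,q)$ moves into the $a$-labelled states with total probability~$\tfrac13$, into the $b$-labelled states with total probability~$\tfrac13$, and to $\$_x$ or $\$_y$ with probability~$\tfrac13$. Hence the two chains can be coupled so that they always emit the same label; I call this the synchronising coupling~$\omega$. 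Under~$\omega$ the coupled process reads a common random word $w\in L^*$ (each letter, and the final stop, with probability~$\tfrac13$), after which the first part sits at~$\$$ on the path $s_1 w \$$ and the second part sits at $\$_x$ or~$\$_y$. I will show that $d_\alpha(s_1,s_2)$ equals the probability that $\omega$ ever reaches a pair of states carrying different labels, and then evaluate that probability.

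For the upper bound, let $g$ denote, as a function of the starting pair, the probability that the synchronising coupling reaches a differently-labelled pair. Then $g$ equals~$1$ on differently-labelled pairs and satisfies $g(u,v)=\sum_{u',v'}\omega_{u,v}(u',v')\,g(u',v')$ on equally-labelled pairs. Since $\omega_{u,v}$ is a genuine coupling of the two successor distributions, $\Delta(g)(u,v)=\min_{\omega'}\sum \omega' g\le \sum \omega_{u,v}\,g=g(u,v)$, so $g$ is a pre-fixed point of~$\Delta$ and therefore $d_\alpha(s_1,s_2)=(\lfp\,\Delta)(s_1,s_2)\le g(s_1,s_2)$. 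For the lower bound I argue that the synchronising coupling is in fact optimal. At an equally-labelled pair $(u,v)$ the first-part state~$u$ has exactly one successor in each label class, each of mass~$\tfrac13$, and every differently-labelled successor pair carries the maximal value~$1$ under~$d_\alpha$. Because the three label classes carry equal mass~$\tfrac13$ on both sides, an exchange (uncrossing) argument shows that any coupling of the two successor distributions can be rearranged into the class-matching one without increasing $\sum \omega' d_\alpha$: replacing two crossing, differently-labelled pairs (each of value~$1$) by two equally-labelled pairs (each of value at most~$1$) never increases the cost. Hence $\min_{\omega'}\sum\omega' d_\alpha=\sum\omega_{u,v}d_\alpha$, and as $d_\alpha$ is a fixed point of~$\Delta$ we get $d_\alpha(u,v)=\sum_{u',v'}\omega_{u,v}(u',v')\,d_\alpha(u',v')$ on equally-labelled pairs, with $d_\alpha=1$ on differently-labelled ones. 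Thus $d_\alpha$ is a nonnegative solution of exactly the linear system defining the reachability probabilities~$g$ of the synchronising coupling; since $g$ is the least nonnegative solution of that system, $d_\alpha(s_1,s_2)\ge g(s_1,s_2)$. Combining the two bounds gives $d_\alpha(s_1,s_2)=g(s_1,s_2)$.

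It remains to evaluate $g(s_1,s_2)$. Under~$\omega$ the labels agree throughout the word-generating phase and also at the stopping step (both parts are labelled~$\$$), so a differently-labelled pair can only arise at the single move that follows~$\$$. After the common word~$w$ has been produced and the process has stopped---an event of probability $\tfrac{1}{3^{|w|+1}}$, and these events are disjoint for distinct~$w$---the first part moves to $x$ or~$y$ according to $\alpha(s_1 w \$)$, while the second part moves to $x'$ or to~$y'$ according to whether its automaton component lies in~$F$, which happens with probability $\Pr_{\A}(w)$; these two choices are independent because $\alpha$ depends only on the first-part path. A label mismatch occurs precisely when the first part goes to~$x$ while the second goes to~$y'$, or the first goes to~$y$ while the second goes to~$x'$, which has probability $\Pr_{\A}(w)\,\alpha(s_1 w\$)(\m_x)+(1-\Pr_{\A}(w))\,\alpha(s_1 w\$)(\m_y)$. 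Once the labels match here the two parts sit in bisimilar sinks ($x,x'$ or $y,y'$) and no further mismatch occurs. Summing the disjoint contributions over all $w\in L^*$ yields exactly the claimed expression for $d_\alpha(s_1,s_2)$.

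The main obstacle is the lower bound: making the exchange argument rigorous and, more importantly, transferring it from a single transport step to the whole (countable, path-indexed but finitely branching) induced LMC. Concretely I must ensure that the least-fixed-point characterisation of~$d_\alpha$ and the least-nonnegative-solution characterisation of reachability probabilities are both available on this infinite state space, and that the per-pair optimality of the synchronising coupling indeed forces $d_\alpha$ to solve the global reachability system. The upper bound and the final summation are comparatively routine.
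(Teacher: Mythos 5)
Your strategy is sound and, for the lower bound, genuinely different from the paper's. The upper bound is essentially the paper's argument in different clothing: the paper writes out your function $g$ explicitly, case by case, as a candidate pseudometric and verifies it is a pre-fixed point of $\Delta$, then invokes Knaster--Tarski; identifying that candidate as the probability that the synchronising coupling reaches a differently-labelled pair is a cleaner way to see where the formula comes from, but the verification work is the same. For the lower bound the paper is more elementary: it unfolds the fixed-point equation $n$ times along the synchronising coupling, obtaining $d_\alpha(s_1,s_2)$ as the partial sum over $w \in L^{\le n}$ plus a nonnegative remainder, and lets $n \to \infty$; you instead prove pointwise optimality of the synchronising coupling and then compare $d_\alpha$ with the least nonnegative solution of the reachability system. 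Note that both routes hinge on the same nontrivial fact --- that at every reachable equally-labelled pair the minimiser in $\Delta(d_\alpha)$ is the label-matching coupling --- which the paper's chain of equalities in the ``$\ge$'' direction uses tacitly and you make explicit. Your exchange argument does go through, for the reasons you give: each first-part state has a unique successor in each of the three label classes, each class carries mass $\tfrac13$ under both marginals, and cross-label pairs have distance exactly $1$, so the label-matching coupling is the \emph{unique} coupling without cross-label mass and uncrossing cannot increase cost (one small correction: a single reroute may replace two crossing pairs by one matched and one still-crossing pair, but the cost change is still $\le 0$ and the cross mass strictly decreases, and since the induced LMC is finitely branching each local transportation problem is finite, so the process terminates). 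The infinite-state machinery you flag is available: $\Delta$ is monotone on the complete lattice $[0,1]^{\SetP \times \SetP}$, so $d_\alpha$ is its least pre-fixed point, and the paper's own \cref{lemma:reach=lfp} gives the least-fixed-point characterisation of reachability probabilities for countable LMCs. So your plan is workable; the paper's truncation-at-depth-$n$ argument merely reaches the same conclusion without routing through the least-solution characterisation.
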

	
	Using \cref{lem:general-distance-expression}, we prove the main theorem of this section:
	
	\begin{theorem}\label{thm:general-undecidable}
		The general distance minimisation problem is undecidable.
	\end{theorem}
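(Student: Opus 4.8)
The plan is to use \cref{lem:general-distance-expression} to reduce the emptiness problem for~$\A$ to the question of whether some strategy beats a fixed threshold~$\theta$. The crucial observation is that the expression in \cref{lem:general-distance-expression} \emph{decouples} across words: since the paths $s_1 w \$$ are pairwise distinct (distinct words yield distinct state sequences before reaching~$\$$), a general strategy may choose the distribution $\alpha(s_1 w \$) \in \Dist(\{\m_x, \m_y\})$ independently for each $w \in L^*$, subject only to $\alpha(s_1 w \$)(\m_x) + \alpha(s_1 w \$)(\m_y) = 1$. Writing $b_w := \alpha(s_1 w \$)(\m_y) \in [0,1]$ and substituting $\alpha(s_1 w \$)(\m_x) = 1 - b_w$, each summand of \cref{lem:general-distance-expression} simplifies to $\frac{1}{3^{|w|+1}}\bigl(\Pr_{\A}(w) + (1 - 2\Pr_{\A}(w))\, b_w\bigr)$. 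Hence, setting $\theta := \sum_{w \in L^*} \frac{\Pr_{\A}(w)}{3^{|w|+1}}$, I would record the identity
\[
d_\alpha(s_1,s_2) - \theta \ = \ \sum_{w \in L^*} \frac{1 - 2\Pr_{\A}(w)}{3^{|w|+1}}\, b_w \,,
\]
valid for every general strategy~$\alpha$, where $\theta$ is exactly the distance realised by the strategy that always chooses~$\m_x$ (all $b_w = 0$).

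Second, I must check that $\theta$ is a rational number computable from~$\A$, so that it is a legitimate input to the decision problem. For $q \in Q$ let $g(q) := \sum_{w \in L^*} 3^{-|w|} \Pr_{\A_q}(w)$, so that $\theta = \tfrac13 g(q_0)$. Splitting the sum according to the first letter and using $\Pr_{\A_q}(\sigma w') = \sum_{q' \in Q} \delta(q,\sigma)(q') \Pr_{\A_{q'}}(w')$ yields the finite linear system
\[
g(q) \ = \ [q \in F] + \tfrac13 \sum_{\sigma \in L} \sum_{q' \in Q} \delta(q,\sigma)(q')\, g(q') \,.
\]
Because $|L| = 2$, the total weight on the right is $\tfrac23 \ls 1$, so the system is a contraction and has a unique solution; that solution is rational and computable in polynomial time by Gaussian elimination. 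In particular $\theta = \tfrac13 g(q_0)$ is rational and computable.

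Finally I would establish the equivalence ``$\A$ is nonempty $\iff$ some strategy achieves $d_\alpha(s_1,s_2) \ls \theta$'' directly from the displayed identity. If $\A$ is empty, then $\Pr_{\A}(w) \le \tfrac12$ for all~$w$, so every coefficient $1 - 2\Pr_{\A}(w)$ is nonnegative and each $b_w \ge 0$; hence the right-hand side is $\ge 0$ for every~$\alpha$, i.e.\ $d_\alpha(s_1,s_2) \ge \theta$ always, ruling out a strictly smaller distance. Conversely, if $\A$ is nonempty there is a word $w^*$ with $\Pr_{\A}(w^*) \gr \tfrac12$; taking the strategy with $b_{w^*} = 1$ and $b_w = 0$ for $w \ne w^*$ makes the right-hand side equal to the single negative term $\frac{1 - 2\Pr_{\A}(w^*)}{3^{|w^*|+1}} \ls 0$, so $d_\alpha(s_1,s_2) \ls \theta$. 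Since the emptiness problem is undecidable even over a two-letter alphabet, undecidability of the general distance minimisation problem follows. I expect the only real subtleties to be the decoupling step --- justifying that the per-word choices $b_w$ are genuinely free for a general strategy, which is precisely where a memoryless strategy would fail, as it could not distinguish the different histories $w$ at state~$\$$ --- and the verification that~$\theta$ is rational; the remainder is the elementary sign analysis of the coefficients $1 - 2\Pr_{\A}(w)$.
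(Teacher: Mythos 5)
Your proposal is correct and follows essentially the same route as the paper: a reduction from the emptiness problem for probabilistic automata, with $\theta$ taken to be the distance under the always-$\m_x$ strategy and the equivalence established by the sign analysis of the coefficients $1-2\Pr_{\A}(w)$ in the word-indexed sum of \cref{lem:general-distance-expression}. The only (cosmetic) difference is that you justify the rationality and computability of~$\theta$ via an explicit linear system for $g(q)$, whereas the paper simply invokes the polynomial-time algorithm of~\cite{CvBW2012} on the finite LMC $\D(\alpha_x)$; both are valid.
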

	
	\begin{proof}
		We reduce from the emptiness problem for probabilistic automata.
		Let $\A = <Q, q_0, L, \delta, F>$ be a probabilistic automaton; without loss of generality we assume that $q_0 \not\in F$ and $L = \{a, b\}$.
		Let $\D$ be the MDP constructed from $\A$ shown in \cref{fig:general-MDP1,fig:general-MDP2}.
		
		Let $\alpha_x$ be the memoryless strategy that chooses the action $\m_x$ whenever it is in state $\$$, that is, $\alpha_x(s_1w\$) = \mathbf{1}_{\m_x}$ for all $w \in L^{*}$.
		Let $\theta$ be the distance between $s_1$ and $s_2$ in the LMC $\D(\alpha_x)$.
		It can be computed in polynomial time \cite{CvBW2012}.
		We show in \cref{proof:thm:general-undecidable} that there is a word $w \in L^{*}$ such that $\Pr_{\A}(w) \gr \frac{1}{2}$ ($\A$ is nonempty) if and only if there is a general strategy $\alpha$ such that $d_{\alpha}(s_1,s_2) \ls \theta$ in the induced LMC.
	\end{proof}
	
	\section{General Strategies: Distance Less Than One}\label{section:distance-less-than-one}
	In this section, we consider the distance less than one problem which, given an MDP and two states, asks whether there is a general strategy such that the two states have probabilistic bisimilarity distance less than one in the LMC induced by the general strategy.
	The challenge here is that general strategies induce, in general, LMCs with infinitely many states.
	
	We show that the distance less than one problem is EXPTIME-complete.
	We prove the upper and lower bound in \cref{subsection:EXPTIME-ub,subsection:EXPTIME-hardness}, respectively.
	
	\subsection{Membership in EXPTIME}\label{subsection:EXPTIME-ub}
	Let $\M  = <S, L, \tau, \ell>$ be \QT{a (possibly infinite)} LMC. We partition the set $S^2$ of state pairs into
	\[
	\begin{array}{rcl}
		S^2_0 & = & \{\, (s, t) \in S^2 \mid s \sim t \,\}\\
		S^2_1 & = & \{\, (s, t) \in S^2 \mid \ell(s) \not= \ell(t) \,\}\\
		S^2_? & = & S^2 \setminus (S^2_0 \cup S^2_1)\,.
	\end{array}
	\]
	
	We call $T : S^2_? \to \Dist(S^2)$ a \emph{policy} for the LMC if for all $(s, t) \in S^2_?$ we have $T(s, t) \in \Omega(\tau(s), \tau(t))$.
	%    \begin{definition}
		%		\label{definition:total-policy}
		%		The set $\mathcal{T}$ of \emph{policies} of the LMC is defined by
		%		\[
		%		\mathcal{T}
		%		=
		%		\{\, T \in S^2_? \to \Dist(S^2) \suchthat \forall (s, t) \in S^2_?: T(s, t) \in \Omega(\tau(s), \tau(t)) \,\}.
		%		\]
		%	\end{definition}
	We write $\mathcal{T}$ for the set of policies.
	\iffalse
	\begin{definition}\label{definition:lmc-distance-graph}
		Given an LMC $\M = <S, L, \tau, \ell>$, the directed graph $G_{\M} = (V, E)$ induced by $\M$ is defined by
		\[
		\begin{array}{rcl}
			V &=& \{ (u, v): \ell(u) = \ell(v) \}\\
			E &=& \{\, <(s, t), (u, v)>: \tau(s)(u) \gr 0 \wedge \tau(t)(v) \gr 0 \,\}
		\end{array}
		\]	
	\end{definition}
	
	\begin{proposition}\label{proposition:lmc-distance-graph-reachability}
		Given a \emph{finite} LMC $\M= <S, L, \tau, \ell>$ and two states $s$ and $t$,
		the probabilistic bisimilarity distance of $s$ and $t$ is less than one if and only if in the graph $G_\M$ the vertex $(s, t)$ can reach some $(u, v)$ with $u \sim v$.
	\end{proposition}
	\fi
	%	\begin{definition}\label{definition:lmc-distance-graph}
		Given a policy $T \in \mathcal{T}$,
		%and $(s, t) \in S_?^2$,
		the Markov chain $\C_{\M}^{T} = <S^2, \tau'>$ induced by $T$ is defined by
		%, \iota
		\[
		\begin{array}{rcll}
			\tau'\big((u, v)\big)\big((u, v)\big) &=& 1 &\mbox{if $(u, v) \in S_0^2 \cup S_1^2$;}\\
			\tau'\big((u, v)\big)\big((x, y)\big) &=& T(u, v)(x, y) & \mbox{otherwise.} %if $(u, v) \in S_?^2$
			%\\\iota &=& (s, t)\\
		\end{array}
		\]	
		%	\end{definition}
	For $(s,t) \in S^2$ and a set of state pairs $Z \subseteq S^2$ we write $\R_{\M}^T((s,t),Z) \in [0,1]$ for the probability that in the Markov chain~$\C_{\M}^T$ the state $(s,t)$ reaches a state $(u,v) \in Z$.
	
	By \cite[Theorem~4, Proposition~5]{TvB2018}, the following proposition holds.
	
	\begin{proposition}\label{proposition:lmc-distance-graph-reachability2}
		Let $\M= <S, L, \tau, \ell>$ be a \emph{finite} LMC and $s, t \in S$.
		We have $d(s,t) \ls 1$ if and only if there exists a policy $T$ such that %in the Markov chain $\C_\M^T$ the state $(s, t)$ can reach some $(u, v) \in S_0^2$.
		$\R_\M^T((s,t),S_0^2) \gr 0$.
	\end{proposition}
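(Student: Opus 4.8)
The plan is to prove the equivalence through the MDP-style view of the distance afforded by the induced Markov chains~$\C_\M^T$: treat the couplings selected by a policy as the choices of a minimiser whose objective is to reach the distance-one pairs~$S^2_1$. Concretely, I would first record the two facts supplied by \cite[Theorem~4, Proposition~5]{TvB2018}, namely (i) that the least fixed point~$d$ satisfies $d(s,t) = \min_{T \in \mathcal{T}} \R_\M^T((s,t), S^2_1)$, so that the bisimilarity distance coincides with the optimal (minimal) reachability probability of the label-mismatch set, and (ii) that for every policy~$T$ the finite Markov chain~$\C_\M^T$ is absorbed in $S^2_0 \cup S^2_1$ with probability one, i.e.\ $\R_\M^T((s,t), S^2_0) + \R_\M^T((s,t), S^2_1) = 1$ for all $(s,t) \in S^2$. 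Given these two ingredients, the proposition reduces to elementary bookkeeping.

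For the direction from reachability to distance, suppose $T$ is a policy with $\R_\M^T((s,t), S^2_0) \gr 0$. Since $S^2_0$ and $S^2_1$ are disjoint and both absorbing in~$\C_\M^T$, the events of absorption in the two sets are mutually exclusive, whence $\R_\M^T((s,t), S^2_1) \le 1 - \R_\M^T((s,t), S^2_0) \ls 1$. As $d(s,t)$ is the minimum of $\R_\M^T((s,t), S^2_1)$ over all policies by fact~(i), in particular $d(s,t) \le \R_\M^T((s,t), S^2_1) \ls 1$. Conversely, if $d(s,t) \ls 1$, then fact~(i) yields a policy~$T$ with $\R_\M^T((s,t), S^2_1) = d(s,t) \ls 1$, and fact~(ii) gives $\R_\M^T((s,t), S^2_0) = 1 - \R_\M^T((s,t), S^2_1) \gr 0$, as required.

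The real content therefore lies in facts~(i) and~(ii), and I expect fact~(ii)---the absence of persistent behaviour inside~$S^2_?$---to be the main obstacle, since it is what makes the finiteness of~$\M$ essential and what fails for general infinite LMCs. I would prove it by contradiction: if $\C_\M^T$ had a bottom strongly connected component $C \subseteq S^2_?$, then for every $(u,v) \in C$ the coupling $T(u,v) \in \Omega(\tau(u), \tau(v))$ is supported on~$C$. Closing $C$ under symmetry, reflexivity and transitivity to an equivalence relation~$R$, and gluing the given couplings along~$R$ (so that a coupling of $\tau(u),\tau(v)$ and one of $\tau(v),\tau(w)$ compose to a coupling of $\tau(u),\tau(w)$ supported on $R \circ R \subseteq R$), one checks that every $R$-related pair has equal label and admits a coupling of its transition distributions supported on~$R$; by the standard lifting characterisation this makes $R$ a probabilistic bisimulation. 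Hence the pairs in~$C$ would satisfy $u \sim v$, i.e.\ $C \subseteq S^2_0$, contradicting $C \subseteq S^2_?$.

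Consequently no bottom component meets~$S^2_?$, so in the finite chain~$\C_\M^T$ absorption into $S^2_0 \cup S^2_1$ occurs almost surely, giving fact~(ii). Fact~(i) is then obtained by matching the Bellman optimality equation for minimal reachability of~$S^2_1$ (value~$1$ on $S^2_1$, value~$0$ on the absorbing set~$S^2_0$, a minimising choice of coupling elsewhere) with the defining fixed-point equation of~$\Delta$, using fact~(ii) to exclude the spurious end components within~$S^2_?$ that would otherwise separate the least fixed point of~$\Delta$ from the minimal reachability value. The existence of an optimal positional policy realising the minimum in~(i) is standard for finite reachability MDPs, so the $\min$ in the characterisation is genuinely attained, which is all the converse direction requires.
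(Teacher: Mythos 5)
Your proof is correct. Note that the paper does not prove this proposition itself --- it imports it wholesale from \cite[Theorem~4, Proposition~5]{TvB2018} --- so your argument is a genuine reconstruction rather than a variant of something in the text. Your fact~(i) is exactly \cref{theorem:reach} of the paper (proved in the appendix for arbitrary, possibly infinite LMCs, including attainment of the minimum), so you could simply invoke that theorem instead of re-sketching the Bellman-equation matching, which is the vaguest part of your write-up; in particular, (i) does not depend on (ii) or on finiteness. The real content you supply is fact~(ii), almost-sure absorption into $S^2_0 \cup S^2_1$ under every policy, and your BSCC argument for it is sound: a bottom strongly connected component $C \subseteq S^2_?$ forces every coupling $T(u,v)$ with $(u,v) \in C$ to be supported on~$C$, and the reflexive-symmetric-transitive closure of~$C$, with couplings glued by composition (and transposition for symmetry), satisfies the standard coupling characterisation of probabilistic bisimulation, so all pairs in~$C$ would lie in~$S^2_0$, a contradiction. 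This is precisely the ingredient that fails for infinite state spaces (cf.\ \cref{example:drift-1}, where the process can drift forever inside~$S^2_?$), which is why the paper must retreat to the weaker \cref{corollary:lmc-distance-graph-reachability} in the infinite-state setting; your proof correctly isolates where finiteness enters. One small simplification: the direction from $\R_\M^T((s,t),S^2_0) \gr 0$ to $d(s,t) \ls 1$ needs only that the two absorption events are disjoint, not full almost-sure absorption, exactly as in the second half of \cref{corollary:lmc-distance-graph-reachability}.
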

	
	The ``only if'' direction of \cref{proposition:lmc-distance-graph-reachability2} does not generally hold for LMCs with infinite state space, as the following example shows.
	\begin{figure}[t]
		\begin{subfigure}{1\textwidth}
			\centering
			\tikzstyle{BoxStyle} = [draw, circle, fill=black, scale=0.4,minimum width = 1pt, minimum height = 1pt]
			\begin{tikzpicture}[xscale=.6,>=latex',shorten >=1pt,node distance=3cm,on grid,auto]
				\node[state,fill=red!20] (s-1) at (-3,0) {$s_{0}$};
				\node[state] (s0) at (0,0) {$s_1$};
				\node[state] (s1) at (3,0) {$s_2$};
				\node[state] (s2) at (6,0) {$s_3$};
				%\node[state] (s3) at (9,0) {$s_3$};
				\node[label] at (7.5,0) {$\cdots$};
				
				\node[state] (t0) at (12,0) {$t$};
				
				\path (s-1) edge [loop left] node {$1$} (s-1);
				\path[->] (s0) edge [bend left=15] node {$\frac{1}{3}$} (s-1);
				\path[->] (s0) edge [bend left=15] node {$\frac{2}{3}$} (s1);
				\path[->] (s1) edge [bend left=15] node {$\frac{2}{3}$} (s2);
				%\path[->] (s2) edge [bend left] node {$\frac{2}{3}$} (s3);
				\path[->] (s1) edge [bend left=15] node {$\frac{1}{3}$} (s0);
				\path[->] (s2) edge [bend left=15] node {$\frac{1}{3}$} (s1);
				%\path[->] (s3) edge [bend left] node {$\frac{1}{3}$} (s2);
				
				\path (t0) edge [loop right] node {$1$} (t0);
			\end{tikzpicture}
			\caption{An infinite LMC $\M$.}
			\label{fig:example-infinite-lmc}
		\end{subfigure}
		~
		\begin{subfigure}{1\textwidth}
			\centering
			\tikzstyle{BoxStyle} = [draw, circle, fill=black, scale=0.4,minimum width = 1pt, minimum height = 1pt]
			%\resizebox{0.6\columnwidth}{!}{
				\begin{tikzpicture}[xscale=.5,>=latex',shorten >=1pt,node distance=3cm,on grid,auto]
					\node[state,inner sep=1pt] (s0t0) at (-4,0) {$(s_0,t)$};
					\node[state,inner sep=1pt] (s1t0) at (0,0) {$(s_1,t)$};
					\node[state,inner sep=1pt] (s2t0) at (4,0) {$(s_2, t)$};
					\node[state,inner sep=1pt] (s3t0) at (8,0) {$(s_3, t)$};
					%\node[state] (s3) at (9,0) {$s_3$};
					\node[label] at (10,0) {$\cdots$};
					
					\path (s0t0) edge [loop left] node {$1$} (s0t0);
					\path[->] (s1t0) edge [bend left=10] node {$\frac{1}{3}$} (s0t0);
					\path[->] (s1t0) edge [bend left=10] node {$\frac{2}{3}$} (s2t0);
					\path[->] (s2t0) edge [bend left=10] node {$\frac{2}{3}$} (s3t0);
					%\path[->] (s2) edge [bend left=20] node {$\frac{2}{3}$} (s3);
					\path[->] (s2t0) edge [bend left=10] node {$\frac{1}{3}$} (s1t0);
					\path[->] (s3t0) edge [bend left=10] node {$\frac{1}{3}$} (s2t0);
					%\path[->] (s3) edge [bend left=20] node {$\frac{1}{3}$} (s2);
				\end{tikzpicture}
				%}
			\caption{The Markov chain $\C_{\M}^T$.}
			\label{fig:example-graph-lmc}
		\end{subfigure}
		\caption{
			(a) An infinite state LMC $\M$ with an infinite state space $S = \big\{s_i \mid i \in \{0, 1, 2, \ldots \}\big\} \cup \{t\}$.
			All states have the same label except $s_{0}$.
			The states $s_{0}$ and $t$ are sink states, that is, $\tau(s_{0})(s_{0}) = \tau(t)(t) =  1$.
			Each $s_{i}$ where $i \in \{1, 2, \ldots\}$ transitions to $s_{i-1}$ with probability $\frac{1}{3}$ and $s_{i+1}$ with probability $\frac{2}{3}$.
			(b) The Markov chain $\C_{\M}^T$ induced by an arbitrary policy $T$ in which only the states reachable from $(s_1, t)$ are shown.
			The shown part of $\C_{\M}^T$ is the same for every policy $T$.% in which $(s_1, t)$ is reachable by $(s_i, t)$ for all $i \in \{1, 2, \ldots\}$.
		}
		\label{fig:infinite-LMC-counterexample}
	\end{figure}
	\begin{example} \label{example:drift-1}
		Consider the LMC $\M$ in \cref{fig:example-infinite-lmc}.
		Let $T$ be an arbitrary policy for $\M$.
		We have $T(s_i, t)(s_{i-1}, t) = \frac{1}{3}$ and $T(s_i, t)(s_{i+1}, t) = \frac{2}{3}$ for all $i \in \{1, 2, \ldots\}$.
		The Markov chain $\C_{\M}^T$ induced by $T$ is shown in \cref{fig:example-graph-lmc};
		we only show the states that are reachable from $(s_1, t)$.
		The shown part of $\C_{\M}^T$ is the same for every policy.
		
		We have $d(s_i,t) = \frac{1}{2^{i}}$ for all $i \in \{0, 1, 2, \ldots\}$.
		In the Markov chain $\C_{\M}^T$, all state pairs that $(s_1, t)$ can reach have distances greater than zero:
		for all $i \in \{1, 2, \ldots\}$ the pair $(s_1, t)$ can reach $(s_i,t)$ and we have $d(s_i,t) = \frac{1}{2^{i}} \gr 0$.
		\lipicsEnd
	\end{example}
	
	The following theorem follows from \cite[Theorem 6.1.7]{Tang2018} for LMCs with finite state space.
	The same proof, see \cref{proof:theorem:reach}, works for LMCs with infinite state space.
	\begin{theorem}
		\label{theorem:reach}
		Let $\M= <S, L, \tau, \ell>$ be an LMC.
		There is a policy $T \in \mathcal{T}$ such that we have
		\[
		d(s,t) \ = \ \R_\M^T((s,t),S_1^2) \ \le \ \R_\M^{T'}((s,t),S_1^2) \quad \text{for all $(s,t) \in S^2$ and all $T' \in \mathcal{T}$.}
		\]
		In short, $\displaystyle d = \min_{T \in \mathcal{T}} \R_\M^T(\cdot,S_1^2)$.
	\end{theorem}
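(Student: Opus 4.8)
The plan is to prove the two inequalities $d \le \R_\M^T(\cdot, S_1^2)$ for \emph{every} policy $T$, and $\R_\M^{T^*}(\cdot, S_1^2) \le d$ for one specially chosen policy $T^*$; together these give $d = \R_\M^{T^*}(\cdot, S_1^2) = \min_T \R_\M^T(\cdot, S_1^2)$. For both directions I exploit that $d$ is the least fixed point of the monotone operator $\Delta$ on the complete lattice $[0,1]^{S^2}$, so by Knaster--Tarski $d$ is also the least \emph{pre}-fixed point: $\Delta(e) \le e$ implies $d \le e$. Note the minimum over couplings defining $\Delta$ is attained because finite branching makes $\Omega(\tau(s),\tau(t))$ a compact polytope with a linear objective; the same finiteness keeps all one-step sums well defined even when $S$ is infinite.

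For the first inequality, fix any $T$ and write $r^T := \R_\M^T(\cdot, S_1^2)$; I will check $\Delta(r^T) \le r^T$ case by case. On $S_1^2$ both sides equal $1$ (labels differ, and an $S_1^2$-state is absorbing). On $S_?^2$ the state transitions via $T(s,t)$, so $r^T(s,t) = \sum_{u,v} T(s,t)(u,v)\, r^T(u,v) \ge \min_\omega \sum_{u,v} \omega(u,v)\, r^T(u,v) = \Delta(r^T)(s,t)$. The delicate case is $S_0^2$: here $r^T = 0$, and I need a coupling witnessing $\Delta(r^T) = 0$. Since $s \sim t$ and $\sim$ is a bisimulation, $\tau(s)$ and $\tau(t)$ give equal mass to every $\sim$-class, so there is a coupling supported entirely on $S_0^2$; against it $\sum \omega\, r^T = 0$. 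Hence $\Delta(r^T) \le r^T$, and $d \le r^T$ for all $T$.

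For the second inequality, define $T^*$ by choosing for each $(s,t) \in S_?^2$ a coupling attaining the minimum in $\Delta(d)(s,t)$; since $d$ is a fixed point, $d(s,t) = \sum_{u,v} T^*(s,t)(u,v)\, d(u,v)$ on $S_?^2$. Then $d$ is a nonnegative solution of the reachability equations for the chain $\C_\M^{T^*}$ with target $S_1^2$: it equals $1$ on $S_1^2$, satisfies the one-step equation on $S_?^2$ by the choice of $T^*$, and on the absorbing $S_0^2$-states the equation reduces to the trivial $d = d$. As $r^{T^*}$ is the least nonnegative solution of these equations (the standard characterisation of hitting probabilities, valid for countable chains), we get $r^{T^*} \le d$. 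Combined with the first part this yields $r^{T^*} = d$, and then $d = r^{T^*} \le r^{T'}$ for every $T'$ shows $T^*$ is optimal.

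The main obstacle is the $S_0^2$ case of the first inequality: I must produce, for each bisimilar pair, a coupling of the two successor distributions that stays inside $S_0^2$, which is exactly where the bisimulation property (equal mass on equivalence classes) is used. The only point requiring care for the infinite state space is invoking the least-solution characterisation of reachability probabilities, but this holds for countable chains and finite branching ensures all the sums involved are over finite supports; the lattice-theoretic argument for $d$ is insensitive to the cardinality of $S$.
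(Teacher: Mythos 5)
Your proof is correct and follows essentially the same route as the paper's: both directions are obtained by showing $\Delta(\R_\M^T(\cdot,S_1^2)) \sqsubseteq \R_\M^T(\cdot,S_1^2)$ for every policy (using a coupling supported on $\mathord{\sim}$ for the $S_0^2$ case) and then invoking Knaster--Tarski, while the reverse inequality uses the policy built from argmin couplings of $\Delta(d)$. The only cosmetic difference is that the paper first identifies $\R_\M^T(\cdot,S_1^2)$ with the least fixed point of an explicit operator $\Theta^T$ via Kleene iteration, whereas you cite the standard least-nonnegative-solution characterisation of hitting probabilities directly.
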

	
	The following corollary of \cref{theorem:reach} is similar to \cref{proposition:lmc-distance-graph-reachability2} but holds even for infinite-state LMCs.%
	\begin{corollary}\label{corollary:lmc-distance-graph-reachability}
		Let $\M= <S, L, \tau, \ell>$ be an LMC and $s, t \in S$.
		We have $d(s,t) \ls 1$ if and only if there exists a policy $T$ such that $\R_\M^T((s,t),S_1^2) \ls 1$.
		In particular, if there is a policy~$T$ with $\R_\M^T((s,t),S_0^2) \gr 0$ then $d(s,t) \ls 1$.
	\end{corollary}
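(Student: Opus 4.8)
The plan is to derive \cref{corollary:lmc-distance-graph-reachability} directly from \cref{theorem:reach}, which characterises the distance as $d = \min_{T \in \mathcal{T}} \R_\M^T(\cdot, S_1^2)$. The corollary has two parts: the main equivalence and the ``in particular'' consequence. I would handle the equivalence first by reasoning about the minimising policy supplied by \cref{theorem:reach}.

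For the forward direction, suppose $d(s,t) \ls 1$. By \cref{theorem:reach} there exists a policy~$T$ with $\R_\M^T((s,t),S_1^2) = d(s,t) \ls 1$, which is exactly the witness required. The reverse direction is where \cref{theorem:reach} does the real work: if some policy~$T'$ satisfies $\R_\M^{T'}((s,t),S_1^2) \ls 1$, then since $d(s,t) = \min_{T} \R_\M^T((s,t),S_1^2) \le \R_\M^{T'}((s,t),S_1^2) \ls 1$, we immediately get $d(s,t) \ls 1$. So both directions are short once \cref{theorem:reach} is invoked; the content of the corollary is really just unfolding the $\min$ characterisation.

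For the ``in particular'' statement, suppose there is a policy~$T$ with $\R_\M^T((s,t),S_0^2) \gr 0$. I would argue that from the state $(s,t)$ in the Markov chain $\C_\M^T$, reaching $S_0^2$ and reaching $S_1^2$ are mutually exclusive events at the level of terminal (absorbing) pairs, since by the definition of $\C_\M^T$ every pair in $S_0^2 \cup S_1^2$ is a self-loop (absorbing). Hence the reaching probabilities into the disjoint absorbing sets $S_0^2$ and $S_1^2$ satisfy $\R_\M^T((s,t),S_0^2) + \R_\M^T((s,t),S_1^2) \le 1$. Since $\R_\M^T((s,t),S_0^2) \gr 0$ strictly, we obtain $\R_\M^T((s,t),S_1^2) \ls 1$, and then the already-proved equivalence yields $d(s,t) \ls 1$.

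The main subtlety, and the step I would be most careful about, is the claim that the reaching probabilities into $S_0^2$ and $S_1^2$ sum to at most~$1$ in the infinite-state setting. This is where the counterexample in \cref{example:drift-1} is instructive: there the ``only if'' direction of \cref{proposition:lmc-distance-graph-reachability2} fails precisely because probability mass can ``drift off to infinity'' without ever being absorbed, so $\R_\M^T((s,t),S_0^2) + \R_\M^T((s,t),S_1^2)$ can be strictly less than~$1$. The crucial point is that I only need the inequality $\le 1$ (which holds because $S_0^2$ and $S_1^2$ are disjoint absorbing sets and reaching events into disjoint absorbing sets are disjoint), not equality; the possibility of mass escaping to infinity does not threaten the argument, it only means the bound need not be tight. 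With that inequality in hand the strict positivity of $\R_\M^T((s,t),S_0^2)$ forces $\R_\M^T((s,t),S_1^2) \ls 1$, and the corollary follows.
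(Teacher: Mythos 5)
Your proposal is correct and matches the paper's intended derivation: the paper states this as an immediate corollary of \cref{theorem:reach} without an explicit proof, and your unfolding of the $\min$ characterisation for the equivalence, together with the observation that $S_0^2$ and $S_1^2$ are disjoint absorbing sets (so the two reachability probabilities sum to at most~$1$, with no need for equality even when mass drifts to infinity), is exactly the right way to fill in the details.
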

	
	\Cref{corollary:lmc-distance-graph-reachability} falls short of an ``if and only if'' connection between distance less than one and bisimilarity.
	Indeed, as we have seen, \cref{proposition:lmc-distance-graph-reachability2} does not always hold in infinite-state LMCs.
	However, the key technical insight of this section is that a version of \cref{proposition:lmc-distance-graph-reachability2} holds for (finite-state) MDPs and general strategies.
	More precisely, the following proposition characterises the existence of a strategy such that the distance is less than one.
	\begin{proposition}\label{prop:mdp-distance-ls1}
		Let $\D = <S, \Act, L, \varphi, \ell>$ be an MDP, and let $s, t \in S$.
		There exists a strategy~$\alpha''$ with $d_{\D(\alpha'')}(s, t) \ls 1$ if and only if
		there are strategies $\alpha, \alpha'$, a policy~$T$ for the LMC~$\D(\alpha)$, two states $u, v \in S$ and two paths $\rho_1,\rho_2 \in \Paths(\D)$ with $u = \last(\rho_1)$ and $v = \last(\rho_2)$, such that $\R_{\D(\alpha)}^T((s,t),\{(\rho_1,\rho_2)\}) \gr 0$ and $u$ and $v$ are probabilistically bisimilar in the LMC~$\D(\alpha')$.
	\end{proposition}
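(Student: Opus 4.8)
The plan is to prove the two directions separately, with the ``if'' direction being routine and the ``only if'' direction carrying the main difficulty. For the \emph{``if''} direction, suppose we are given the strategies $\alpha,\alpha'$, the policy~$T$, the states $u,v$ and paths $\rho_1,\rho_2$ as stated. The idea is to stitch together a single strategy~$\alpha''$ that behaves like~$\alpha$ up to the point where the pair of paths $(\rho_1,\rho_2)$ is reached, and thereafter switches to (a relocated copy of)~$\alpha'$ from the states $u=\last(\rho_1)$ and $v=\last(\rho_2)$. Because $u\sim v$ in $\D(\alpha')$, the continuation has distance zero between the corresponding continuation-paths of $\rho_1$ and $\rho_2$. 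Since $\R_{\D(\alpha)}^T((s,t),\{(\rho_1,\rho_2)\})\gr 0$ and the policy~$T$ witnesses a coupling under which $(s,t)$ reaches a distance-zero pair with positive probability, \cref{corollary:lmc-distance-graph-reachability} (the ``in particular'' clause applied to the induced LMC~$\D(\alpha'')$) yields $d_{\D(\alpha'')}(s,t)\ls 1$. The only subtlety is to confirm that the spliced strategy~$\alpha''$ is a legitimate general strategy and that the reachability witnessed by~$T$ transfers to a genuine policy on $\D(\alpha'')$ landing in $S_0^2$.

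For the \emph{``only if''} direction, assume a strategy~$\alpha''$ with $d_{\D(\alpha'')}(s,t)\ls 1$. By \cref{theorem:reach} there is an optimal policy~$T$ on the (possibly infinite) LMC~$\D(\alpha'')$ with $\R_{\D(\alpha'')}^T((s,t),S_1^2)=d_{\D(\alpha'')}(s,t)\ls 1$, so the induced Markov chain $\C_{\D(\alpha'')}^T$ reaches $S^2\setminus S_1^2$ with positive probability from $(s,t)$. The first goal is to extract, from this infinite-state reachability, a pair of paths $(\rho_1,\rho_2)$ with $\last(\rho_1)=u$, $\last(\rho_2)=v$ that is reached with positive probability and from which bisimilarity can be established. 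Here I would use the fact that the underlying MDP~$\D$ is \emph{finite}: although $\D(\alpha'')$ has infinitely many states (paths), each such state projects to one of finitely many MDP states, so along any reachable trajectory in $\C_{\D(\alpha'')}^T$ the pair of MDP-state projections ranges over the finite set $S\times S$. The crux is a pigeonhole/König-type argument: since the coupled chain avoids $S_1^2$ with positive probability, there must be a positive-probability trajectory whose projected state-pairs stabilise or recur on a pair $(u,v)$ that is ``good'' in the sense of admitting a bisimilarity witness.

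The main obstacle — and this is exactly the gap that \cref{example:drift-1} shows must be bridged carefully — is that positive-probability reachability of $S^2\setminus S_1^2$ in the infinite chain does \emph{not} by itself guarantee reaching a distance-zero pair $S_0^2$; the drift example reaches only strictly-positive-distance pairs. To overcome this I would argue that if \emph{no} reachable pair $(u,v)$ admitted a strategy~$\alpha'$ making $u\sim v$, then one could propagate distance-one lower bounds backwards along the finitely many MDP state-pairs, using a fixed-point/monotonicity argument on $\Delta$ together with the finiteness of $S\times S$, to force $d_{\D(\alpha'')}(s,t)=1$, a contradiction. Concretely, I expect to define the set $G\subseteq S\times S$ of pairs $(u,v)$ for which some $\alpha'$ yields $u\sim v$ in $\D(\alpha')$, show $G$ is computable via the finite-MDP bisimilarity machinery of \cite{KT2022}, and prove that $d_{\D(\alpha'')}(s,t)\ls1$ forces the coupled chain to reach a path-pair projecting into~$G$ with positive probability. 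Establishing this last implication rigorously — turning a quantitative distance bound on an infinite chain into qualitative reachability of~$G$ in the finite MDP — is the heart of the proposition and the step I expect to demand the most care.
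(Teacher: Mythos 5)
Your ``if'' direction is essentially the paper's \cref{thm:mdp-distance-ls1-suf}: splice $\alpha'$ onto $\alpha$ at $(\rho_1,\rho_2)$ and invoke \cref{corollary:lmc-distance-graph-reachability}. That part is fine. You also correctly identify, via \cref{example:drift-1}, where the difficulty in the ``only if'' direction lies. But the bridge you propose for it has a genuine gap.

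You suggest that if no reachable pair $(u,v)$ admits a strategy making $u\sim v$, one could ``propagate distance-one lower bounds backwards along the finitely many MDP state-pairs, using a fixed-point/monotonicity argument on $\Delta$ together with the finiteness of $S\times S$.'' This does not go through as stated, for two reasons. First, the distance $d_{\D(\alpha'')}(\rho_1,\rho_2)$ of a reachable path-pair depends on the entire histories $\rho_1,\rho_2$ (the continuation of $\alpha''$ is history-dependent), not just on the projected pair $(\last(\rho_1),\last(\rho_2))\in S\times S$; distinct path-pairs with the same projection can have wildly different distances, so there is no well-defined finite fixed-point system on $S\times S$ over which to propagate bounds. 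Second, and more fundamentally, for your contradiction you need that $(u,v)\notin G$ implies $\inf_{\alpha}d_{\D(\alpha)}(u,v)\gr 0$ --- equivalently, that if the distance of $(u,v)$ can be made arbitrarily small by (varying, general) strategies then it can be made exactly zero. That implication is precisely the nontrivial core of the paper's argument: it is established by \cref{lemma:inf-LMC-game-step,lemma:MDP-game-step,lemma:MDP-game-step2} via repeated Bolzano--Weierstrass extractions of limit transition functions and limit optimal couplings, feeding into the attacker--defender game characterisation of bisimilarity from \cite[Proposition~3]{KT2022} (\cref{cor:limitequalisable-equalisable}). Your plan implicitly assumes this statement rather than proving it; finiteness of $S\times S$ alone does not yield it, because the infimum ranges over general strategies inducing infinite-state LMCs, where no compactness is available without the limit-coupling construction. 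A smaller omission: to get reachable pairs of arbitrarily small distance in the first place (rather than merely positive probability of avoiding $S_1^2$), the paper uses L\'evy's zero-one law (\cref{corollary:smaller-and-smaller}); your ``pigeonhole/K\"onig-type argument'' on projected state-pairs gives recurrence of a projection but not smallness of distances, so that step also needs the zero-one law, followed by the pigeonhole step on $S\times S$ as in \cref{thm:mdp-distance-ls1-nes}.
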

	The more difficult direction of the proof is the ``only if'' direction.
	It is based on L\'evy's zero-one law, several applications of the Bolzano-Weierstrass theorem, and a characterisation of probabilistic bisimilarity in MDPs in terms of an ``attacker-defender'' game defined \cite[Section~3.1]{KT2022}.
	
	The starting point of the proof of \cref{prop:mdp-distance-ls1} is the following statement, which follows from \cref{theorem:reach,corollary:lmc-distance-graph-reachability} using L\'evy's zero-one law.%
	\begin{corollary}\label{corollary:smaller-and-smaller}
		Let $\M= <S, L, \tau, \ell>$ be an LMC and $s, t \in S$ with $d(s,t) \ls 1$.
		There exists a policy $T$ such that for all $\varepsilon \gr 0$ there is $(u,v) \in S^2$ with $d(u,v) \le \varepsilon$ and $\R_\M^T((s,t),\{(u,v)\}) \gr 0$.
	\end{corollary}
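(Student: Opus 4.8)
The plan is to run the Markov chain $\C_\M^T$ for the optimal policy $T$ provided by \cref{theorem:reach} and to track how the distance evolves along a random trajectory, exactly as the hint suggests. First I would fix the policy $T \in \mathcal{T}$ with $d = \R_\M^T(\cdot, S_1^2)$, let $(X_n)_{n \ge 0}$ be the chain $\C_\M^T$ started in $X_0 = (s,t)$ with its natural filtration $(\mathcal{F}_n)_{n \ge 0}$, and let $A$ be the event that the trajectory eventually enters $S_1^2$. Since each $\{X_n \in S_1^2\}$ lies in $\mathcal{F}_n$, the event $A = \bigcup_n \{X_n \in S_1^2\}$ is measurable with respect to $\mathcal{F}_\infty = \sigma(\bigcup_n \mathcal{F}_n)$.

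The key observation is that $d(X_n) = \Pr(A \mid \mathcal{F}_n)$ almost surely. Indeed, by the Markov property the conditional probability of eventually reaching $S_1^2$ given $\mathcal{F}_n$ equals the reachability probability $\R_\M^T(X_n, S_1^2)$ from the current state, and by the choice of $T$ in \cref{theorem:reach} this reachability probability is exactly $d(X_n)$; this also holds when $X_n$ lies in the absorbing set $S_0^2$ (both sides $0$) or $S_1^2$ (both sides $1$). Equivalently, $d(X_n)$ is a bounded martingale, because $d$ is harmonic for the kernel of $\C_\M^T$ on $S^2_?$ via the first-step identity $d(u,v) = \sum_{(x,y)} T(u,v)(x,y)\,d(x,y)$, and is constant on the two absorbing classes.

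Next I would invoke L\'evy's zero-one law for the event $A \in \mathcal{F}_\infty$: it yields $\Pr(A \mid \mathcal{F}_n) \to \mathbf{1}_A$ almost surely, and hence $d(X_n) \to \mathbf{1}_A$ almost surely. By \cref{theorem:reach} we have $\Pr(A) = \R_\M^T((s,t),S_1^2) = d(s,t) \ls 1$, so the complementary event $A^c$ has probability $1 - d(s,t) \gr 0$. On $A^c$ the limit is $0$, that is, $d(X_n) \to 0$ almost surely on the positive-probability event $A^c$.

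Finally, fix $\varepsilon \gr 0$ and set $Z_\varepsilon = \{(u,v) \in S^2 \mid d(u,v) \le \varepsilon\}$. On $A^c$ the sequence $d(X_n)$ eventually drops to at most $\varepsilon$, so the trajectory visits $Z_\varepsilon$; hence $\R_\M^T((s,t),Z_\varepsilon) \ge \Pr(A^c) \gr 0$. Since $S$ is countable, $Z_\varepsilon$ is countable, and reaching $Z_\varepsilon$ means reaching one of its elements, so $\R_\M^T((s,t),Z_\varepsilon) \le \sum_{(u,v) \in Z_\varepsilon} \R_\M^T((s,t),\{(u,v)\})$; countable subadditivity then forces some single pair $(u,v) \in Z_\varepsilon$ to satisfy $\R_\M^T((s,t),\{(u,v)\}) \gr 0$, and this $(u,v)$ has $d(u,v) \le \varepsilon$. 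Crucially the same policy $T$ serves every $\varepsilon$, since $T$ was fixed before choosing $\varepsilon$. I expect the main obstacle to be rigorously justifying the identity $\Pr(A \mid \mathcal{F}_n) = d(X_n)$ — the point where the optimality of $T$ from \cref{theorem:reach} and the Markov property combine — together with the measurability bookkeeping needed to apply L\'evy's law; the remaining steps are routine.
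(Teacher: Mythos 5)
Your proof is correct and follows essentially the same route as the paper: both arguments apply L\'evy's zero-one law to the conditional probability of reaching $S_1^2$ along a run of $\C_\M^T$ and extract, on the positive-probability event of non-reaching, reachable pairs of arbitrarily small distance. The only cosmetic difference is that you fix the \emph{optimal} policy from \cref{theorem:reach} so that the martingale is $d(X_n)$ itself, whereas the paper takes any policy with $\R_\M^T((s,t),S_1^2) \ls 1$ from \cref{corollary:lmc-distance-graph-reachability}, tracks $\R_\M^T(X_n,S_1^2)$, and only converts to $d$ at the end via $d \le \R_\M^T(\cdot,S_1^2)$.
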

	\begin{proof}
		Let $\M= <S, L, \tau, \ell>$ be an LMC and $s, t \in S$ with $d(s,t) \ls 1$.
		By \cref{corollary:lmc-distance-graph-reachability} there exists a policy $T$ such that $\R_\M^T((s,t),S_1^2) \ls 1$.
		By L\'evy's zero-one law, the probability in~$\C_\M^T$ is one that a random run $(s_0,t_0) (s_1,t_1) \ldots$ started from $(s_0,t_0) = (s,t)$ satisfies one of the following conditions:
		\begin{enumerate}
			\item the sequence $\R_\M^T((s_0,t_0),S_1^2), \R_\M^T((s_1,t_1),S_1^2), \ldots$ converges to~$1$ and $S_1^2$ is reached;
			\item the sequence $\R_\M^T((s_0,t_0),S_1^2), \R_\M^T((s_1,t_1),S_1^2), \ldots$ converges to~$0$ and $S_1^2$ is not reached.
		\end{enumerate}
		Event~1 can be equivalently characterised by saying that $S_1^2$ is reached.
		Since $\R_\M^T((s,t),S_1^2) \ls 1$, Event~2 happens with a positive probability.
		It follows that in~$\C_\M^T$ there exists a run $(s_0,t_0) (s_1,t_1) \ldots$ started from $(s_0,t_0) = (s,t)$ such that $\R_\M^T((s_0,t_0),S_1^2), \R_\M^T((s_1,t_1),S_1^2), \ldots$ converges to~$0$.
		Let $\varepsilon \gr 0$.
		Then there exists $(u,v) \in S^2$ such that $\R_\M^T((u,v),S_1^2) \le \varepsilon$ and $\R_\M^T((s,t),\{(u,v)\}) \gr 0$.
		By \cref{theorem:reach} it follows that $d(u,v) \le \varepsilon$.
	\end{proof}
	\begin{example} \label{example:drift-2}
		Consider again \cref{example:drift-1}.
		We have $d(s_1,t) = \frac12$.
		\Cref{corollary:smaller-and-smaller} asserts that there is a policy~$T$ such that for all $\varepsilon \gr 0$, in~$\C_\M^T$ the pair $(s_1,t)$ can reach $(u,v) \in S^2$ with $d(u,v) \le \varepsilon$.
		Indeed, take an arbitrary policy~$T$.
		Given any $\varepsilon \gr 0$ choose $i$ with $\frac{1}{2^i} \le \varepsilon$.
		Then $(s_1,t)$ can reach $(s_i,t)$ and $d(s_i,t) = \frac{1}{2^i} \le \varepsilon$.
		\lipicsEnd
	\end{example}
	
	See \cref{proof:prop:mdp-distance-ls1} for the rest of the proof of \cref{prop:mdp-distance-ls1}.
	\Cref{prop:mdp-distance-ls1} is the key to proving the following result.
	
	\begin{theorem}\label{thm:distance-ls-one-exptime}
		The distance less than one problem is in EXPTIME.
	\end{theorem}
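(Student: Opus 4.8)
The plan is to turn \cref{prop:mdp-distance-ls1} into an algorithm. By that proposition, there is a strategy $\alpha''$ with $d_{\D(\alpha'')}(s,t) \ls 1$ if and only if, under some strategy $\alpha$ and some policy $T$ for $\D(\alpha)$, the pair $(s,t)$ reaches with positive probability a pair of paths $(\rho_1,\rho_2)$ whose last states $u = \last(\rho_1)$ and $v = \last(\rho_2)$ can be made probabilistically bisimilar by some (possibly different) strategy $\alpha'$. The crucial feature is that the two ingredients — the \emph{reachability} of a suitable state pair, and the \emph{bisimilability} of that pair — are decoupled, so I would decide each separately and combine the answers.

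First I would compute the set $B = \{(u,v) \in S^2 \suchthat \text{there is a strategy } \alpha' \text{ with } u \sim_{\D(\alpha')} v\}$ of bisimilable pairs. Deciding membership of a single pair in~$B$ is exactly the bisimilarity equivalence problem for general strategies, which is in EXPTIME by~\cite{KT2022}. Running that decision procedure for each of the at most $|S|^2$ pairs keeps the whole computation in EXPTIME, since EXPTIME is closed under polynomially many calls.

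Second, I would reduce the positive-probability reachability condition to reachability in a finite directed graph. Define $G$ on the vertex set $\{(u,v) \in S^2 \suchthat \ell(u) = \ell(v)\}$, with an edge from $(u,v)$ to $(u',v')$ whenever there are actions $\m_1 \in \Act(u)$ and $\m_2 \in \Act(v)$ with $\varphi(u,\m_1)(u') \gr 0$ and $\varphi(v,\m_2)(v') \gr 0$. I claim that $(s,t)$ reaches $B$ in~$G$ if and only if the reachability condition of \cref{prop:mdp-distance-ls1} holds. For the direction from a $G$-path to strategy and policy, I would follow the path, letting $\alpha$ play on the two sides actions $\m_1,\m_2$ with positive probability (possible because $\rho_1 \ne \rho_2$ are evaluated independently by $\alpha$) and letting $T$ choose couplings that place positive mass on the chosen successor pair; this realises the target $(\rho_1,\rho_2)$ with positive probability. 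Conversely, projecting a positive-probability finite run of $\C_{\D(\alpha)}^T$ onto last states yields a $G$-path, since a positive-probability transition out of an $S^2_?$ pair forces both marginal transition probabilities to be positive; here intermediate pairs that are already bisimilar in $\D(\alpha)$ lie in~$B$ (witnessed by $\alpha$ itself), and mismatched-label pairs are absorbing in $\C^T$, so a run reaching the target stays among matching-label vertices. Graph reachability is then solvable in polynomial time.

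The algorithm answers ``yes'' exactly when $(s,t)$ reaches $B$ in~$G$; correctness is immediate from \cref{prop:mdp-distance-ls1} together with the two reductions, and the running time is dominated by the computation of~$B$, giving membership in EXPTIME. The main obstacle I anticipate is not the assembly but the careful justification of the finite-graph reduction: one must verify that a single positive-probability witness run exists precisely when the combinatorial reachability does, and in particular exploit the decoupling of $\alpha$ (for reachability) from $\alpha'$ (for bisimilarity) in \cref{prop:mdp-distance-ls1} so that $B$ may legitimately be treated as a fixed target set computed independently of the reaching strategy.
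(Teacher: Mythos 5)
Your proposal is correct and follows essentially the same route as the paper: compute the set of bisimilable pairs via the EXPTIME algorithm of \cite{KT2022}, build the product-style support graph on state pairs, and decide graph reachability, with correctness resting on \cref{prop:mdp-distance-ls1}. The only differences are cosmetic (the paper makes $S^2_0 \cup S^2_1$ absorbing rather than trimming the vertex set, and leaves the graph-to-strategy/policy correspondence implicit where you spell it out), and your handling of the decoupling of the reaching strategy from the equalising strategy matches the paper's intent.
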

	\begin{proof}
		Let $<S, \Act, L, \varphi, \ell>$ be an MDP.
		Abusing the notation from the beginning of \cref{subsection:EXPTIME-ub}, let us define
		\[
		\begin{array}{rcl}
			S^2_0 & = & \{\, (s, t) \in S^2 \mid \exists\; \alpha' \text{ such that } s, t \text{ are probabilistically bisimilar in } \D(\alpha') \,\}\\
			S^2_1 & = & \{\, (s, t) \in S^2 \mid \ell(s) \not= \ell(t) \,\}\\
			S^2_? & = & S^2 \setminus (S^2_0 \cup S^2_1)\,.
		\end{array}
		\]
		By \cite[Theorem~7]{KT2022} the set $S^2_0$ can be computed in exponential time.
		Consider the elements of~$S^2$ as vertices of a directed graph with set of edges
		\begin{align*}
			E \ &:= \ \{(z,z) \mid z \in S^2_0 \cup S^2_1\} \ \cup \\
			&\left\{\big((s_1,s_2),(t_1,t_2)\big) \in S^2_? \times S^2 \mid \forall\; i \in \{1,2\}\ \exists\;\m_i \in \Act(s_i) : \support(\varphi(s_i,\m_i)) \ni t_i\right\}\,.
		\end{align*}
		After $S^2_0$ has been computed (in exponential time), the directed graph $G := (S^2, E)$ can be computed in polynomial time, and given two states $s, t \in S$, it can be checked in polynomial time if $S^2_0$ can be reached from~$(s,t)$ in~$G$.
		It follows from \cref{prop:mdp-distance-ls1} that this is the case if and only if there exists a strategy~$\alpha''$ with $d_{\D(\alpha'')}(s, t) \ls 1$.
	\end{proof}
	
	\subsection{EXPTIME-Hardness}\label{subsection:EXPTIME-hardness}
	\SK{Given an MDP and two (initial) states, the \emph{bisimilarity problem} asks whether there is a general strategy such that the two states are probabilistically bisimilar in the induced LMC.
		The bisimilarity problem was shown EXPTIME-complete in \cite[Theorem~7]{KT2022}.
		We show in \cref{proof:thm:distance-ls-one-exptimehard} that it can be reduced to the distance less than one problem.
		This gives us the following theorem.
	}
	\begin{theorem}\label{thm:distance-ls-one-exptimehard}
		The distance less than one problem is EXPTIME-hard.
	\end{theorem}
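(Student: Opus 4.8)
The plan is to reduce from the \emph{bisimilarity problem} for MDPs, which asks, given an MDP~$\D$ and two states $s,t$, whether some general strategy~$\alpha'$ makes $s$ and~$t$ probabilistically bisimilar in $\D(\alpha')$; this problem is EXPTIME-complete by \cite[Theorem~7]{KT2022}, so a polynomial-time many-one reduction from it to the distance less than one problem suffices. One direction of correctness is free: if a strategy makes $s \sim t$, then $d(s,t) = 0 \ls 1$ under that strategy, so every positive instance of bisimilarity yields a positive instance of distance less than one. All the work lies in the converse, because distance less than one is, a priori, strictly weaker than bisimilarity.

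The tool that lets me control the converse is \cref{prop:mdp-distance-ls1}, together with the graph-reachability reformulation used in the proof of \cref{thm:distance-ls-one-exptime}: a strategy achieving $d(s,t) \ls 1$ exists if and only if, in the pair graph, the vertex $(s,t)$ can reach some pair lying in $S^2_0$, i.e.\ some pair that can itself be made bisimilar by a strategy. Hence the distance less than one problem asks ``can $(s,t)$ \emph{reach} a pair in $S^2_0$?'', whereas the bisimilarity problem asks ``is $(s,t)$ \emph{itself} in $S^2_0$?''. The reduction must therefore turn an instance $(\D,s,t)$ into an instance $(\D',s',t')$ in which the only pairs of $S^2_0$ reachable from $(s',t')$ are ones whose membership in $S^2_0$ is equivalent to $s \sim t$ being achievable in~$\D$.

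Concretely, I would build $\D'$ so that $(s',t')$ feeds, through fresh states carrying a common fresh label, into a ``gated'' comparison of $s$ and~$t$, designed so that a pair in $S^2_0$ becomes reachable exactly when the defender wins the attacker--defender bisimilarity game of \cite[Section~3.1]{KT2022} from $(s,t)$ --- which is precisely the condition that $s$ and~$t$ can be made bisimilar. Every branch of the comparison that does \emph{not} correspond to a legal defender move should be directed into fresh states with pairwise distinct labels, which lie in $S^2_1$ and can never enter $S^2_0$; this is what should rule out ``escapes'' to unrelated bisimilar pairs. Correctness of the hard direction then follows by applying \cref{prop:mdp-distance-ls1} to~$\D'$: if $d_{\D'(\alpha'')}(s',t') \ls 1$ for some~$\alpha''$, then $(s',t')$ reaches a pair in $S^2_0$, and by construction this certifies a defender win, hence a strategy making $s \sim t$ in~$\D$.

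The step I expect to be the main obstacle is establishing exactly this tightness, i.e.\ the \emph{escape} problem. Since I cannot edit the internal transitions of~$\D$ without changing which pairs are bisimilar, the gadget must \emph{intercept} the comparison (encoding the game as fresh structure) rather than rewrite~$\D$, and I must verify that no coincidental bisimilarity arises among the fresh states, or between fresh and original states, that would let $(s',t')$ reach $S^2_0$ without $s$ and~$t$ being made bisimilar. Pinning the reachable part of $S^2_0$ down to exactly the designated comparison --- and proving that reaching it forces a defender win --- is the technical heart of the reduction.
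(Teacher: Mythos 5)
You have the right starting point --- reducing from the EXPTIME-complete bisimilarity problem of \cite[Theorem~7]{KT2022} --- and you correctly isolate the crux: by \cref{prop:mdp-distance-ls1}, distance less than one only certifies that the initial pair can \emph{reach} a pair that can be made bisimilar, so the reduction must ensure that every reachable such pair forces $s \sim t$ itself to be achievable. But the proposal stops exactly where the proof has to begin: no gadget is actually constructed, and you yourself flag the ``escape problem'' and the ``technical heart'' as unresolved. Moreover, the route you gesture at --- encoding the attacker--defender game of \cite[Section~3.1]{KT2022} as fresh MDP structure so that reaching $S^2_0$ coincides with a defender win --- is not obviously realizable: that game is an infinite-duration game whose defender moves are couplings and partitions of successor distributions, and there is no evident way to compile ``defender wins'' into reachability of a bisimilar pair without already solving the problem you are reducing from. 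Also, routing illegal branches into pairs of distinctly labelled fresh states does not by itself block escapes, since a single pair lying in $S^2_1$ does not cut off reachability paths running through other pairs.

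The paper's gadget is much simpler and avoids encoding the game at all. It takes two disjoint copies $\D_1, \D_2$ of $\D$; in copy $i$ every action of every state keeps only half its probability mass (each transition probability is scaled by $\frac{1}{2}$) and sends the remaining $\frac{1}{2}$ to a fresh $\$$-labelled state $\$_i$, which returns deterministically to the copy of the $i$-th initial state, i.e.\ $\$_1 \to (s_1,1)$ and $\$_2 \to (s_2,2)$. If some strategy achieves $d\big((s_1,1),(s_2,2)\big) \ls 1$, then by \cref{thm:mdp-distance-ls1-nes} the pair reaches states $u$ in $\D_1$ and $v$ in $\D_2$ that can be made bisimilar; any bisimulation relating $u$ and $v$ must match the $\frac{1}{2}$ mass each sends into the $\$$-labelled class, hence must relate $\$_1$ with $\$_2$, hence (by determinism of the return edges) must relate $(s_1,1)$ with $(s_2,2)$ --- and unravelling that strategy yields one making $s_1 \sim s_2$ in $\D$. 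This forced-restart mechanism is the missing idea; without something playing its role, your reduction is a statement of intent rather than a proof.
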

	Together with \cref{thm:distance-ls-one-exptime} we obtain:
	\begin{corollary}
		The distance less than one problem is EXPTIME-complete.
	\end{corollary}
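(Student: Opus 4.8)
The plan is to reduce from the bisimilarity problem, which is EXPTIME-complete by \cite[Theorem~7]{KT2022}: given an MDP and two states $s,t$, decide whether there is a general strategy making $s$ and $t$ probabilistically bisimilar in the induced LMC. Since probabilistic bisimilarity is exactly distance~$0$, a YES-instance of the bisimilarity problem immediately yields distance~$0 \ls 1$, so for any reasonable reduction the ``easy'' direction is that a witnessing strategy for bisimilarity also witnesses distance less than one. All the work lies in the converse: I must build an MDP~$\D'$ with designated states $s',t'$ so that whenever $s,t$ \emph{cannot} be made bisimilar in $\D$, no strategy can push $d(s',t')$ below~$1$ in $\D'$; equivalently, the reduction has to amplify ``not bisimilar'' into ``distance forced to be~$1$''.

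The tool for analysing $\D'$ is \cref{prop:mdp-distance-ls1}: there is a strategy with $d_{\D'(\alpha'')}(s',t') \ls 1$ if and only if, under some strategy~$\alpha$ and policy~$T$, the pair $(s',t')$ reaches with positive probability a pair of states that is itself bisimilar-makeable (that is, a pair in the set $S_0^2$ as redefined in the proof of \cref{thm:distance-ls-one-exptime}). Thus the gadget must be designed so that the \emph{only} bisimilar-makeable pairs reachable from $(s',t')$ are ones that certify bisimilarity of the original pair $(s,t)$. Concretely, I would route $s'$ and $t'$ into the original pair $(s,t)$ while sending complementary probability mass into fresh sink states carrying \emph{fresh, pairwise distinct} labels; the fresh labels guarantee that every coupled pair created along the complementary branch lies in $S_1^2$ and can therefore never contribute a bisimilar pair. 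In this way the reachability of $S_0^2$ from $(s',t')$ is funnelled entirely through the copy of $(s,t)$.

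The main obstacle is that, even after this funnelling, \cref{prop:mdp-distance-ls1} only certifies distance less than one by reaching \emph{some} bisimilar-makeable pair, whereas the bisimilarity problem asks whether $(s,t)$ \emph{itself} is bisimilar-makeable; as \cref{example:drift-1} and the failure of the ``only if'' direction of \cref{proposition:lmc-distance-graph-reachability2} show, these two notions genuinely differ, so a naive copy of $\D$ would let $(s,t)$ ``escape'' to a stray bisimilar pair without being bisimilar. To close this gap I would make the gadget loop so that the underlying attacker--defender game for MDP bisimilarity (\cite[Section~3.1]{KT2022}) must be won from $(s,t)$ in full: each round re-poses the bisimilarity challenge, so that reaching a matched (bisimilar) sink with positive probability is possible only if the defender has a strategy winning from $(s,t)$, i.e.\ only if $(s,t)$ is genuinely bisimilar-makeable. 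Verifying this equivalence, using the game characterisation together with \cref{prop:mdp-distance-ls1}, is the crux of the argument; once it is established, combining the reduction with \cref{thm:distance-ls-one-exptime} yields EXPTIME-completeness.
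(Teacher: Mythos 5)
Your proposal matches the paper's proof of \cref{thm:distance-ls-one-exptimehard}: the paper likewise reduces from the bisimilarity problem of \cite[Theorem~7]{KT2022} and implements exactly your looping idea, building two disjoint copies of the MDP in which every action diverts probability $\frac{1}{2}$ to a reset state (one per copy) that returns to $(s_1,1)$ resp.\ $(s_2,2)$, so that, via \cref{prop:mdp-distance-ls1}, any reachable pair can be made bisimilar only if the initial pair can. The one detail to adjust is your intermediate suggestion of \emph{pairwise distinct} fresh labels on the diverted branch: the two reset states must carry the \emph{same} fresh label so that they can be coupled with each other and with nothing else, which is precisely what forces the bisimilarity challenge to be re-posed from $(s_1,s_2)$ in every round.
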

	
	\section{Conclusion}\label{section:conclusion}
	Motivated by probabilistic noninterference, a security notion, we have settled the decidability and complexity of the most natural bisimilarity distance minimisation problems of MDPs under memoryless and general strategies.
	
	Specifically, we have proved that the distance minimisation problem for memoryless strategies is $\ETR$-complete (which implies, in particular, that it is NP-hard and in PSPACE).
	%our $\ETR$-completeness result shows that the complexity of the distance minimisation problem for memoryless strategies lies in PSPACE.
	% For the $\ETR$ upper bound we have provided a reduction to the existential theory of the reals.
	% For the $\ETR$ lower bound we have provided a reduction from the ...
	In contrast, we have shown that the distance minimisation problem for general strategies is undecidable, reducing from the emptiness problem for probabilistic automata.
	
	We have also shown that it is EXPTIME-complete to decide if there are general strategies to make the probabilistic bisimilarity distance less than one.
	This extends a result from~\cite{KT2022} that the bisimilarity equivalence problem under general strategies is EXPTIME-complete.
	The key technical link we need here is natural but nontrivial to establish under general strategies: if there are general strategies such that two states have distance less than one, these two states can reach another pair of states which can be made probabilistic bisimilar.
	
	\SK{
		Distance \emph{maximisation} problems also relate to probabilistic noninterference, but in terms of antagonistic schedulers wanting to maximise the information leakage.
		The decidability and complexity of several distance maximisation problems in MDPs is still open, including the distance equals one problem for general strategies.
	}
	
	\bibliography{paper}
	
	\newpage\appendix\label{section:appendix}
	\section{Missing Proofs}

	\subsection{Proof of \texorpdfstring{\cref{thm:memmin-upper}}{Theorem \ref{thm:memmin-upper}}} \label{proof:thm:memmin-upper}
	Let $\D = <S, \Act, L, \varphi, \ell>$ be an MDP.
	Let $s_1, s_2$ be two states of $\D$ and $\theta$ be a rational number.
	
	%We write $\bar{d}$ for the collection $(d_{s,t})_{s,t \in S}$.
	\iffalse
	A memoryless strategy $\alpha$ for $\D$ can be characterised by numbers $x_{s, \m} \in [0,1]$ where $s \in S$ and $\m \in \Act$ such that $x_{s,\m} = \alpha(s)(\m)$.
	%We write $\bar{x}$ for the collection $(x_{s,\m})_{s \in S, \m \in \Act}$.
	
	Given two states $s$ and $t$, a coupling $\omega \in \Dist(S \times S)$ of $\tau(s), \tau(t) \in \Dist(S)$, where $\tau$ is the transition function of $\D(\alpha)$, can be characterised by numbers $w_{s,t,u,v}$ where $s,t,u,v \in S$ such that $w_{s,t,u,v} = \omega(u, v)$.
	%We write $\bar{w}$ for the collection $(w_{s,t,u,v})_{s,t,u,v \in S}$.
	
	The probabilistic bisimilarity distance can be characterised by numbers $d_{s, t} \in [0, 1]$ where $s, t \in S$.
	%We write $\bar{d}$ for the collection $(d_{s,t})_{s,t \in S}$.
	
	\fi
	
	We use numbers $x_{s, \m} \in [0,1]$ where $s \in S$ and $\m \in \Act(s)$ to characterise a memoryless strategy $\alpha$ for $\D$, that is, $\alpha(s)(\m) = x_{s, \m}$.
	Let the LMC induced by $\alpha$ be $\D(\alpha) = <S, L, \tau, \ell>$, where $\tau(s)(t) = \sum_{\m \in \Act(s)}\alpha(s)(\m) \cdot \varphi(s, \m)(t)$ for all $s, t \in S$.
	Numbers $w_{s,t,u,v}$ where $s,t,u,v \in S$ and $\ell(s) = \ell(t)$ characterise a coupling $\omega \in \Dist(S \times S)$ of $\tau(s)$ and $\tau(t)$.
	Numbers $d_{s, t} \in [0, 1]$ where $s, t \in S$ characterise a pseudometric.
	
	To decide if there exists a memoryless strategy such that $d(s_1, s_2) \ls \theta$, we check the validity of the following closed formula in the existential theory of the reals:
	%$\exists \bar{x}$, $\bar{w}$, $\bar{d}$ such that
	\ $\exists x_{s, \m} \in [0,1]$ for all $s \in S$ and $\m \in \Act(s)$, $w_{s,t,u,v} \in [0,1]$ for all $s,t \in S$ such that $\ell(s) = \ell(t)$ and all $u, v \in S$ and $d_{s, t} \in [0,1]$ for all $s,t \in S$ such that
	\begin{enumerate}
		\item[-]
		$\sum_{\m \in \Act(s)}x_{s, \m} = 1$ for all $s \in S$;
		\item[-]
		$\sum_{v \in S}w_{s, t, u, v} = \tau(s)(u)$ for all $s, t \in S$ such that $\ell(s) = \ell(t)$ and all $u \in S$; %for all $s,t,u \in S$;
		\item[-]
		$\sum_{u \in S}w_{s, t, u, v} = \tau(t)(v)$
		for all $s, t \in S$ such that $\ell(s) = \ell(t)$ and all $v \in S$;
		%for all $s,t,v \in S$;
		\item[-]
		$\sum_{u,v \in S}w_{s, t, u, v} d_{u,v}  \QT{=} d_{s, t}$ for all $s,t \in S$ such that $\ell(s) = \ell(t)$;
		\item[-]
		$d_{s,t} = 1$ for all $s,t \in S$ such that $\ell(s) \neq \ell(t)$;
		\item[-]
		$d_{s_1,s_2} \ls \theta$.
	\end{enumerate}
	We show in the following that this formula is valid if and only if the memoryless distance minimisation problem has answer ``yes''. \qedhere
	%In the first line that $(x_{s, \m})_{s \in S, \m \in \Act}$ encodes a memoryless (possibly randomised) strategy.
	%In the second and the third line, given $s, t \in S$, we have $(w_{s, t, x, y})_{x,y \in S}$ encodes a coupling with marginals $\tau(s)$ and $\tau(t)$.
	
	%Consider the closed formula in the existential theory of the reals from the main body.
	
	%We show that this formula is valid if and only if the memoryless distance minimisation problem has answer ``yes''.
	
	\begin{itemize}
		\item ``$\implies$'':
		Assume that the formula is valid.
		We define a memoryless strategy $\alpha$ for $\D$ by setting $\alpha(s)(\m) = x_{s, \m}$ for all $s \in S$ and $\m \in \Act(s)$.
		Let the LMC induced by $\alpha$ be $\D(\alpha) = <S, L, \tau, \ell>$.
		
		We define a coupling $\omega_{s,t}$ by setting $\omega_{s,t}(u, v) = w_{s,t,u,v}$ for all $\ell(s)=\ell(t)$ and $u,v \in S$.
		It is easy to verify that $\omega_{s,t}$ is a coupling for the marginals $\tau(s)$ and $\tau(t)$.
		
		We define $d' \in [0,1]^{S \times S}$ by setting $d'(s, t) =d_{s, t}$ for all $s, t \in S$.
		We show that $d'$ is a pre-fixed point of $\Delta$, that is, $\Delta(d') \sqsubseteq d'$.
		Let $s$ and $t$ be two arbitrary states.
		If $\ell(s) = \ell(t)$, we have $\Delta(d')(s, t) = \min_{\omega \in \Omega(\tau(s),\tau(t))}\sum_{u,v \in S}\omega(u, v) d'(u, v) \le \sum_{u,v \in S}\omega_{s, t}(u, v) d_{u,v} = \sum_{u,v \in S}w_{s, t, u, v} d_{u,v}  \QT{=} d_{s, t} = d'(s, t)$.
		If $\ell(s) \neq \ell(t)$, we have $\Delta(d')(s,t) = 1 =  d_{s, t} = d'(s,t)$.
		
		By the Knaster-Tarski's fixed point theorem \cite[Theorem~1]{Tarski1955}, the probabilistic bisimilarity distance $d$ is the least pre-fixed point of $\Delta$.
		Thus, we have $d \sqsubseteq d'$ and $d(s_1, s_2) \le d'(s_1,s_2) = d_{s_1,s_2} \ls \theta$.
		
		\item ``$\impliedby$'':
		Assume there is a memoryless strategy $\alpha : S \to \Dist(\Act)$ such that $d(s_1,s_2) \ls \theta$ in the induced LMC $\D(\alpha) = <S, L, \tau, \ell>$.
		
		By definition of probabilistic bisimilarity distance, we have $\Delta(d) = d$, since $d$ is the least fixed point of the function $\Delta$.
		
		Let $x_{s, \m} = \alpha(s)(\m)$ for all $s \in S$ and $\m \in \Act$ and $d_{s, t} = d(s, t)$ for all $s, t \in S$.
		For all $s, t \in S$ such that $\ell(s) = \ell(t)$, let $\omega_{s,t} \in \arg\min_{\omega \in \Omega(\tau(s), \tau(t))} \sum_{u, v \in S} \omega(u, v) \; d(u, v)$ and $w_{s,t,u,v} = \omega_{s,t}(u, v)$ for all $u, v \in S$.
		
		Thus, the assignments of $(x_{s, \m})_{s \in S, \m \in \Act(s)}$, $(w_{s,t,u,v})_{\ell(s)=\ell(t) \text{ and } u,v \in S}$ and $(d_{s,t})_{s,t \in S}$ witness the validity of the formula as the following items hold:
		\begin{enumerate}
			\item[-]
			$\sum_{\m \in \Act(s)}x_{s, \m} = \sum_{\m \in \Act(s)}\alpha(s)(\m) = 1$ for all $s \in S$;
			\item[-]
			$\sum_{v \in S}w_{s, t, u, v} = \sum_{v \in S}\omega_{s, t}(u, v) = \tau(s)(u)$ for all $\ell(s) = \ell(t)$ and all $u \in S$, since $\omega_{s, t}$ has left marginal $\tau(s)(u)$;
			\item[-]
			$\sum_{u \in S}w_{s, t, u, v} = \sum_{u \in S}\omega_{s, t}(u, v) = \tau(t)(v)$
			for all $s, t \in S$ such that $\ell(s) = \ell(t)$ and all $v \in S$, since $\omega_{s, t}$ has right marginal $\tau(t)(v)$;
			\item[-]
			for all $s,t \in S$ such that $\ell(s) = \ell(t)$, we have
			$\sum_{u,v \in S}w_{s, t, u, v} d_{u,v} = \sum_{u,v \in S}\omega_{s, t}(u, v) d(u,v) = \min_{\omega \in \Omega(\tau(s),\tau(t))}\sum_{u,v \in S}\omega(u, v) d(u,v) = \Delta(d)(s,t) = d(s, t) = d_{s, t}$;
			\item[-]
			$d_{s,t} = d(s, t) = 1$ for all $s,t \in S$ such that $\ell(s) \neq \ell(t)$;
			\item[-]
			$d_{s_1,s_2} =  d(s_1, s_2) \ls \theta$. \qedhere
		\end{enumerate}
	\end{itemize}
	
	\subsection{Proof of \texorpdfstring{\cref{lem:general-distance-expression}}{Lemma \ref{lem:general-distance-expression}}} \label{proof:lem:general-distance-expression}
	%\begin{proof}
	Let $\alpha$ be a general strategy.
	\begin{itemize}
		\item ``$\le$'':
		Let the LMC induced by $\alpha$ be $\D(\alpha) = <\mathcal{P}, L, \tau, \ell'>$.
		Let $\rho_1, \rho_2 \in \mathcal{P}$ be two states in $\D(\alpha)$.
		Since the second MDP is an LMC, we simply use the end state of a path that starts with $s_2$.
		That is, we write $\rho(|\rho|)$ for a state $\rho \in \mathcal{P}$ in $\D(\alpha)$ that starts with $s_2$.
		We define $d \in [0,1]^{\mathcal{P} \times \mathcal{P}}$ and distinguish the following cases:
		\begin{itemize}
			\item
			If $\rho_1 = \rho_2$,
			we define $d(\rho_1, \rho_2) = 0$.
			\item
			If $\rho_1$ ends with $x$ (resp. $y$) and $\rho_2$ ends with $x'$ (resp. $y'$),
			we define $d(\rho_1, \rho_2) = d(\rho_2, \rho_1) = 0$.
			\item
			Assume that $\rho_1= s_1v$ is a state that starts with $s_1$ and does not visit the $\$$ state. We have $v \in L^{*}$.
			%Let $n$ be the length of $v$.
			Assume that $\rho_2$ is a state that starts with $s_2$ and has $av$ as its sequence of labels.
			%If $v=\varepsilon$, we have $\rho_2 = (a, q_0)$.
			Let the end state of $\rho_2$ be $(\sigma, q)$.
			We define $d(\rho_1, \rho_2) =  d\Big(s_1v, \big(\sigma, q\big)\Big) = \sum_{w \in L^*} \frac{1}{3^{|w| + 1}} \Big( \big(1-\Pr_{\A_{q}}(w)\big) \alpha(s_1vw\$)(\m_y) + \Pr_{\A_q}(w) \alpha(s_1 v w \$)(\m_x) \Big)$.
			We also define $d(\rho_2, \rho_1) = d(\rho_1, \rho_2)$.
			\item
			Assume that $\rho_1= s_1v\$$ is a state that starts with $s_1$ and ends with the $\$$ state. We have  $v \in L^{*}$.
			Assume that $\rho_2$ is a state that starts with $s_2$ and has $av\$$ as its sequence of labels.
			If the last state of $\rho_2$ is $\$_x$, define $d(\rho_1, \rho_2) =  d(s_1v\$, \$_x) = \alpha(s_1 v \$)(\m_y)$.
			Otherwise, if the last state of $\rho_2$ is $\$_y$, define $d(\rho_1, \rho_2) =  d(s_1v\$, \$_y) =  \alpha(s_1 v \$)(\m_x)$.
			\item
			For all the other cases, we define $d(\rho_1, \rho_2) = 1$.
		\end{itemize}
		
		Next, we show $d$ is a pre-fixed point of $\Delta$.
		By the Knaster-Tarski's fixed point theorem \cite[Theorem~1]{Tarski1955}, the probabilistic bisimilarity distance $d_\alpha$ is the least pre-fixed point of $\Delta$.
		Thus, we have $d_\alpha \sqsubseteq d$ and $d_\alpha(s_1, s_2) \le d(s_1,s_2)  = d\Big(a,\big(a, q_0\big)\Big) = \sum_{w \in L^*} \frac{1}{3^{|w| + 1}} \Big( \big(1-\Pr_{\A}(w)\big) \alpha(s_1 w\$)(\m_y) + \Pr_{\A}(w) \alpha(s_1 w \$)(\m_x) \Big)$.
		
		Let $\rho_1, \rho_2 \in \mathcal{P}$ be two states in the LMC $\D(\alpha)$. We have:
		\begin{itemize}
			\item
			If $\rho_1 = \rho_2$, we have
			\begin{align*}
				&\Delta(d)(\rho_1, \rho_2) \\
				&= \ \min_{\omega \in \Omega(\tau(\rho_1), \tau(\rho_2))}\sum_{\rho_1', \rho_2' \in \mathcal{P}} \omega(\rho_1', \rho_2') d(\rho_1', \rho_2') \\
				&\le \ \sum_{\rho_1', \rho_2' \in \mathcal{P}} \omega'(\rho_1', \rho_2') d(\rho_1', \rho_2') \\
				&\commenteq{Define $\omega' \in \Omega(\tau(\rho_1), \tau(\rho_2))$ by setting $\omega'(\rho_1', \rho_1') = \tau(\rho_1)(\rho_1')$ for all $\rho_1' \in \mathcal{P}$}\\
				&= \ 0 \commenteq{$d(\rho_1', \rho_2') = 0$ for all $\rho_1' = \rho_2'$}\\
				&= \ d(\rho_1, \rho_2)
			\end{align*}
			\item
			If $\rho_1$ ends with $x$ and $\rho_2$ ends with $x'$,
			we have
			\begin{align*}
				&\Delta(d)(\rho_1, \rho_2) \\
				&= \ \min_{\omega \in \Omega(\tau(\rho_1), \tau(\rho_2))}\sum_{\rho_1', \rho_2' \in \mathcal{P}} \omega(\rho_1', \rho_2') d(\rho_1', \rho_2') \\
				&= \ d(\rho_1x, \rho_2x') \\
				&= \ 0 \commenteq{Definition of $d$}\\
				&= \  d(\rho_1, \rho_2)
			\end{align*}
			Similarly, we have $\Delta(d)(\rho_2, \rho_1) = 0 = d(\rho_2, \rho_1)$.
			The case for $\rho_1$ ending with $y$ and $\rho_2$ ending with $y'$ is similar, that is,
			$\Delta(d)(\rho_1, \rho_2) = 0 = d(\rho_1, \rho_2)$ and $\Delta(d)(\rho_2, \rho_1) = 0 = d(\rho_2, \rho_1)$.
			\item
			Assume that $\rho_1= s_1v$ is a state that starts with $s_1$ and does not visit the $\$$ state. We have $v \in L^{*}$.
			%Let $n$ be the length of $v$.
			Assume that $\rho_2$ is a path that starts with $s_2$ and has $av$ as its sequence of labels.
			Let the end state of $\rho_2$ be $\big(\sigma, q\big)$.
			
			If $q \not\in F$, we have
			\begin{align*}
				&\Delta(d)(\rho_1, \rho_2) \\
				&= \ \Delta(d)\Big(s_1v, \big(\sigma, q\big)\Big) \\
				&= \ \min_{\omega \in \Omega(\tau(s_1v), \tau((\sigma, q)))}\sum_{\rho_1', \rho_2' \in \mathcal{P}} \omega(\rho_1', \rho_2') d(\rho_1', \rho_2') \\
				&\le \ \sum_{\sigma' \in L}\sum_{q' \in Q}\omega'(s_1v\sigma', (\sigma', q')) d(s_1v\sigma', (\sigma', q')) + \omega'(s_1v\$, \$_x) d(s_1v\$, \$_x)\\
				&= \ \sum_{\sigma' \in L}\sum_{q' \in Q}\frac{1}{3}\delta(q, \sigma')(q')d(s_1v\sigma', (\sigma', q')) + \frac{1}{3}d(s_1v\$, \$_x)\\
				&\commenteq{Define $\omega' \in \Omega(\tau(\rho_1), \tau(\rho_2))$ by setting }\\
				&\commenteq{$\omega'(sv\sigma', (\sigma',q'))
					%= \varphi\big((\sigma, q), \m\big)\big((\sigma', q')\big)
					= \frac{1}{3}\delta(q, \sigma')(q')$ for all $\sigma' \in L$ and $q' \in Q$, and}\\
				&\commenteq{$\omega'(sv\$, \$_x) = \frac{1}{3}$}\\
				&= \ \sum_{\sigma' \in L}\sum_{q' \in Q}\frac{1}{3}\delta(q, \sigma')(q')
				\sum_{w \in L^*} \frac{1}{3^{|w| + 1}} \big( (1-\textstyle\Pr_{\A_{q'}}(w)) \alpha(s_1v\sigma' w\$)(\m_y) \ + \\
				&\phantom{=} \ \textstyle\Pr_{\A_{q'}}(w) \alpha(s_1 v\sigma' w \$)(\m_x) \big) \ + \\
				&\phantom{=} \ \frac{1}{3}\alpha(s_1v\$)(\m_y)\\
				&= \ \sum_{w \in L^{+}}\frac{1}{3^{|w| + 1}}\big( (1-\textstyle\Pr_{\A_{q}}(w)) \alpha(s_1vw\$)(\m_y) \ + \\
				&\phantom{=} \ \textstyle\Pr_{\A_{q}}(w) \alpha(s_1 v w \$)(\m_x) \big) \ + \\
				&\phantom{=} \ \textstyle\frac{1}{3^{|\varepsilon| + 1}} \big( (1-\Pr_{\A_{q}}(\varepsilon)) \alpha(s_1v\$)(\m_y) + \Pr_{\A_{q}}(\varepsilon) \alpha(s_1 v \$)(\m_x) \big) \\
				&\commenteq{$|\varepsilon| = 0$ and $\Pr_{\A_{q}}(\varepsilon) = 0$ as $q \not\in F$}\\
				&= \ \sum_{w \in L^{*}}\frac{1}{3^{|w| + 1}}\big( (1-\textstyle\Pr_{\A_{q}}(w)) \alpha(s_1vw\$)(\m_y) + \Pr_{\A_{q}}(w) \alpha(s_1 v w \$)(\m_x) \big) \\
				&= \ d\big(s_1v, (\sigma, q)\big) \commenteq{Definition of $d$}\\
				&= \ d(\rho_1, \rho_2)
			\end{align*}
			We also have $\Delta(d)(\rho_2, \rho_1) = \Delta(d)(\rho_1, \rho_2) \le d(\rho_1, \rho_2) = d(\rho_2, \rho_1)$.
			Similarly, if $q \in F$, we have
			$\Delta(d)(\rho_2, \rho_1) = \Delta(d)(\rho_1, \rho_2) \le d(\rho_1, \rho_2) = d(\rho_2, \rho_1)$.
			\item
			Assume $\rho_1= s_1v\$$ is a path that starts with $s_1$ and ends with the $\$$ state. We have $v \in L^{*}$.
			Assume $\rho_2$ is a path that starts with $s_2$ and has $av\$$ as its sequence of labels.
			If the last state of $\rho_2$ is $\$_x$,
			we have
			\begin{align*}
				&\Delta(d)(\rho_1, \rho_2) \\
				&= \ \Delta(d)(s_1v\$, \$_x) \\
				&= \ \min_{\omega \in \Omega(\tau(s_1v\$), \tau(\$_x))}\sum_{\rho_1', \rho_2' \in \mathcal{P}} \omega(\rho_1', \rho_2') d(\rho_1', \rho_2') \\
				&= \ \alpha(s_1v\$)(\m_x)d(s_1v\$x, x') + \alpha(s_1v\$)(\m_y)d(s_1v\$y, x') \\
				&= \ \alpha(s_1v\$)(\m_y) \commenteq{$d(s_1v\$x, x') = 0$ and $d(s_1v\$y, x') = 1$} \\
				&= \ d(s_1v\$, \$_x) \commenteq{Definition of $d$} \\
				&= \ d(\rho_1, \rho_2).
			\end{align*}
			We also have $\Delta(d)(\rho_2, \rho_1) = \Delta(d)(\rho_1, \rho_2) = d(\rho_1, \rho_2) = d(\rho_2, \rho_1)$.
			Similarly if the last state of $\rho_2$ is $\$_y$, we also have $\Delta(d)(\rho_2, \rho_1) = \Delta(d)(\rho_1, \rho_2) = d(\rho_1, \rho_2) = d(\rho_2, \rho_1)$.
			\item
			For all the other cases, we have $\Delta(d)(\rho_1, \rho_2) \le d(\rho_1, \rho_2) = 1$.
		\end{itemize}
		Thus, $d$ is a pre-fixed point of $\Delta$.
		%It follows that $d_{\alpha}(s_1, s_2) \le d(s_1, s_2)$.
		
		\item ``$\ge$'':
		We denote by $L^{\le n}$ the set of words of length at most $n$.
		To prove $d_\alpha(s_1, s_2)
		\ge \sum_{w \in L^*} \frac{1}{3^{|w| + 1}} \Big( \big(1-\textstyle\Pr_{\A}(w)\big) \alpha(s_1 w
		\$)(\m_y) + \Pr_{\A}(w) \alpha(s_1 w \$)(\m_x) \Big)$, it suffices to prove  $d_\alpha(s_1, s_2)
		\ge \sum_{w \in L^{\le n}} \frac{1}{3^{|w| + 1}} \big( (1-\textstyle\Pr_{\A}(w)) \alpha(s_1 w
		\$)(\m_y) + \Pr_{\A}(w) \alpha(s_1 w \$)(\m_x) \big)$ for all $n \in \nat$.
		
		Let $n \in \nat$. We have
		\begin{align*}
			&d_{\alpha}(s_1, s_2) \\
			%&= \ d_{\alpha}\big(a, (a, q_0)\big) \\
			&= \ \frac{1}{3}\sum_{\sigma \in L} \sum_{q \in Q} \delta(q_0, \sigma)(q) d_{\alpha}\big(s_1\sigma, (\sigma, q) \big) + \frac{1}{3}d_{\alpha}(s_1\$, \$_x) \commenteq{$q_0 \not\in F$}\\
			&= \ \frac{1}{3}\sum_{\sigma \in L} \sum_{q \in Q} \delta(q_0, \sigma)(q) d_{\alpha}\big(s_1\sigma, (\sigma, q) \big) \ + \\
			&\phantom{=} \ \frac{1}{3}\alpha(s_1\$)(\m_x)d_{\alpha}(s_1\$x, x') + \frac{1}{3}\alpha(s_1\$)(\m_y)d_{\alpha}(s_1\$y, x')\\
			&= \ \frac{1}{3}\sum_{\sigma \in L} \sum_{q \in Q} \delta(q_0, \sigma)(q) d_{\alpha}\big(s_1\sigma, (\sigma, q) \big) \ + \\
			&\phantom{=} \ \frac{1}{3^{|\varepsilon| + 1}} \big( (1-\textstyle\Pr_{\A}(\varepsilon)) \alpha(s_1\$)(\m_y) + \Pr_{\A}(\varepsilon) \alpha(s_1 \$)(\m_x) \big)\\
			&\phantom{=} \commenteq{$|\varepsilon| = 0$; $\Pr_{\A}(\varepsilon) = 0$ since $q_0 \not\in F$; $d_{\alpha}(s_1\$x, x') = 0$ and $d_{\alpha}(s_1\$y, x') = 1$}\\
			&= \ \frac{1}{3}\sum_{\sigma \in L} \sum_{q \in Q} \delta(q_0, \sigma)(q)
			\Big( \frac{1}{3}\sum_{\sigma' \in L} \sum_{q' \in Q} \delta(q, \sigma')(q') d_{\alpha}\big(s_1\sigma\sigma', (\sigma', q') \big) \Big) \ + \\
			&\phantom{=} \  \frac{1}{3} \sum_{\sigma \in L} \sum_{q \not\in F} \delta(q_0, \sigma)(q) \frac{1}{3}d_{\alpha}(s_1\sigma\$, \$_x) + \frac{1}{3} \sum_{\sigma \in L} \sum_{q \in F} \delta(q_0, \sigma)(q) \frac{1}{3} d_{\alpha}(s_1\sigma\$, \$_y) \ +\\
			&\phantom{=} \ \frac{1}{3^{|\varepsilon| + 1}} \Big( \big(1-\textstyle\Pr_{\A}(\varepsilon)\big) \alpha(s_1\$)(\m_y) + \Pr_{\A}(\varepsilon) \alpha(s_1 \$)(\m_x) \Big)\\
			&= \ \frac{1}{3^{2}}\sum_{w \in L^{2}} \sum_{q \in Q} \delta(q_0, w)(q) d_{\alpha}\big(s_1w, (w(2), q) \big) \ + \\
			&\phantom{=} \  \frac{1}{3^{2 }}\sum_{\sigma \in L}  \sum_{q \not\in F} \delta(q_0, \sigma)(q) \big( \alpha(s_1\sigma\$)(\m_x) d_{\alpha}(s_1\sigma\$x, x') + \alpha(s_1\sigma\$)(\m_y) d_{\alpha}(s_1\sigma\$y, x') \big) \ + \\
			&\phantom{=} \ \frac{1}{3^{2 }}\sum_{\sigma \in L}   \sum_{q \in F} \delta(q_0, \sigma)(q) \big( \alpha(s_1\sigma\$)(\m_x) d_{\alpha}(s_1\sigma\$x, y') + \alpha(s_1\sigma\$)(\m_y) d_{\alpha}(s_1\sigma\$y, y')\big) \ +\\
			&\phantom{=} \ \frac{1}{3^{|\varepsilon| + 1}} \Big( \big(1-\textstyle\Pr_{\A}(\varepsilon)\big) \alpha(s_1\$)(\m_y) + \Pr_{\A}(\varepsilon) \alpha(s_1 \$)(\m_x) \Big)\\
			&= \ \frac{1}{3^{2}}\sum_{w \in L^{2}} \sum_{q \in Q} \delta(q_0, w)(q) d_{\alpha}\big(s_1w, (w(2), q) \big) \ + \\
			&\phantom{=} \  \frac{1}{3^{2 }}\sum_{\sigma \in L}  \sum_{q \not\in F} \delta(q_0, \sigma)(q) \alpha(s_1\sigma\$)(\m_y) +  \frac{1}{3^{2 }}\sum_{\sigma \in L}   \sum_{q \in F} \delta(q_0, \sigma)(q) \alpha(s_1\sigma\$)(\m_x)  \ +\\
			&\phantom{=} \ \frac{1}{3^{|\varepsilon| + 1}} \Big( \big(1-\textstyle\Pr_{\A}(\varepsilon)\big) \alpha(s_1\$)(\m_y) + \Pr_{\A}(\varepsilon) \alpha(s_1 \$)(\m_x) \Big)\\
			&\phantom{=} \commenteq{$d_{\alpha}(s_1\sigma\$x, x') = d_{\alpha}(s_1\sigma\$y, y') = 0$ and $d_{\alpha}(s_1\sigma\$y, x') = d_{\alpha}(s_1\sigma\$x, y') = 1$}\\
			&= \ \frac{1}{3^{2}}\sum_{w \in L^{2}} \sum_{q \in Q} \delta(q_0, w)(q) d_{\alpha}\big(s_1w, (w(2), q) \big) \ + \\
			&\phantom{=} \  \sum_{\sigma \in L}\frac{1}{3^{|\sigma| + 1}} \Big( \big(1-\textstyle\Pr_{\A}(\sigma)\big) \alpha(s_1\sigma\$)(\m_y) + \Pr_{\A}(\sigma) \alpha(s_1\sigma\$)(\m_x) \Big) \ +\\
			&\phantom{=} \ \frac{1}{3^{|\varepsilon| + 1}} \big( (1-\textstyle\Pr_{\A}(\varepsilon)) \alpha(s_1\$)(\m_y) + \Pr_{\A}(\varepsilon) \alpha(s_1 \$)(\m_x) \big)\\
			&= \ \frac{1}{3^{2}}\sum_{w \in L^{2}} \sum_{q \in Q} \delta(q_0, w)(q) d_{\alpha}\big(s_1w, (w(2), q) \big) \ + \\
			&\phantom{=} \  \sum_{w \in L^{\le 1}}\frac{1}{3^{|w| + 1}} \Big( \big(1-\textstyle\Pr_{\A}(w)\big) \alpha(s_1w\$)(\m_y) + \Pr_{\A}(w) \alpha(s_1w\$)(\m_x) \Big) \\
			&= \ \cdots \\
			&= \ \frac{1}{3^{n + 1}}\sum_{w \in L^{n + 1}} \sum_{q \in Q} \delta(q_0, w)(q) d_{\alpha}\big(s_1w, (w(n + 1), q) \big) \ + \\
			&\phantom{=} \ \sum_{w \in L^{\le {n}}}\frac{1}{3^{|w|+1}} \Big( \big(1-\textstyle\Pr_{\A}(w)\big) \alpha(s_1w\$)(\m_y) + \Pr_{\A}(w) \alpha(s_1w\$)(\m_x) \Big)\\
			&\ge \sum_{w \in L^{\le {n}}}\frac{1}{3^{|w|+1}} \Big( \big(1-\textstyle\Pr_{\A}(w)\big) \alpha(s_1w\$)(\m_y) + \Pr_{\A}(w) \alpha(s_1w\$)(\m_x) \Big) \\
			&\phantom{=} \commenteq{$d_{\alpha}\big(s_1w, (w(n + 1), q) \big) \ge 0$ for all $w \in L^{n+1}$} \qedhere
		\end{align*}
	\end{itemize}
	%\end{proof}
	
	\subsection{Proof of \texorpdfstring{\cref{thm:general-undecidable}}{Theorem \ref{thm:general-undecidable}}} \label{proof:thm:general-undecidable}
	Consider the construction from the main body.
	
	We show that there is a word $w \in L^{*}$ such that $\Pr_{\A}(w) \gr \frac{1}{2}$ ($\A$ is nonempty) if and only if there is a general strategy $\alpha$ such that $d_{\alpha}(s_1,s_2) \ls \theta$ in the induced LMC.
	
	\begin{itemize}
		\item
		``$\implies$'':
		Assume there is a word $w \in L^{*}$ such that $\Pr_{\A}(w) \gr \frac{1}{2}$. We show that there exists a general strategy $\alpha$ such that $d_\alpha(s_1,s_2) \ls \theta$ in the LMC induced by $\alpha$.
		%Let $n$ be the length of $w$.
		%Since $q_0 \not\in F$, we have $n \ge 1$.
		
		Define the general strategy $\alpha$ such that $\alpha(s_1w\$) = \mathbf{1}_{\m_y}$ and $\alpha(s_1w'\$) = \mathbf{1}_{\m_x}$ for all $w' \in L^{*} \setminus \{w\}$.
		The strategy $\alpha$ differs from $\alpha_x$ only on the path $s_1w\$$. We have
		\begin{align*}
			&d_\alpha(s_1, s_2) \\
			&=\ \sum_{w' \in L^*} \frac{1}{3^{|w'| + 1}} \Big( \big(1-\textstyle\Pr_{\A}(w')\big) \alpha(s_1 w'
			\$)(\m_y) + \Pr_{\A}(w') \alpha(s_1 w' \$)(\m_x) \Big)\\
			& \commenteq{\cref{lem:general-distance-expression}}\\
			&= \sum_{w' \in L^* \setminus \{w\}} \frac{1}{3^{|w'| + 1}} \Big( \big(1-\textstyle\Pr_{\A}(w')\big) \alpha(s_1 w'\$)(\m_y) + \Pr_{\A}(w') \alpha(s_1 w' \$)(\m_x) \Big) \ +\\
			&\phantom{=} \frac{1}{3^{|w| + 1}} \Big( \big(1-\textstyle\Pr_{\A}(w)\big) \alpha(s_1 w
			\$)(\m_y) + \Pr_{\A}(w) \alpha(s_1 w \$)(\m_x) \Big)\\
			&= \sum_{w' \in L^* \setminus \{w\}} \frac{1}{3^{|w'| + 1}} \Big( \big(1-\textstyle\Pr_{\A}(w')\big) \alpha(s_1 w'\$)(\m_y) + \Pr_{\A}(w') \alpha(s_1 w' \$)(\m_x) \Big) \ + \\
			&\phantom{=} \frac{1}{3^{|w| + 1}} (1-\textstyle\Pr_{\A}(w)) \\
			&\commenteq{$\alpha(s_1w\$) = \mathbf{1}_{\m_y}$}\\
			&\ls  \sum_{w' \in L^* \setminus \{w\}} \frac{1}{3^{|w'| + 1}} \Big( \big(1-\textstyle\Pr_{\A}(w')\big) \alpha(s_1 w'\$)(\m_y) + \Pr_{\A}(w') \alpha(s_1 w' \$)(\m_x) \Big) \ + \\
			&\phantom{=} \frac{1}{3^{|w| + 1}} \textstyle\Pr_{\A}(w)\\
			& \commenteq{$\Pr_{\A}(w) \gr \frac{1}{2}$}\\
			&=  \sum_{w' \in L^* \setminus \{w\}} \frac{1}{3^{|w'| + 1}} \Big( \big(1-\textstyle\Pr_{\A}(w')\big) \alpha(s_1 w'\$)(\m_y) + \Pr_{\A}(w') \alpha(s_1 w' \$)(\m_x) \Big) \ + \\
			&\phantom{=} \frac{1}{3^{|w| + 1}} \Big( \big(1-\textstyle\Pr_{\A}(w)\big) \alpha_x(s_1 w
			\$)(\m_y) + \Pr_{\A}(w) \alpha_x(s_1 w \$)(\m_x) \Big)\\
			&\commenteq{$\alpha_x(s_1w\$) = \mathbf{1}_{\m_x}$}\\
			&=\ \sum_{w' \in L^*} \frac{1}{3^{|w'| + 1}} \Big( \big(1-\textstyle\Pr_{\A}(w')\big) \alpha_x(s_1 w'
			\$)(\m_y) + \Pr_{\A}(w') \alpha_x(s_1 w' \$)(\m_x) \Big)\\
			&= \ d_{\alpha_x}(s_1, s_2) \commenteq{\cref{lem:general-distance-expression}}\\
			&= \ \theta
		\end{align*}
		\item ``$\impliedby$'':
		Assume for all words $w \in L^{*}$ we have $\Pr_{\A}(w) \le \frac{1}{2}$.
		We prove for all general strategy $\alpha$ the distance between $s_1$ and $s_2$ in the induced LMC $\D(\alpha)$ is at least $\theta$. That is, $d_{\alpha}(s_1, s_2) \ge \theta$ for all $\alpha$ .
		
		Let $\alpha$ be an arbitrary general strategy.
		For any word $w \in L^{*}$, we have
		\begin{align}\label{ineq:helper}
			& \big(1-\textstyle\Pr_{\A}(w)\big) \alpha(s_1 w
			\$)(\m_y) + \Pr_{\A}(w) \alpha(s_1 w \$)(\m_x) \nonumber\\
			&\ge \textstyle\Pr_{\A}(w) \commenteq{$\alpha(s_1 w
				\$)(\m_y) + \alpha(s_1 w
				\$)(\m_x) = 1$ and $\Pr_{\A}(w) \le \frac{1}{2}$ for all $w \in L^{*}$} \nonumber\\
			&= \big(1-\textstyle\Pr_{\A}(w)\big) \alpha_x(s_1 w\$)(\m_y) + \Pr_{\A}(w) \alpha_x(s_1 w \$)(\m_x) \\
			& \commenteq{$\alpha_x(s_1 w \$) = \mathbf{1}_{\m_x}$} \nonumber
		\end{align}
		
		We have
		\begin{align*}
			&d_\alpha(s_1, s_2) \\
			&=\ \sum_{w \in L^*} \frac{1}{3^{|w| + 1}} \Big( \big(1-\textstyle\Pr_{\A}(w)\big) \alpha(s_1 w
			\$)(\m_y) + \Pr_{\A}(w) \alpha(s_1 w \$)(\m_x) \Big)\\
			& \commenteq{\cref{lem:general-distance-expression}}\\
			&\ge\ \sum_{w \in L^*} \frac{1}{3^{|w| + 1}} \Big( \big(1-\textstyle\Pr_{\A}(w)\big) \alpha_x(s_1 w
			\$)(\m_y) + \Pr_{\A}(w) \alpha_x(s_1 w \$)(\m_x) \Big)\\
			& \commenteq{Inequality (\ref{ineq:helper})} \\
			&= \ d_{\alpha_x}(s_1, s_2) \commenteq{\cref{lem:general-distance-expression}}\\
			&= \  \theta \qedhere
		\end{align*}
	\end{itemize}
	
	\subsection{Proof of \texorpdfstring{\cref{theorem:reach}}{Theorem \ref{theorem:reach}}} \label{proof:theorem:reach}
	Fix an LMC $<S, L, \tau, \ell>$ where the set $S$ is a countable set of states.
	
	\iffalse
	We first repeat a lemma that holds for finite LMCs as well as infinite LMCs.
	
	\begin{lemma}\label{lem:existence-local-policy}
		Let $e \in [0,1]^{S^2}$ and two states $s$ and $t$.
		There exists a coupling $\omega \in \Omega(\tau(s),\tau(t))$ such that $\omega \in \displaystyle\argmin_{\omega' \in \Omega(\tau(s),\tau(t))} \sum_{u, v \in S} \omega'(u, v)e(u, v)$.
	\end{lemma}
	\begin{proof}
		
	\end{proof}
	\fi
	%	According to \cite[Theorem 6.1.7]{Tang2018} \SK{This should go, right?}, we have $d = \min_{T \in \mathcal T} \lfp.\Theta^T$, that is, the probabilistic bisimilarity distance can be characterised by the function $\Theta^T$, which is defined as follows: 	
	%	\begin{definition}
		%		\label{definition:Theta}
		For $T \in  \mathcal{T}$, the function $\Theta^T : [0, 1]^{S^2} \to [0, 1]^{S^2}$ is defined by
		\[
		\Theta^T(e)(s, t)
		=
		\left \{
		\begin{array}{ll}
			0
			& \mbox{if $(s, t) \in S^2_0$,}\\
			1
			& \mbox{if $(s, t) \in S^2_1$,}\\
			\displaystyle \sum_{u, v \in S} T(s, t)(u, v) \, e(u, v)
			& \mbox{otherwise.}
		\end{array}
		\right.
		\]
		%	\end{definition}
	The reachability probabilities $\R_\M^T(\cdot,S^2_1)$ form the least fixed point of~$\Theta^T$:
	\begin{lemma} \label{lemma:reach=lfp}
		For all $T \in \mathcal{T}$ we have $\R_\M^T(\cdot,S^2_1) = \lfp.\Theta^T$, where $\lfp.\Theta^T$ denotes the least fixed point of~$\Theta^T$.
	\end{lemma}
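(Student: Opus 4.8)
The plan is to recognise $\R_\M^T(\cdot,S^2_1)$ as the reachability probability of the target set $S^2_1$ in the Markov chain $\C_\M^T$, and then to invoke the standard characterisation of such reachability probabilities as a least fixed point, adapted to the (possibly infinite) state space $S^2$. First I would equip $[0,1]^{S^2}$ with the pointwise order $\sqsubseteq$, making it a complete lattice with bottom element $\mathbf{0}$, and observe that $\Theta^T$ is monotone: on $S^2_0$ and $S^2_1$ its value is the constant $0$ resp.\ $1$, and on $S^2_?$ it is a nonnegative combination of the argument's coordinates with weights $T(s,t)(u,v)$. Since the LMC is finitely branching, each coupling $T(s,t) \in \Omega(\tau(s),\tau(t))$ is supported on the finite set $\support(\tau(s)) \times \support(\tau(t))$, so every coordinate of $\Theta^T$ is a \emph{finite} sum; hence $\Theta^T$ is $\omega$-continuous (it commutes with suprema of increasing chains, computed coordinatewise). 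By Kleene's fixed-point theorem it follows that $\lfp.\Theta^T = \bigsqcup_{n \ge 0} (\Theta^T)^n(\mathbf{0})$.

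Next I would connect the Kleene iterates to bounded reachability. For $n \in \nat$ let $p_n(s,t)$ denote the probability that, in $\C_\M^T$, a run started at $(s,t)$ reaches $S^2_1$ within the first $n$ transitions. I would prove by induction on $n$ that $(\Theta^T)^{n+1}(\mathbf{0}) = p_n$. For the base case, $\Theta^T(\mathbf{0})$ equals $1$ on $S^2_1$ and $0$ elsewhere, which is exactly $p_0$ (reaching $S^2_1$ in zero steps means starting there). For the step, the states of $S^2_0 \cup S^2_1$ are absorbing in $\C_\M^T$, so there $p_{n+1}$ agrees with $p_n$ (equal to $0$ resp.\ $1$) and with $\Theta^T$; while for $(s,t) \in S^2_?$ a one-step analysis gives $p_{n+1}(s,t) = \sum_{u,v} T(s,t)(u,v)\, p_n(u,v)$, since the successors of $(s,t)$ in $\C_\M^T$ are distributed according to $T(s,t)$ and reaching $S^2_1$ within $n+1$ steps from $(s,t)$ amounts to reaching it within $n$ steps from a successor. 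By the induction hypothesis this equals $\Theta^T(p_n)(s,t) = (\Theta^T)^{n+2}(\mathbf{0})(s,t)$, completing the induction.

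Finally I would pass to the limit. The events ``reach $S^2_1$ within $n$ steps'' are increasing in $n$ and their union is the event ``reach $S^2_1$ eventually'', so by continuity of the probability measure from below, $p_n(s,t) \uparrow \R_\M^T((s,t),S^2_1)$ for every $(s,t)$. Combining the three paragraphs,
\[
\lfp.\Theta^T \;=\; \bigsqcup_{n \ge 0} (\Theta^T)^n(\mathbf{0}) \;=\; \sup_{n \ge 0} p_n \;=\; \R_\M^T(\cdot,S^2_1),
\]
which is the claim. The only genuinely infinite-state ingredients are the $\omega$-continuity of $\Theta^T$ and this monotone passage to the limit, so the hard part will be to justify these cleanly; however, finite branching makes each per-coordinate sum finite (hence continuity is immediate), and the limit is just the standard continuity-from-below of a probability measure, so no care beyond the usual Markov-chain reasoning is required.
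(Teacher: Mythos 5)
Your proposal is correct and follows essentially the same route as the paper: both identify the Kleene iterates $(\Theta^T)^{n+1}(\mathbf{0})$ with the $n$-step (cumulative) reachability probabilities by induction, and then pass to the limit via Kleene's fixed-point theorem on one side and continuity from below of the probability measure on the other. Your explicit justification of $\omega$-continuity of $\Theta^T$ via finite branching is a welcome detail that the paper leaves implicit.
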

	\begin{proof}
		Let $T \in \mathcal{T}$.
		For $s,t \in S$ and $n \in \{0, 1, \ldots\}$ write $\R((s,t),n) \in [0,1]$ for the probability that in the Markov chain~$\C_{\M}^T$ the pair $(s,t)$ reaches a state in~$S^2_1$ in exactly $n$~steps.
		Also write $\theta^{(n)}$ for the $n$-fold application of $\Theta^T$ to the constant zero function; i.e., $\theta^{(0)}((s,t)) = 0$ for all $(s,t) \in S^2$, and $\theta^{(n+1)} = \Theta^T(\theta^{(n)})$.
		It is easy to show by induction that for all $n \ge 0$ we have $\sum_{i=0}^n \R(\cdot,i) = \theta^{(n+1)}$.
		Thus, $\R_\M^T(\cdot,S^2_1) = \sum_{i=0}^\infty \R(\cdot,i) = \lim_{n \to \infty} \theta^{(n)} = \lfp.\Theta^T$, where the last equality follows from the Kleene fixed point theorem.
	\end{proof}
	Next we show that $\lfp.\Theta^T$ is a pre-fixed point of $\Delta$.
	\begin{proposition}\label{proposition:theta-prefixpoint-delta}
		For all $T \in \mathcal{T}$, $\Delta(\lfp.\Theta^T) \sqsubseteq \lfp.\Theta^T$.
	\end{proposition}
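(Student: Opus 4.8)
The plan is to abbreviate $g := \lfp.\Theta^T$ and to exploit that $g$ is a fixed point of~$\Theta^T$, so that $\Theta^T(g) = g$ holds pointwise. To prove $\Delta(g) \sqsubseteq g$, I would fix an arbitrary pair $(s,t) \in S^2$ and show $\Delta(g)(s,t) \le g(s,t)$, distinguishing the three blocks of the partition $S^2 = S^2_0 \cup S^2_1 \cup S^2_?$ according to the definition of~$\Theta^T$.

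The two easy cases come first. If $(s,t) \in S^2_1$, i.e.\ $\ell(s) \ne \ell(t)$, then by the definition of~$\Delta$ we have $\Delta(g)(s,t) = 1$, while $g(s,t) = \Theta^T(g)(s,t) = 1$; so equality holds. If $(s,t) \in S^2_?$, then $\ell(s) = \ell(t)$, hence $\Delta(g)(s,t)$ is the minimum over all couplings $\omega \in \Omega(\tau(s),\tau(t))$ of $\sum_{u,v} \omega(u,v)\,g(u,v)$. Since $T(s,t)$ is itself such a coupling, this minimum is bounded above by $\sum_{u,v} T(s,t)(u,v)\,g(u,v) = \Theta^T(g)(s,t) = g(s,t)$, as required.

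The only case requiring genuine work is $(s,t) \in S^2_0$, i.e.\ $s \sim t$ (which in particular forces $\ell(s) = \ell(t)$, so the minimisation branch of~$\Delta$ applies). Here $g(s,t) = \Theta^T(g)(s,t) = 0$, and since $g \ge 0$ everywhere it suffices to exhibit a single coupling $\omega \in \Omega(\tau(s),\tau(t))$ with $\sum_{u,v}\omega(u,v)\,g(u,v) = 0$, which then forces $\Delta(g)(s,t) = 0 = g(s,t)$. First I would record that $g$ vanishes on $S^2_0$: for every $(u,v)$ with $u \sim v$ we have $g(u,v) = \Theta^T(g)(u,v) = 0$. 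It therefore suffices to build a coupling of $\tau(s)$ and $\tau(t)$ that is supported entirely on bisimilar pairs. This is exactly where the probabilistic bisimulation property enters: since $s \sim t$, for every $\sim$-equivalence class $E$ we have $\tau(s)(E) = \tau(t)(E) =: p_E$. I would then couple $\tau(s)$ and $\tau(t)$ class by class — within each class $E$ with $p_E > 0$, take the scaled product $p_E \cdot \big(\tfrac{1}{p_E}\tau(s)|_E\big) \otimes \big(\tfrac{1}{p_E}\tau(t)|_E\big)$ of the normalised restrictions, and sum these contributions over all classes. A routine marginal check shows the result lies in $\Omega(\tau(s),\tau(t))$, and by construction its support is contained in $\bigcup_E (E \times E) \subseteq S^2_0$, on which $g$ is zero.

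The main obstacle is precisely this last construction, namely verifying that the per-class couplings assemble into a legitimate coupling of $\tau(s)$ and $\tau(t)$ concentrated on bisimilar pairs. I note that the argument goes through verbatim for countably infinite~$S$ and possibly infinite classes~$E$, so no finiteness assumption is needed. Moreover, since $\Delta$ is defined through a minimum (equivalently, an infimum) over couplings, each case only ever uses that some \emph{specific} coupling bounds $\Delta(g)(s,t)$ from above; hence attainment of the minimum is never required for this proof.
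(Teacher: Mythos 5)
Your proof is correct and follows essentially the same route as the paper's: the same three-way case split on $S^2_1$, $S^2_0$, $S^2_?$, with $T(s,t)$ serving as the witness coupling in the $S^2_?$ case and a coupling supported on bisimilar pairs in the $S^2_0$ case. The only difference is that you explicitly construct that bisimilarity-supported coupling class by class, whereas the paper simply asserts its existence; your construction is a correct (and welcome) filling-in of that standard fact.
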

	\begin{proof}
		Let $s, t \in S$.  We distinguish three cases.
		\begin{itemize}
			\item
			If $(s,t) \in  S_1^2$ then
			$\Delta(\lfp.\Theta^T)(s, t) = 1 = \lfp.\Theta^T(s, t)$.
			
			\item
			If $(s, t) \in  S^2_0$, then $s \sim t$. We have $\ell(s) = \ell(t)$, and there exists an $\pi \in \Omega(\tau(s), \tau(t))$ such that $\support(\pi) \subseteq \mathord{\sim}$.
			\[
			\begin{array}{lll}
				\Delta(\lfp.\Theta^{T})(s,t) &=& \displaystyle\min_{\omega \in \Omega(\tau(s), \tau(t))}\sum_{(u, v) \in S^2} \omega(u, v) \, \lfp.\Theta^{T}(u, v) \\
				& \le & \displaystyle\sum_{(u, v) \in S^2} \pi(u, v) \, \lfp.\Theta^{T}(u, v) \\
				& = & \displaystyle\sum_{(u, v) \in \support(\pi)} \pi(u, v) \, \lfp.\Theta^{T}(u, v) \\
				& = & 0 \\
				&&\commenteq{$\forall u \sim v: \lfp.\Theta^{T}(u, v) = \Theta^{T}(\lfp.\Theta^{T})(u, v) = 0$}\\
				& = & \Theta^{T}(\lfp.\Theta^{T})(s, t) \commenteq{since $s \sim t$, $\Theta^{T}(\lfp.\Theta^{T})(s, t)  = 0$}\\
				& = & \lfp.\Theta^{T}(s, t).
			\end{array}
			\]
			
			\item
			Otherwise,
			\begin{eqnarray*}
				\Delta(\lfp.\Theta^T)(s, t)
				& = & \min_{\omega \in \Omega(\tau(s), \tau(t))} \sum_{u, v \in S} \omega(u, v) \, \lfp.\Theta^T(u, v)\\
				& \leq & \sum_{u, v \in S} T(s, t)(u, v) \, \lfp.\Theta^T(u, v)
				\commenteq{$T(s, t) \in \Omega(\tau(s), \tau(t))$}\\
				& = & \Theta^T(\lfp.\Theta^T)(s, t)\\
				& = & \lfp.\Theta^T(s, t).
			\end{eqnarray*}
		\end{itemize}
	\end{proof}
	
	By the Knaster-Tarski's fixed point theorem \cite[Theorem~1]{Tarski1955}, the probabilistic bisimilarity distance $d$ is the least pre-fixed point of $\Delta$.
	Together with \cref{proposition:theta-prefixpoint-delta}, we have:
	
	\begin{corollary}\label{cor:delta-smaller-than-theta}
		For all $T \in \mathcal{T}$, $d \sqsubseteq \lfp.\Theta^T$.
	\end{corollary}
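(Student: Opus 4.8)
The plan is to derive this as an immediate consequence of the two facts already assembled just before the statement, namely the Knaster--Tarski characterisation of $d$ and the pre-fixed-point property proved in \cref{proposition:theta-prefixpoint-delta}. No new construction is required.

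First I would recall the lattice-theoretic setup. The set $[0,1]^{S^2}$, ordered pointwise by $\sqsubseteq$, is a complete lattice, and $\Delta$ is a monotone operator on it (monotonicity of $\Delta$ is standard, since increasing $e$ pointwise can only increase each coupling sum $\sum_{u,v} \omega(u,v)\,e(u,v)$, hence also their minimum). By the Knaster--Tarski fixed point theorem \cite[Theorem~1]{Tarski1955}, the least fixed point of such an operator coincides with its least \emph{pre-fixed} point; i.e., $d = \lfp.\Delta$ is the $\sqsubseteq$-smallest element $e$ satisfying $\Delta(e) \sqsubseteq e$.

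Next I would invoke \cref{proposition:theta-prefixpoint-delta}, which asserts precisely that $\lfp.\Theta^T$ is a pre-fixed point of $\Delta$, that is, $\Delta(\lfp.\Theta^T) \sqsubseteq \lfp.\Theta^T$, for every policy $T \in \mathcal{T}$. Since $d$ is the \emph{least} pre-fixed point of $\Delta$, it must lie $\sqsubseteq$-below every pre-fixed point, and in particular $d \sqsubseteq \lfp.\Theta^T$. This holds for all $T \in \mathcal{T}$, which is exactly the claim.

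I do not expect any genuine obstacle here: all the real work has already been carried out in \cref{lemma:reach=lfp} (identifying the reachability probabilities with $\lfp.\Theta^T$) and in \cref{proposition:theta-prefixpoint-delta} (the case analysis showing $\lfp.\Theta^T$ is a pre-fixed point of $\Delta$). The present corollary is a one-line specialisation of the minimality clause in Knaster--Tarski, and the only point worth stating explicitly is that the pre-fixed-point inequality of \cref{proposition:theta-prefixpoint-delta} is combined with the \emph{least} pre-fixed point characterisation of $d$ rather than merely its least \emph{fixed} point characterisation.
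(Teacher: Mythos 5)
Your proposal is correct and follows exactly the paper's own route: the paper likewise combines the Knaster--Tarski characterisation of $d$ as the least pre-fixed point of $\Delta$ with \cref{proposition:theta-prefixpoint-delta}, which exhibits $\lfp.\Theta^T$ as a pre-fixed point of $\Delta$, to conclude $d \sqsubseteq \lfp.\Theta^T$. Your explicit remark that one must use the least \emph{pre-fixed} point (not merely least fixed point) characterisation is precisely the point the paper relies on.
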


	\begin{proposition}\label{proposition:delta-equal-to-theta}
		There exists a $T \in \mathcal{T}$ such that $\lfp.\Theta^T \sqsubseteq d$.
	\end{proposition}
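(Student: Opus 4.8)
The plan is to construct the required policy~$T$ by reading off, at every pair in $S^2_?$, an optimal coupling for the distance~$d$, and then to show that this choice makes~$d$ itself a fixed point of~$\Theta^T$; minimality of $\lfp.\Theta^T$ then yields $\lfp.\Theta^T \sqsubseteq d$.

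First I would exploit that $d$ is the least fixed point of~$\Delta$, so $\Delta(d) = d$. Fix $(s,t) \in S^2_?$. Since $(s,t) \notin S^2_1$ we have $\ell(s) = \ell(t)$, so
\[
d(s,t) \ = \ \Delta(d)(s,t) \ = \ \min_{\omega \in \Omega(\tau(s),\tau(t))} \sum_{u,v \in S} \omega(u,v)\, d(u,v)\,.
\]
The key point to check here, since $S$ may be infinite, is that the minimum is actually attained: because $\M$ is finitely branching, $\support(\tau(s))$ and $\support(\tau(t))$ are finite, so every coupling in $\Omega(\tau(s),\tau(t))$ is supported on the finite set $\support(\tau(s)) \times \support(\tau(t))$; hence $\Omega(\tau(s),\tau(t))$ is a compact polytope in a finite-dimensional space, and the linear map $\omega \mapsto \sum_{u,v} \omega(u,v)\,d(u,v)$ on it attains its minimum. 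Let $\omega_{s,t}$ be a minimiser and define $T(s,t) := \omega_{s,t}$ for every $(s,t) \in S^2_?$. By construction $T(s,t) \in \Omega(\tau(s),\tau(t))$, so $T \in \mathcal{T}$.

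Next I would verify that $d$ is a fixed point of~$\Theta^T$ by going through the three cases of its definition. For $(s,t) \in S^2_1$ we have $\Theta^T(d)(s,t) = 1 = d(s,t)$; for $(s,t) \in S^2_0$ we have $s \sim t$, hence $d(s,t) = 0 = \Theta^T(d)(s,t)$; and for $(s,t) \in S^2_?$ the choice of $\omega_{s,t}$ gives
\[
\Theta^T(d)(s,t) \ = \ \sum_{u,v \in S} T(s,t)(u,v)\, d(u,v) \ = \ \sum_{u,v \in S} \omega_{s,t}(u,v)\, d(u,v) \ = \ \Delta(d)(s,t) \ = \ d(s,t)\,.
\]
Hence $\Theta^T(d) = d$.

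Finally, $\Theta^T$ is monotone on the complete lattice $[0,1]^{S^2}$, so by the Knaster--Tarski theorem \cite[Theorem~1]{Tarski1955} it has a least fixed point, which is precisely $\lfp.\Theta^T$. Since $d$ is a fixed point of~$\Theta^T$, minimality gives $\lfp.\Theta^T \sqsubseteq d$, as desired. I expect the only genuinely delicate step to be the attainment of the optimal coupling in the infinite-state setting; this is exactly where finite branching enters, and it is what lets the finite-state argument carry over unchanged.
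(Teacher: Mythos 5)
Your proposal is correct and follows essentially the same route as the paper: define $T(s,t)$ as an optimal coupling for $d$ at each pair in $S^2_?$, verify by the same three-case analysis that $d$ is a fixed point of $\Theta^T$, and conclude by minimality of $\lfp.\Theta^T$. Your extra justification that the minimising coupling is attained (via finite branching and compactness of the coupling polytope) is a point the paper leaves implicit, but it does not change the argument.
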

	\begin{proof}
		For each $(s, t) \in S^2_?$, define
		$
		T(s, t) \in \displaystyle\argmin_{\omega \in \Omega(\tau(s), \tau(t))} \sum_{u, v \in S} \omega(u, v) \, d(u, v).
		$	
		Obviously, $T \in \mathcal{T}$.
		
		Since $\lfp.\Theta^T$ is the least fixed point of $\Theta^T$, it suffices to show that $d$ is a fixed point of $\Theta^T$.
		Let $s, t \in S$.
		We distinguish three cases.
		\begin{itemize}
			\item
			If $(s, t) \in S^2_1$, then
			$\Theta^{T}(d)(s, t) = 1 = d(s, t)$.
			\item
			If $(s, t) \in S^2_0$ then $s \sim t$. We have
			$\Theta^{T}(d)(s, t) = 0 = d(s, t)$.
			\item
			Otherwise,
			\begin{eqnarray*}
				\Theta^T(d)(s, t)
				& = & \sum_{u, v \in S} T(s, t)(u, v) \, d(u, v)\\
				& = & \min_{\omega \in \Omega(\tau(s), \tau(t))} \sum_{u, v \in S} \omega(u, v) \, d(u, v)\\
				&&\commenteq{by construction of $T$}\\
				& = & \Delta(d)(s, t)\\
				& = & d(s, t).
			\end{eqnarray*}
		\end{itemize}
	\end{proof}

	It follows from \cref{cor:delta-smaller-than-theta} and \cref{proposition:delta-equal-to-theta} that
	$\displaystyle d = \min_{T \in \mathcal{T}} \lfp.\Theta^T$.
	Combined with \cref{lemma:reach=lfp}, this implies \cref{theorem:reach}.
	
	\subsection{Proof of \texorpdfstring{\cref{prop:mdp-distance-ls1}}{Proposition \ref{prop:mdp-distance-ls1}}} \label{proof:prop:mdp-distance-ls1}

	\begin{lemma} \label{lemma:inf-LMC-game-step}
		Let $\M_1, \M_2, \ldots$ be an infinite sequence of LMCs with $\M_i = <S, L, \tau_i, \ell>$ for all~$i$.
		Suppose that for each $s \in S$ there is a finite set $N(s) \subseteq S$ with $\support(\tau_i(s)) \subseteq N(s)$ for all~$i$.
		Let $S_0 \subseteq S$  be a finite subset with $\liminf_{i \to \infty} \max\{d_{\M_i}(s,t) \mid s,t \in S_0\} = 0$; i.e., there is a subsequence such that all pairwise distances in~$S_0$ converge to~$0$.
		Write $N := \bigcup_{s \in S_0} N(s)$.
		Then there are a transition function $\tau: S_0 \to \Dist(S)$, a partition $\{S_1, \ldots, S_k\}$ of~$N$ and a probability distribution~$\mu$ on $\{1, \ldots, k\}$
		such that
		$\tau(s)(S_j) = \mu(j)$ for all $s \in S_0$ and $j \in \{1, \ldots, k\}$ and
		\begin{align*}
			\liminf_{i \to \infty} \max  & \left\{|\tau_i(s) - \tau(s)| \mid s \in S_0 \right\} \\
			& \; \mbox{} \cup \left\{ |\tau_i(s)(S_j) - \mu(j)| \mid s \in S_0,\ 1 \le j \le k \right\} \\
			&  \; \mbox{} \cup \left\{ d_{\M_i}(s,t) \mid 1 \le j \le k,\ s,t \in S_j \right\} \quad = \quad 0\,;
		\end{align*}
		i.e., there is a subsequence of $\M_1, \M_2, \ldots$ such that for all $s \in S_0$ and all $1 \le j \le k$ the transition function converges to~$\tau$, the transition probability into~$S_j$ converges to~$\mu(j)$, and for each $S_j$ with $1 \le j \le k$ all pairwise distances in~$S_j$ converge to~$0$.
	\end{lemma}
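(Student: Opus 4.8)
The plan is to argue by compactness: extract a single subsequence along which all the relevant quantities converge, and then read off $\tau$, the partition, and $\mu$ from the limits. First I would pass to the subsequence furnished by the hypothesis $\liminf_{i\to\infty}\max\{d_{\M_i}(s,t)\mid s,t\in S_0\}=0$, so that along it $d_{\M_i}(s,t)\to 0$ for every pair $s,t\in S_0$. Since $N$ is finite, each $\tau_i(s)$ ($s\in S_0$) lies in the compact simplex $\Dist(N)$, each $d_{\M_i}(u,v)$ ($u,v\in N$) lies in $[0,1]$, and for every pair $s,t\in S_0$ an optimal coupling $\omega_i^{s,t}\in\Omega(\tau_i(s),\tau_i(t))$ exists (the set of couplings is compact and the Kantorovich objective is linear) and lies in the compact set $\Dist(N\times N)$. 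As $S_0$ and $N$ are finite, finitely many applications of the Bolzano--Weierstrass theorem yield a further subsequence along which, \emph{simultaneously}, $\tau_i(s)\to\tau(s)$ for each $s\in S_0$, $d_{\M_i}(u,v)\to d^*(u,v)$ for each $u,v\in N$, and $\omega_i^{s,t}\to\omega^{s,t}$ for each pair $s,t\in S_0$. Note $\support(\tau(s))\subseteq N(s)\subseteq N$, and since $d_{\M_i}(s,t)\ls 1$ eventually, all states of $S_0$ share one label.

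Next I would build the partition from $d^*$. Being a pointwise limit of the pseudometrics $d_{\M_i}$, the limit $d^*$ is again a pseudometric on $N$; in particular it satisfies the triangle inequality, so $\{(u,v)\mid d^*(u,v)=0\}$ is an equivalence relation on~$N$, and I let $\{S_1,\dots,S_k\}$ be its classes. By construction, whenever $u,v$ lie in a common class we have $d_{\M_i}(u,v)\to d^*(u,v)=0$, which already yields the third family of terms in the claimed $\liminf$.

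It remains to define $\tau$ and $\mu$ and prove $\tau(s)(S_j)=\mu(j)$. I set $\tau(s):=\lim_i\tau_i(s)$ and fix a reference state $s_0\in S_0$. For any $s\in S_0$ the marginal constraints pass to the limit, so $\omega^{s_0,s}\in\Omega(\tau(s_0),\tau(s))$; moreover, because $\ell(s_0)=\ell(s)$ the optimal coupling realises the distance, and, the sums ranging over the finite set $N\times N$,
\[
\sum_{u,v\in N}\omega^{s_0,s}(u,v)\,d^*(u,v)\;=\;\lim_i\sum_{u,v\in N}\omega_i^{s_0,s}(u,v)\,d_{\M_i}(u,v)\;=\;\lim_i d_{\M_i}(s_0,s)\;=\;0.
\]
As every summand is nonnegative, $\omega^{s_0,s}(u,v)\gr 0$ forces $d^*(u,v)=0$, i.e.\ $\omega^{s_0,s}$ is supported on pairs lying in a common class. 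Hence, using that the marginals of $\omega^{s_0,s}$ are $\tau(s_0)$ and $\tau(s)$,
\[
\tau(s_0)(S_j)\;=\;\sum_{u\in S_j}\sum_{v\in N}\omega^{s_0,s}(u,v)\;=\;\sum_{u,v\in S_j}\omega^{s_0,s}(u,v)\;=\;\sum_{v\in S_j}\sum_{u\in N}\omega^{s_0,s}(u,v)\;=\;\tau(s)(S_j).
\]
Thus $\tau(\cdot)(S_j)$ is constant on $S_0$, and I define $\mu(j):=\tau(s_0)(S_j)$; since $\{S_j\}$ partitions $N$ and $\tau(s_0)$ has total mass $1$ on $N$, the function $\mu$ is a probability distribution and $\tau(s)(S_j)=\mu(j)$ for all $s\in S_0$. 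Finally, $\tau_i(s)\to\tau(s)$ gives $|\tau_i(s)-\tau(s)|\to 0$, and summing finitely many converging coordinates over $S_j$ gives $\tau_i(s)(S_j)\to\mu(j)$; together with the vanishing class distances this shows the finite maximum tends to $0$ along the subsequence, so its $\liminf$ over the full sequence is $0$.

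The repeated Bolzano--Weierstrass extraction is routine; the step I expect to be the main obstacle is the block-diagonality argument of the last paragraph, where the vanishing of $\lim_i d_{\M_i}(s_0,s)$ is converted, via the limit coupling, into equality of the coarse transition masses $\tau(s)(S_j)$ across all $s\in S_0$. The one point needing care is bookkeeping: arranging the convergence of all finitely many transition functions, pairwise distances, and optimal couplings along a single common subsequence.
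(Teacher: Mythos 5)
Your proof is correct, and the overall strategy coincides with the paper's: extract a common subsequence by Bolzano--Weierstrass along which the transition functions and the optimal couplings converge, check that the limit couplings retain their marginals, and read off the partition and $\mu$ from a block-diagonality property of the limit couplings. The one genuinely different component is how the partition of $N$ is defined. The paper takes the reflexive-symmetric-transitive closure of $\bigcup_p \support(\omega^p)$ (using a chain of couplings $\omega^p$ between consecutive $s_p,s_{p+1}$ rather than your star of couplings $\omega^{s_0,s}$); block-diagonality of the limit couplings is then immediate, but the paper must separately prove that intra-class distances vanish, which it does by a $\limsup$ contradiction using $\omega^p_{i_m}(u,v)\ge \omega^p(u,v)/2$ eventually, plus the triangle inequality to cover pairs related only through the transitive closure. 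You instead also extract convergent limits $d^*(u,v)$ of all pairwise distances on the finite set $N$ and define the classes as the kernel of the limit pseudometric $d^*$; then vanishing intra-class distances hold by construction, and the work shifts to showing $\support(\omega^{s_0,s})\subseteq\{d^*=0\}$, which you get cleanly from $\sum_{u,v}\omega^{s_0,s}(u,v)d^*(u,v)=\lim_i d_{\M_i}(s_0,s)=0$ and nonnegativity. Your version buys a slightly cleaner verification (no transitive-closure case analysis, no $\limsup$ argument) at the cost of one more family of Bolzano--Weierstrass extractions; both are valid, and your observation that equality of labels on $S_0$ is needed to invoke the coupling characterisation of the distance is a detail the paper leaves implicit.
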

	\begin{proof}
		%Let $f: X \to Y$ be a function and $X' \subseteq X$ be a set. Define the restriction of $f$ to $X'$ as $f{\big|}_{X'} = \{(x, y) \in f \mid x \in X'\}$.
		For a vector $\mu \in \mathbb{R}^{X}$ where $X$ is a finite set, we write $|\mu| := \sum_{x \in S} |\mu(x)|$ for its  $L_1$ norm.
		
		Let $n$ be the number of states in $S_0$ and write $S_0 = \{s_1, \ldots, s_n\}$.
		For each $s \in S_0$ and all~$\M_i$ we have $\support(\tau_i(s)) \subseteq N$.
		So we can view each $\tau_i(s)$ as an element of $[0,1]^N$.
		For $1 \le p \ls n$ we denote by $\omega_{i}^{p} \in [0,1]^{N^2}$ the optimal coupling with marginals $\tau_i(s_p)$ and $\tau_i(s_{p+1})$; i.e., $d_{\M_i}(s_p,s_{p+1}) = \sum_{u, v \in N} \omega_i^p(u, v)d_{\M_i}(u,v)$.
		
		For each~$\M_i$, we can view $(\tau_i(s_1), \ldots, \tau_i(s_n), \omega_i^1, \ldots, \omega_i^{n-1})$ as an element of the compact set $[0,1]^{n |N| + (n-1) |N|^2}$.
		Therefore, by the Bolzano-Weierstrass theorem and the assumption, there exist a (limit) transition function $\tau: S_0 \to \Dist(S)$ (we have $\support(\tau(s)) \subseteq N$ for all $s \in S_0$) and $n-1$ (limit) couplings $\omega^{p}: [0, 1]^{N^2}$ with $1 \le p \ls n$ such that
		\begin{align}\label{eq:subseq}
			\liminf_{i \to \infty} \max &\left\{d_{\M_i}(s,t) \mid s,t \in S_0\right\} \nonumber \\
			& \; \mbox{} \cup  \left\{|\tau_i(s) - \tau(s)| \mid s \in S_0 \right\} \nonumber \\
			& \; \mbox{} \cup  \left\{ |\omega_i^p - \omega^p | \mid 1 \le p \ls n\right\} \ = \ 0;
		\end{align}
		i.e., there is a subsequence of $\M_1, \M_2, \ldots$, say $\M_{i_1}, \M_{i_2}, \ldots$, such that all pairwise distances in~$S_0$ converge to~$0$, the transition functions $\tau_{i_j}$ converge to $\tau$ and the optimal coupling $\omega_{i_j}^p$ converge to~$\omega^p$.
		
		We show next that the limit coupling $\omega^p$ has left marginal $\tau(s_p)$ and right marginal $\tau(s_{p+1})$ for all $1 \le p \ls n$;
		i.e.,
		\begin{equation}\label{eq:left-marg}
			\sum_{v \in S}\omega^p(u, v) = \tau(s_p)(u)
		\end{equation} and
		\begin{equation}\label{eq:right-marg}
			\sum_{u \in S}\omega^p(u, v) = \tau(s_{p+1})(v).
		\end{equation}
		
		We first show that the left marginal of $\omega^p$ is $\tau(s_p)$.
		We assume towards a contradiction that there is a state $u \in S$ such that $\varepsilon = \tau(s_p)(u)-\sum_{v \in S} \omega^p(u, v) \gr 0$. (The case $\varepsilon \ls 0$ is similar.)
		By \eqref{eq:subseq}, there exists $m_1$ such that for all $m \gr m_1$ we have $|\tau_{i_m}(s_p) - \tau(s_p)| \ls \varepsilon/2$. There also exists $m_2$ such that for all $m \gr m_2$ we have $|\omega_{i_{m}} - \omega^p| \ls \varepsilon/2$. %N(s_p)\cdot N(s_{p+1})$ for all $(u, v) \in \support(\omega_{i_{m}}^p) \mathord{\cup} \support(\omega^p)$.
		Let $m' = \max\{m_1, m_2\}$ and $m \gr m'$.
		We have the following contradiction:
		\begin{align*}
			& \; \varepsilon \\
			= & \; \tau(s_p)(u) - \sum_{v \in S} \omega^p(u, v) \\
			= & \; \tau(s_p)(u) - \sum_{v \in S} \omega^p(u, v) - \tau_{i_m}(s_p)(u) +  \sum_{v \in S} \omega_{i_m}^p(u, v) \\
			&\commenteq{$\tau_{i_m}(s_p)$ is the left marginal of $\omega_{i_m}^p$}\\
			= & \; \big( \tau(s_p)(u) - \tau_{i_m}(s_p)(u) \big) +  \sum_{v \in S} \big( \omega_{i_m}^{p}(u, v) - \omega^{p}(u, v) \big) \\
			\le & \; |\tau(s_p) - \tau_{i_m}(s_p)| +  | \omega_{i_m}^{p} - \omega^{p}| \\
			\ls & \; \frac{\varepsilon}{2}  + \frac{\varepsilon}{2} \\
			%& \; \commenteq{$|\support(\omega_{i_{m}}^p) \mathord{\cup} \support(\omega^p)| \le N(s_p)\cdot N(s_{p+1})$}\\
			%\le & \; \frac{\varepsilon}{2}  + \frac{\varepsilon}{2} \\
			= & \; \varepsilon
		\end{align*}
		The case for $\tau(s_{p+1})$ being the right marginal of $\omega^p$ is similar.
		
		Let $R$ be the relation on~$N$ with $R := \bigcup_{1 \le p \ls n} \support(\omega^p)$.
		Let $\bar{R}$ be the reflexive, symmetric and transitive closure of $R$.
		Then $\bar{R}$ is an equivalence relation on~$N$.
		Denote by $S_1, \ldots, S_k$ the equivalence classes of~$\bar{R}$, so that $\{S_1, \ldots, S_k\}$ is a partition of~$N$.
		%For any $S_j$ where $1 \le j \le k$, $S_j$ is either a subset of $\bigcup_{s \in S_0} \support(\tau(s))$ or a singleton set $\{s\}$ where $s \in N \setminus \bigcup_{s \in S_0} \support(\tau(s))$.

		%We show $\tau(s_p)(S_j) = \tau(s_{p+1})(S_j)$ For all $1 \le p \ls n$ and $1 \le j \le k$.
		For all $1 \le p \ls n$ and $1 \le j \le k$ we have
		%We distinguish the following two cases:
		%\begin{itemize}
		%	\item Assume $S_j \subseteq \bigcup_{s \in S_0} \support(\tau(s))$.
		\begin{align*}
			& \; \tau(s_p)(S_j) \\
			= & \; \sum_{u \in S_j}\sum_{v \in S }\omega^p(u, v) \commenteq{by \eqref{eq:left-marg}}\\
			= & \; \sum_{u \in S_j}\sum_{v \in S_j }\omega^p(u, v) \\
			& \; \commenteq{$u$ and $v$ are in the same equivalence class if $(u, v) \in \support(\omega^p)$}\\
			= & \; \sum_{v \in S_j}\sum_{u \in S}\omega^p(u, v) \\
			= & \; \sum_{v \in S_j}\tau(s_{p+1})(v)
			\commenteq{by \eqref{eq:right-marg}}\\
			= & \; \tau(s_{p+1})(S_j) \,.
		\end{align*}
		%	\item Otherwise, we have $S_j \bigcap \big(\bigcup_{s \in S_0} \support(\tau(s)) \big) = \emptyset$ and $\tau(s_p)(S_j) = 0 = \tau(s_{p+1})(S_j)$.
		%\end{itemize}
		Thus, we have $\tau(s)(S_j) = \tau(t)(S_j)$ for all $s, t \in S_0$ and $1 \le j \le k$.
		Hence, there is $\mu \in \Dist(\{1, \ldots, k\})$ with $\mu(j) = \tau(s)(S_j)$ for all $s \in S_0$ and all $1 \le j \le k$.
		It follows that in the subsequence $\M_{i_1}, \M_{i_2}, \ldots$ the transition probability into $S_j$ converges to $\mu(j)$, since for any $s \in S_0$ and $1 \le j \le k$, we have
		$|\tau_{i}(s)(S_j) - \mu(j)| = |\tau_{i}(s)(S_j) - \tau(s)(S_j)| \le |\tau_{i}(s) - \tau(s)|$.
		
		Finally, we show that in the subsequence $\M_{i_1}, \M_{i_2}, \ldots$, for each $S_j$ with $1 \le j \le k$ all pairwise distances in $S_j$ converge to $0$.
		Let $1 \le j \le k$ and $(u,v) \in S_j$.
		We need to show that $d_{\M_{i_1}}(u,v), d_{\M_{i_2}}(u,v), \ldots$ converges to~$0$.
		It follows from the definition of~$S_j$ and the triangle inequality that we can assume that there is $1 \le p \ls n$ with $(u, v) \in \support(\omega^p)$, i.e., $\omega^p(u,v) \gr 0$.
		Since $\omega_{i_1}^p, \omega_{i_2}^p, \ldots$ converges to $\omega^p$,
		there exists $m_3$ such that for all $m \gr m_3$ we have $\omega^p(u, v) - \omega_{i_m}^p(u, v) \le \omega^p(u, v)/2$.
		Therefore, $\omega_{i_m}^p(u, v) \ge \omega^p(u, v)/2$ for all $m \gr m_3$.
		Towards a contradiction assume that $\limsup_{j \to \infty} d_{\M_{i_j}}(u,v) \gr 0$.
		Then we have
		\begin{align*}
			& \limsup_{j \to \infty} d_{\M_{i_j}}(s_p,s_{p+1}) \ \ge \ \limsup_{j \to \infty} \omega_{i_j}^p(u,v) d_{\M_{i_j}}(u,v) \ \ge \
			\frac{\omega^p(u,v)}2 \limsup_{j \to \infty} d_{\M_{i_j}}(u,v) \\
			&\gr \ 0\,,
		\end{align*}
		contradicting the fact that $\lim_{j \to \infty} d_{\M_{i_j}}(s_p,s_{p+1}) = 0$.
		Therefore, we conclude that $\lim_{j \to \infty} d_{\M_{i_j}}(u,v) = 0$, as desired.
	\end{proof}

	\begin{lemma} \label{lemma:MDP-game-step}
		Let $\D = <S, \Act, L, \varphi, \ell>$ be an MDP.
		Let $\alpha_1, \alpha_2, \ldots$ be an infinite sequence of strategies and $S_0 \subseteq S$ be a finite subset with $\liminf_{i \to \infty} \max \{d_{\D(\alpha_i)} (s, t) \mid s, t \in S_0)\} = 0$.
		Let $\tau_i$ be the probability transition function for the induced LMC $\D(\alpha_i)$, i.e., $\tau_i(\rho)(\rho\m t) = \alpha_i(\rho)(\m)\varphi(s, \m)(t)$ for a state $\rho$ in $\D(\alpha_i)$.
		For $\rho \in \Paths(\D)$, define $N(\rho) = \left\{\rho \m t \mid s=\last(\rho), \m \in \Act(s), t \in \support(\varphi(s)(\m)) \right\}$.
		Write $N := \bigcup_{s \in S_0} N(s)$.
		Then there are a strategy $\alpha: S_0 \to \Dist(\Act)$, a transition function $\tau: S_0 \to \Dist(\Paths(\D))$, a partition $\{S_1, \ldots, S_k\}$ of $N$ and
		a probability distribution $\mu$ on $\{1, \ldots, k\}$ such that
		$\tau(s)(s\m t) = \alpha(s)(\m)\varphi(s,\m)(t)$ for all $s \in S_0$ and $s\m t \in N$,
		$\tau(s)(S_j) = \mu(j)$ for all $s \in S_0$ and $j \in \{1, \ldots, k\}$,
		and
		\begin{align*}
			\liminf_{i \to \infty} \max & \left\{ |\alpha_i(s) - \alpha(s)| \mid s \in S_0 \right\} \\
			& \; \mbox{} \cup \left\{ |\tau_i(s) - \tau(s)| \mid s \in S_0 \right\} \\
			& \; \mbox{} \cup \left\{ |\tau_i(s)(S_j) - \mu(j) | \mid s \in S_0,\ 1 \le j \le k \right\} \\
			& \; \mbox{} \cup \left\{ d_{\D(\alpha_i)}(\rho_1,\rho_2) \mid 1 \le j \le k,\ \rho_1,\rho_2 \in S_j \right\} \quad = \quad 0\,;
		\end{align*}
		i.e., there is a subsequence of $\D(\alpha_1), \D(\alpha_2), \ldots$ such that for all $s \in S_0$ and all $1 \le j \le k$ the strategy converges to $\alpha$, the transition function converges to~$\tau$, the transition probability into~$S_j$ converges to~$\mu(j)$, and for each $S_j$ all pairwise distances in~$S_j$ converge to~$0$.
	\end{lemma}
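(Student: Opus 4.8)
The plan is to deduce this lemma from \cref{lemma:inf-LMC-game-step} by viewing each induced LMC $\D(\alpha_i)$ as an LMC over the single, common, countable state space $\Paths(\D)$, with the states $s \in S$ identified with the trivial (length-zero) paths. Concretely, for each~$i$ set $\M_i := <\Paths(\D), L, \tau_i, \ell'>$, where $\tau_i(\rho)(\rho\m t) = \alpha_i(\rho)(\m)\varphi(s,\m)(t)$ for $s = \last(\rho)$. This is a well-defined transition function on \emph{all} of $\Paths(\D)$, and since adding unreachable paths does not change distances between reachable paths, each $d_{\M_i}$ restricted to the reachable part agrees with $d_{\D(\alpha_i)}$. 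Because $\Act$ and each $\support(\varphi(s,\m))$ are finite, we have $\support(\tau_i(\rho)) \subseteq N(\rho)$ with $N(\rho)$ finite, so the finite-branching hypothesis of \cref{lemma:inf-LMC-game-step} holds. The hypothesis $\liminf_{i\to\infty}\max\{d_{\D(\alpha_i)}(s,t)\mid s,t\in S_0\}=0$ is exactly what is required there, with $N:=\bigcup_{s\in S_0}N(s)$ being a finite set.

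Applying \cref{lemma:inf-LMC-game-step} to $\M_1,\M_2,\ldots$ yields a limit transition function $\tau\colon S_0\to\Dist(\Paths(\D))$, a partition $\{S_1,\ldots,S_k\}$ of $N$, and a distribution~$\mu$ on $\{1,\ldots,k\}$ with $\tau(s)(S_j)=\mu(j)$ for all $s\in S_0$ and all~$j$, together with a subsequence along which $\tau_i(s)\to\tau(s)$, the block probabilities $\tau_i(s)(S_j)\to\mu(j)$, and all pairwise distances inside each block $S_j$ converge to~$0$. It then remains only to produce the limit strategy. Since $\Act$ is finite, $\Dist(\Act)$ is compact, hence so is the finite product $\prod_{s\in S_0}\Dist(\Act)$; by the Bolzano--Weierstrass theorem I would pass to a \emph{further} subsequence of the one just obtained along which $\alpha_i(s)$ converges, for every $s\in S_0$, to some $\alpha(s)\in\Dist(\Act)$. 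Passing to this sub-subsequence preserves all convergences inherited from \cref{lemma:inf-LMC-game-step} and additionally gives $\alpha_i(s)\to\alpha(s)$.

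Finally I would verify the consistency $\tau(s)(s\m t)=\alpha(s)(\m)\varphi(s,\m)(t)$ demanded by the statement. Distinct pairs $(\m,t)$ yield distinct paths $s\m t$, so the value $\tau(s)(s\m t)$ is unambiguous, and along the chosen subsequence
\[
\tau(s)(s\m t)=\lim_i \tau_i(s)(s\m t)=\lim_i \alpha_i(s)(\m)\varphi(s,\m)(t)=\alpha(s)(\m)\varphi(s,\m)(t)\,,
\]
using $\tau_i(s)\to\tau(s)$ for the first equality, the definition of $\tau_i$ for the second, and $\alpha_i(s)\to\alpha(s)$ for the third. With this, all four convergence statements in the conclusion hold along this single subsequence. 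The bulk of the technical work—extracting limit couplings, showing they carry the correct marginals, and building the block partition so that block probabilities agree across $S_0$ while intra-block distances vanish—has already been done in \cref{lemma:inf-LMC-game-step}; the genuinely new ingredients here are merely the compactness extraction of the limit strategy and the limit-consistency check above. Accordingly, the only point requiring care (rather than depth) is ensuring that the strategy subsequence is taken \emph{after} the one from \cref{lemma:inf-LMC-game-step}, so that the two selections do not conflict.
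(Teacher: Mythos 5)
Your proposal is correct and follows essentially the same route as the paper's proof: apply \cref{lemma:inf-LMC-game-step} to the induced LMCs over $\Paths(\D)$, extract a further subsequence via Bolzano--Weierstrass on the compact product $\prod_{s\in S_0}\Dist(\Act)$ to obtain the limit strategy, and then verify the compatibility $\tau(s)(s\m t)=\alpha(s)(\m)\varphi(s,\m)(t)$ (the paper phrases this last step as a proof by contradiction with an $\varepsilon/2$ split, but your direct limit computation is the same argument). Your explicit remark about ordering the two subsequence extractions is precisely the care the paper also takes.
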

	\begin{proof}
		We have that $\D(\alpha_1), \D(\alpha_2), \ldots$ is an infinite sequence of LMCs.
		%Let $\D(\alpha_i) = <\Paths(\D), L, \tau_i, \ell'>$  where $\ell'(\rho) = \ell(\last(\rho))$ for $\rho \in \Paths(\D)$.
		Since the MDP $\D$ is finite, we have that $N(\rho)$ is finite and $N(\rho) \subseteq \Paths(\D)$.
		
		By the assumption that $\liminf_{i \to \infty} \max \{d_{\D(\alpha_i)} (s, t) \mid s, t \in S_0\} = 0$ and \cref{lemma:inf-LMC-game-step}, there are a partition $\{S_1, \ldots, S_k\}$ of $N$,
		a probability distribution $\mu$ on $\{1, \ldots, k\}$ and a transition function $\tau: S_0 \to \Dist(\Paths(\D))$ such that
		$\tau(s)(S_j) = \mu(j)$ for all  $s \in S_0$ and $j \in \{1, \ldots, k\}$,
		and
		\begin{align}
			\liminf_{i \to \infty} \max & \left\{ |\tau_i(s) - \tau(s)| \mid s \in S_0 \right\} \nonumber\\
			& \; \mbox{} \cup \left\{ |\tau_i(s)(S_j) - \mu(j) | \mid s \in S_0,\ 1 \le j \le k \right\} \nonumber\\
			& \; \mbox{} \cup \left\{ d_{\D(\alpha_i)}(\rho_1,\rho_2) \mid 1 \le j \le k,\ \rho_1,\rho_2 \in S_j \right\} \quad = \quad 0\,\label{eq:MDP-game-step};
		\end{align}
		i.e., there is a subsequence of $\D(\alpha_1), \D(\alpha_2), \ldots$ such that for all $s \in S_0$ and all $1 \le j \le k$ the transition function converges to $\tau$, the transition probability into~$S_j$ converges to~$\mu(j)$, and for each $S_j$ all pairwise distances in~$S_j$ converge to~$0$.
		
		Let $n$ be the number of states in $S_0$ and write $S_0 = \{s_1, \ldots, s_n\}$.
		For each $s \in S_0$ and all~$\alpha_i$ we have $\support(\alpha_i(s)) \subseteq \Act$.
		So we can view each $\alpha_i(s)$ as an element of $[0,1]^{|\Act|}$.
		For each~$\alpha_i$, we can view $(\alpha_i(s_1), \ldots, \alpha_i(s_n))$ as an element of the compact set $[0,1]^{n|Act|}$.
		Therefore, by the Bolzano-Weierstrass theorem and \eqref{eq:MDP-game-step}, there exists a (limit) strategy $\alpha: S_0 \to \Dist(\Act)$ such that
		\begin{align}\label{eq:subseqmdp}
			\liminf_{i \to \infty} \max & \left\{ |\tau_i(s) - \tau(s)| \mid s \in S_0 \right\} \nonumber\\
			& \; \mbox{} \cup \left\{ |\tau_i(s)(S_j) - \mu(j) | \mid s \in S_0,\ 1 \le j \le k \right\} \nonumber\\
			& \; \mbox{} \cup \left\{ d_{\D(\alpha_i)}(\rho_1,\rho_2) \mid 1 \le j \le k,\ \rho_1,\rho_2 \in S_j \right\} \nonumber\\
			& \; \mbox{} \cup \left\{ |\alpha_i(s) - \alpha(s)| \mid s \in S_0 \right\} \quad = \quad 0\,;
		\end{align}
		i.e., there is a subsequence of $\D(\alpha_1), \D(\alpha_2), \ldots$ such that for all $s \in S_0$ and all $1 \le j \le k$ the transition function converges to $\tau$, the transition probability into~$S_j$ converges to~$\mu(j)$, for each $S_j$ all pairwise distances in~$S_j$ converge to~$0$ and the strategy converges to $\alpha$.
		
		It remains to show that the limit strategy $\alpha$ on $S_0$ is compatible with the limit transition function $\tau$, that is, $\tau(s)(s \m t) = \alpha(s)(\m)\varphi(s, \m)(t)$ for all $s \in S_0$ and $s \m t \in N$.
		
		We assume towards a contradiction that there is a state $s \in S_0$ and $s \m t \in N$ such that $\varepsilon = \tau(s)(s \m t) - \alpha(s)(\m)\varphi(s, \m)(t) \gr 0$. (The case $\varepsilon \ls 0$ is similar).
		%Since $s \m t \in N$, we have $\varphi(s, \m)(t) \gr 0$.
		By \eqref{eq:subseqmdp}, there exists $m_1$ such that for all $m \gr m_1$ we have $|\alpha_{i_{m}}(s) -\alpha(s)| \ls \frac{\varepsilon}{2}$. %\varphi(s, \m)(t)
		There also exists $m_2$ such that for all $m \gr m_2$ we have $|\tau_{i_{m}}(s) - \tau(s)| \ls \frac{\varepsilon}{2}$. Let $m' = \max\{m_1, m_2\}$ and $m \gr m'$.
		We have the following contradiction:
		\begin{align*}
			& \; \varepsilon \\
			= & \; \tau(s)(s \m t) - \alpha(s)(\m)\varphi(s, \m)(t) \\
			= & \; \tau(s)(s \m t) - \alpha(s)(\m)\varphi(s, \m)(t) - \tau_{i_m}(s)(s \m t) + \alpha_{i_m}(s)(\m)\varphi(s, \m)(t) \\
			&\commenteq{In the induced LMC $\D(\alpha_{i_m})$, we have $\tau_{i_m}(s)(s \m t) = \alpha_{i_m}(s)(\m)\varphi(s, \m)(t)$}\\
			= & \; \big( \tau(s)(s \m t) - \tau_{i_m}(s)(s \m t) \big) + \varphi(s, \m)(t)  \big(\alpha_{i_m}(s)(\m) - \alpha(s)(\m) \big) \\
			\le & \; |\tau(s) - \tau_{i_m}(s)| +  \varphi(s, \m)(t)|\alpha_{i_m}(s) - \alpha(s)| \\
			\ls & \; \frac{\varepsilon}{2} +  \varphi(s, \m)(t) \cdot \frac{\varepsilon}{2} \\
			\le & \; \frac{\varepsilon}{2}  + \frac{\varepsilon}{2} \\
			= & \; \varepsilon \qedhere
		\end{align*}
	\end{proof}
	
	\begin{lemma}\label{lemma:MDP-game-step2}
		Let $\D = <S, \Act, L, \varphi, \ell>$ be an MDP.
		Let $\alpha_1, \alpha_2, \ldots$ be an infinite sequence of strategies and $S_0 \subseteq S$ be a finite subset with $\liminf_{i \to \infty} \max \{d_{\D(\alpha_i)} (s, t) \mid s, t \in S_0)\} = 0$.
		Then there are
		\begin{itemize}
			\item a memoryless strategy $\alpha$ on $S_0$; %memoryless $\alpha: S_0 \to \Dist(\Act)$;
			\item a set $S' \subseteq 2^{S}$ such that for any $E \in S'$ and any $s, t \in E$ we have $\ell(s) = \ell(t)$;
			\item a distribution $\upsilon \in \Dist(S')$;
			\item a function $f: S_0 \times \Act \times S \to S'$ with $t \in f(s, \m, t)$ for all $ (s, \m, t) \in S_0 \times \Act \times S$ such that for all $s \in S_0$ and all $E \in S'$ we have
			$\upsilon(E) = \sum_{\m \in \Act(s)} \sum_{t \in S \text{ s.t. } f(s, \m, t) = E}\alpha(s)(\m)\varphi(s, \m)(t)$.
		\end{itemize}
		Furthermore, for each $E \in S'$, there exists a sequence of strategies $\alpha_1', \alpha_2', \ldots$ such that $\liminf_{i \to \infty} \max \{d_{\D(\alpha_i')} (s, t) \mid s, t \in E\} = 0$, i.e., there is a subsequence of $\D(\alpha_1'), \D(\alpha_2'), \ldots$ such that all pairwise distances in~$E$ converge to~$0$.
	\end{lemma}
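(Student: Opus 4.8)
The plan is to build everything on top of \cref{lemma:MDP-game-step}, which already supplies the hard analytic content. First I would apply \cref{lemma:MDP-game-step} to the sequence $\alpha_1, \alpha_2, \ldots$ and the set $S_0$, obtaining the memoryless strategy $\alpha$ on $S_0$, the limit transition function $\tau$ with $\tau(s)(s\m t) = \alpha(s)(\m)\varphi(s,\m)(t)$, the partition $\{S_1, \ldots, S_k\}$ of $N = \bigcup_{s\in S_0} N(s)$, the distribution $\mu$ on $\{1,\ldots,k\}$ with $\tau(s)(S_j) = \mu(j)$ for all $s \in S_0$, and a subsequence $\D(\alpha_{i_1}), \D(\alpha_{i_2}), \ldots$ along which, for every block $S_j$, all pairwise distances $d_{\D(\alpha_{i_m})}(\rho,\rho')$ with $\rho,\rho' \in S_j$ converge to~$0$. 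This $\alpha$ is the memoryless strategy the lemma asks for.

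Next I would pass from blocks of \emph{paths} to blocks of \emph{states}. For each $j$ set $E_j := \{\last(\rho) \mid \rho \in S_j\} \subseteq S$. Since $d_{\D(\alpha_{i_m})}(\rho,\rho') \to 0$ forces $\ell(\last(\rho)) = \ell(\last(\rho'))$ (states of different label always have distance~$1$), every $E_j$ is label-homogeneous, as required. I would then take $S'$ to consist of $E_1, \ldots, E_k$ together with a singleton $\{t\}$ for each state $t$ that is not the last state of any path in~$N$; define $\upsilon(E) := \sum_{j : E_j = E}\mu(j)$ (and $\upsilon(\{t\}) := 0$ on the extra singletons); and define $f(s,\m,t) := E_j$ whenever $s\m t \in S_j$, and $f(s,\m,t) :=$ any element of~$S'$ containing~$t$ (a singleton if necessary) for the remaining triples, for which necessarily $\varphi(s,\m)(t) = 0$. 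By construction $t \in f(s,\m,t)$ always holds. The defining identity then follows by a direct computation: for fixed $s \in S_0$ and $E \in S'$, the triples with $f(s,\m,t) = E$ and $\varphi(s,\m)(t) \gr 0$ are exactly the paths $s\m t$ lying in the blocks $S_j$ with $E_j = E$, so $\sum_{\m}\sum_{t : f(s,\m,t)=E}\alpha(s)(\m)\varphi(s,\m)(t) = \sum_{j : E_j = E}\tau(s)(S_j) = \sum_{j : E_j = E}\mu(j) = \upsilon(E)$, using $\tau(s)(s\m t) = \alpha(s)(\m)\varphi(s,\m)(t)$; the zero-probability triples contribute nothing.

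The main work, and the step I expect to be the principal obstacle, is the ``furthermore'' claim, which must turn convergence of distances between paths in $S_j$ into convergence of distances between the states of $E = E_j$. Fix $E = E_j$; for each state $t \in E_j$ choose one representative path $\rho_t \in S_j$ with $\last(\rho_t) = t$. I would define strategies $\alpha'_m$ by residualisation: for a path $\pi$ starting in some $t \in E_j$, set $\alpha'_m(\pi) := \alpha_{i_m}(\rho_t \odot \pi)$, where $\rho_t \odot \pi$ glues $\rho_t$ and $\pi$ at their common state~$t$ (on all other paths $\alpha'_m$ may be defined arbitrarily). Since the representatives $\rho_t$ are distinct length-two paths with distinct last states, the states of $\D(\alpha'_m)$ reachable from~$t$ are disjoint from those reachable from~$t'$, and likewise the states reachable from $\rho_t$ and $\rho_{t'}$ in $\D(\alpha_{i_m})$ are disjoint; the map $\pi \mapsto \rho_t \odot \pi$ is then a label- and transition-preserving isomorphism of the reachable sub-LMCs. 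Because the probabilistic bisimilarity distance of two states depends only on the sub-LMC reachable from them, this yields $d_{\D(\alpha'_m)}(t,t') = d_{\D(\alpha_{i_m})}(\rho_t,\rho_{t'})$ for all $t,t' \in E_j$, and the right-hand side tends to~$0$ along the chosen subsequence. Hence $\liminf_{m\to\infty}\max\{d_{\D(\alpha'_m)}(s,t) \mid s,t \in E\} = 0$, completing the construction. The care needed here is precisely in justifying the isomorphism and the locality of the bisimilarity distance, and in checking that distinct representative prefixes keep the reachable fragments disjoint so that the residualisation is consistent.
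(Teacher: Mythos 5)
Your proposal is correct and follows essentially the same route as the paper's proof: apply \cref{lemma:MDP-game-step}, project the path-blocks $S_j$ to state-blocks $E_j$ via $\last$, read off $\upsilon$ and $f$ from $\mu$ and the partition, and obtain the ``furthermore'' part by residualising the strategies along a representative one-step prefix for each $t \in E_j$. Your explicit handling of the triples with $\varphi(s,\m)(t)=0$ via zero-mass singletons, and your isomorphism justification of $d_{\D(\alpha'_m)}(t,t') = d_{\D(\alpha_{i_m})}(\rho_t,\rho_{t'})$, are slightly more careful renderings of steps the paper treats informally, but the argument is the same.
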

	
	\begin{proof}
		
		For $\rho \in \Paths(\D)$, define $N(\rho) = \left\{\rho \m t \mid s=\last(\rho), \m \in \Act(s), t \in \support(\varphi(s)(\m)) \right\}$.
		Write $N := \bigcup_{s \in S_0} N(s)$.
		
		Let $\tau_i$ be the probability transition function for the induced LMC $\D(\alpha_i)$. %, i.e., $\tau_i(\rho)(\rho\m t) = \alpha_i(\rho)(\m)\varphi(s, \m)(t)$ for a state $\rho$ in $\D(\alpha_i)$.
		By \cref{lemma:MDP-game-step}, there are a strategy $\alpha: S_0 \to \Dist(\Act)$, a transition function $\tau: S_0 \to \Dist(\Paths(\D))$, a partition $\{S_1, \ldots, S_k\}$ of $N$ and
		a probability distribution $\mu$ on $\{1, \ldots, k\}$ such that for all $s \in S_0$ and $s\m t \in N$
		\begin{equation}\label{eq:transition-action}
			\tau(s)(s\m t) = \alpha(s)(\m)\varphi(s,\m)(t),
		\end{equation}
		for all $s \in S_0$ and $j \in \{1, \ldots, k\}$
		\begin{equation}\label{eq:transition-distribution}
			\tau(s)(S_j) = \mu(j),
		\end{equation}
		and
		\begin{align}\label{eq:subseq-mdp}
			\liminf_{i \to \infty} \max & \left\{ |\alpha_i(s) - \alpha(s)| \mid s \in S_0 \right\} \nonumber\\
			& \; \mbox{} \cup \left\{ |\tau_i(s) - \tau(s)| \mid s \in S_0 \right\} \nonumber\\
			& \; \mbox{} \cup \left\{ |\tau_i(s)(S_j) - \mu(j) | \mid s \in S_0,\ 1 \le j \le k \right\} \nonumber\\
			& \; \mbox{} \cup \left\{ d_{\D(\alpha_i)}(\rho_1,\rho_2) \mid 1 \le j \le k,\ \rho_1,\rho_2 \in S_j \right\} \quad = \quad 0\,;
		\end{align}
		i.e., there is a subsequence of $\D(\alpha_1), \D(\alpha_2), \ldots$ such that for all $s \in S_0$ and all $1 \le j \le k$ the strategy converges to $\alpha$, the transition function converges to~$\tau$, the transition probability into~$S_j$ converges to~$\mu(j)$, and for each $S_j$ all pairwise distances in~$S_j$ converge to~$0$.
		
		We have the memoryless strategy $\alpha$ on $S_0$.
		We define a set $S' \subseteq 2^S$, a distribution $\upsilon \in \Dist(S')$ and a function $f: S_0 \times \Act \times S \to S'$ as follows.
		
		\begin{itemize}
			\item
			For each $j \in \{1, \ldots, k\}$, we define %$E_j := \bigcup_{\rho \in S_j} \{\last(\rho)\}$.
			$E_j := \left\{\last(\rho) \mid \rho \in S_j\right\} \subseteq S$.
			%We define  $S' := \bigcup_{j \in \{1, \ldots, k\}} \{E_j\}$.
			Let $S'$ be the set of $E_j$, i.e., $S' := \{ E_1, \ldots,  E_k \} \subseteq 2^{S}$.
			
			By \eqref{eq:subseq-mdp}, there is a subsequence of $\D(\alpha_1), \D(\alpha_2), \ldots$ such that for each $S_j$ with $1 \le j \le k$ all pairwise distances in~$S_j$ converge to~$0$.
			It follows that all states in~$S_j$ have the same label and all states in $E_j$ have the same label.
			\item
			We define $\upsilon \in \Dist(S')$ such that for $E \in S'$ we have $\upsilon(E):= \sum_{j \in \{1 \ldots k\}: E_j = E}\mu(j)$.
			\item
			Since $\{S_1, \ldots, S_k\}$ is a partition of $N$, for each $s \m t \in N$, there is a unique $S_j$ where $s \m t$ belongs to.
			We define a function $f: S_0 \times \Act \times S \to S'$ such that $f(s, \m, t) := E_j$ for each $s \in S_0$, $\m \in \Act$ and $t \in S$  where $S_j$ is the unique set such that $s \m t \in S_j$.
			It is easy to see that $t \in f(s, \m, t)$ for all $ (s, \m, t) \in S_0 \times \Act \times S$.
			
			Next, we show that for all $s \in S_0$ and all $E \in S'$ we have
			$\upsilon(E) = \sum_{\m \in \Act(s)} \sum_{t \in S \text{ s.t. } f(s, \m, t) = E}\alpha(s)(\m)\varphi(s, \m)(t)$.
			For any $E \in S'$, we have
			\begin{align*}
				& \; \upsilon(E)  \\
				= & \; \sum_{j \in \{1 \ldots k\}: E_j = E}\mu(j) \commenteq{by definition of $\upsilon$}\\
				= & \; \sum_{j \in \{1 \ldots k\}: E_j = E}\tau(s)(S_j) \commenteq{by \eqref{eq:transition-distribution}}\\
				= & \; \sum_{j \in \{1 \ldots k\}: E_j = E}\;\;\sum_{s\m t \in S_j}\tau(s)(s\m t)\\
				= & \; \sum_{j \in \{1 \ldots k\}: E_j = E}\;\;\sum_{s\m t \in S_j}\alpha(s)(\m)\varphi(s, \m)(t) \commenteq{by \eqref{eq:transition-action}}\\
				= & \; \sum_{j \in \{1 \ldots k\}: E_j = E} \;\; \sum_{f(s, \m, t) = E_j}\alpha(s)(\m)\varphi(s, \m)(t) \commenteq{by definition of $f$}\\
				= & \; \sum_{f(s, \m, t) = E}\alpha(s)(\m)\varphi(s, \m)(t) \\
				= & \; \sum_{\m \in \Act(s)} \;\;\;\; \sum_{t \in S \text{ s.t. } f(s, \m, t) = E}\alpha(s)(\m)\varphi(s, \m)(t)
			\end{align*}
		\end{itemize}

		%\rho_1$ and $\rho_2$ have the same label in any induced LMC for any $\rho_1, \rho_2 \in S_j$ where $j \in \{1, \ldots, k\}$.
		%It follows that for any $E \in S'$ and any $s, t \in E$ we have $\ell(s) = \ell(t)$.
		
		Finally, for any $E \in S'$, we define a sequence of strategies $\alpha_1', \alpha_2', \ldots$ such that all pairwise distances in $E$ converge to $0$.
		
		Let $E \in S'$.
		We fix a $j \in \{1, \ldots, k\}$ such that $E_j = E$.
		Let $t \in E_j$.
		There are $s \in S_0$ and $\m \in \Act$ such that $s \m t \in S_j$.
		For all $\rho \in \Paths(\D)$ whose first state is $t$, we define $\alpha_i'(\rho) := \alpha_i(s \m \rho) $.
		% \[
		% \alpha_i'(\rho) := \left \{
		% \begin{array}{ll}
			% 	\alpha_i(s \m \rho) \text{ where $s \m t \in S_j$}& \mbox{if $t \in E$;}\\
			% 	\alpha_i(\rho) & \mbox{otherwise.}
			% \end{array}
		% \right .
		% \]
		
		It follows from the definition of $\alpha_i'$ that each state $t \in E$ in the induced LMC $\D(\alpha_i')$ is probabilistically bisimilar with a state $s \m t \in S_j$ in the induced LMC $\D(\alpha_i)$.
		By \eqref{eq:subseq-mdp}, we have that $\liminf_{i \to \infty} \max \left\{ d_{\D(\alpha_i')}(t, t') \mid t, t' \in E \right\} =  0$, i.e., there is a subsequence of $\D(\alpha_1'), \D(\alpha_2'), \ldots$ such that all pairwise distances in~$E$ converge to~$0$.
		This concludes the proof.
	\end{proof}
	
	Assume that there are an infinite sequence of strategies $\alpha_1, \alpha_2, \ldots$ and $S_0 \subseteq S$ with $\liminf_{i \to \infty} \max \{d_{\D(\alpha_i)} (s, t) \mid s, t \in S_0)\} = 0$.
	\cref{lemma:MDP-game-step2} states that Defender has a winning strategy for the attacker-defender game defined from $\D$ and $S_0$ \cite[Section~3.1]{KT2022}.
	It follows from \cite[Proposition~3]{KT2022} that:
	\begin{corollary}\label{cor:limitequalisable-equalisable}
		Let $\D = <S, \Act, L, \varphi, \ell>$ be an MDP.
		Let $\alpha_1, \alpha_2, \ldots$ be an infinite sequence of strategies, and let $S_0 \subseteq S$ be a set of states with $\liminf_{i \to \infty} \max \{d_{\D(\alpha_i)} (s, t) \mid s, t \in S_0)\} = 0$.
		Then there exists a general strategy $\alpha$ for $\D$ such that in the LMC induced by $\alpha$ all states in $S_0$ are probabilistic bisimilar.
	\end{corollary}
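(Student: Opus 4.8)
The plan is to use \cref{lemma:MDP-game-step2} to show that the Defender wins the attacker-defender game of \cite[Section~3.1]{KT2022} started from~$S_0$, and then to invoke \cite[Proposition~3]{KT2022}, which converts a Defender win into an actual general strategy making all states of~$S_0$ probabilistically bisimilar. Before doing so I would note that the hypothesis $\liminf_{i \to \infty} \max\{d_{\D(\alpha_i)}(s,t) \mid s,t \in S_0\} = 0$ already forces $S_0$ to be label-homogeneous: if $\ell(s) \ne \ell(t)$ for some $s,t \in S_0$, then $d_{\D(\alpha_i)}(s,t) = 1$ for \emph{every}~$i$, so the $\liminf$ would be at least~$1$.

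First I would recall the shape of that game. A position is a nonempty label-homogeneous set of states; from a position~$S_0$ the Defender must exhibit a memoryless strategy~$\alpha$ on~$S_0$ together with a family~$S'$ of label-homogeneous blocks, a common distribution~$\upsilon \in \Dist(S')$ over them, and an assignment~$f$ of each successor $(s,\m,t)$ to a block, satisfying the matching condition $\upsilon(E) = \sum_{\m \in \Act(s)} \sum_{t \,:\, f(s,\m,t)=E} \alpha(s)(\m)\varphi(s,\m)(t)$ for every $s \in S_0$ and $E \in S'$; the Attacker then selects a block $E \in S'$ and play continues from~$E$. The Defender wins if play can continue forever.

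The key observation is that limit-equalisability --- the existence of a strategy sequence $\alpha_1, \alpha_2, \ldots$ with $\liminf_{i \to \infty} \max\{d_{\D(\alpha_i)}(s,t) \mid s,t \in S_0\} = 0$ --- is a self-sustaining Defender invariant. Indeed, \cref{lemma:MDP-game-step2} shows that whenever $S_0$ is limit-equalisable, the Defender has a legal move: the lemma produces exactly the required tuple $(\alpha, S', \upsilon, f)$, with each block in~$S'$ label-homogeneous and the matching condition satisfied. Moreover, its ``Furthermore'' clause supplies, for each successor block $E \in S'$, a fresh sequence of strategies $\alpha_1', \alpha_2', \ldots$ witnessing that~$E$ is again limit-equalisable. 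Since $S$ is finite, there are only finitely many positions, so I can define a positional Defender strategy that, at each limit-equalisable position, plays the move given by \cref{lemma:MDP-game-step2}; the invariant is preserved at every successor, so the Defender never gets stuck and wins.

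The main obstacle is not the iteration itself but making the correspondence to the formal game of \cite{KT2022} precise: one has to check that the tuple $(\alpha, S', \upsilon, f)$ of \cref{lemma:MDP-game-step2} is literally a legal Defender move in the sense of \cite[Section~3.1]{KT2022} (in particular that the blocks are label-homogeneous and that $\upsilon$ is the common block distribution seen from every state of~$S_0$), and that the fresh strategy sequences keep the hypothesis of the lemma available at the next round. Once the Defender win is established, \cite[Proposition~3]{KT2022} finishes the argument, yielding the desired general strategy~$\alpha$ under which all states of~$S_0$ are probabilistically bisimilar. The heavy lifting --- the Bolzano--Weierstrass compactness extractions and the construction of the limit strategy, limit transition function, and block partition --- has already been carried out in \cref{lemma:MDP-game-step,lemma:MDP-game-step2}, so the corollary follows essentially immediately.
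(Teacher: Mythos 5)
Your proposal takes essentially the same route as the paper: the paper's proof likewise consists of observing that \cref{lemma:MDP-game-step2} gives the Defender a winning strategy in the attacker--defender game of \cite[Section~3.1]{KT2022} (with the limit-equalisability invariant sustained by the lemma's ``Furthermore'' clause) and then invoking \cite[Proposition~3]{KT2022}. Your added remarks on label-homogeneity of $S_0$ and on the invariant being self-sustaining are correct elaborations of that same argument.
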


	\begin{theorem}\label{thm:mdp-distance-ls1-nes}
		Let $\D = <S, \Act, L, \varphi, \ell>$ be an MDP.
		Let $\alpha$ be a strategy and $s, t \in S$ with $d_{\D(\alpha)}(s, t) \ls 1$.
		There exist a policy $T$ for the LMC $\D(\alpha)$, two states $u, v \in S$, two paths $\rho_1,\rho_2 \in \Paths(\D)$ with $u = \last(\rho_1)$ and $v = \last(\rho_2)$, and a strategy $\alpha'$ such that $\R_{\D(\alpha)}^T((s,t),\{(\rho_1,\rho_2)\}) \gr 0$ and $u$ and $v$ are probabilistically bisimilar in the LMC induced by $\alpha'$.
	\end{theorem}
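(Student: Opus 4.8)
The plan is to combine \cref{corollary:smaller-and-smaller}, which already produces a policy~$T$ together with arbitrarily close (in distance) state pairs reachable from $(s,t)$, with \cref{cor:limitequalisable-equalisable}, which converts a vanishing-distance sequence of strategies into a single strategy that makes two states bisimilar. The point to keep in mind is that the states of the LMC~$\D(\alpha)$ are precisely paths of~$\D$, so the pairs delivered by \cref{corollary:smaller-and-smaller} are already pairs of paths, as required by the statement.

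First I would apply \cref{corollary:smaller-and-smaller} to the finitely branching (infinite-state) LMC $\M := \D(\alpha)$ and to the pair $(s,t)$ of one-state paths. Since $d_{\D(\alpha)}(s,t) \ls 1$, this yields a policy~$T$ for~$\D(\alpha)$ such that for every $\varepsilon \gr 0$ there is a pair of paths $(\rho_1,\rho_2)$ with $d_{\D(\alpha)}(\rho_1,\rho_2) \le \varepsilon$ and $\R_{\D(\alpha)}^T((s,t),\{(\rho_1,\rho_2)\}) \gr 0$. Taking $\varepsilon = \tfrac1i$ for $i = 1, 2, \ldots$ gives a sequence $(\rho_1^{(i)},\rho_2^{(i)})$ of path pairs with $d_{\D(\alpha)}(\rho_1^{(i)},\rho_2^{(i)}) \le \tfrac1i$ and positive reachability. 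As $S$ is finite, the last-state pairs $\big(\last(\rho_1^{(i)}),\last(\rho_2^{(i)})\big) \in S^2$ take only finitely many values, so by the pigeonhole principle some $(u,v)$ occurs for infinitely many~$i$; I restrict to that subsequence, along which $\last(\rho_1^{(i)}) = u$, $\last(\rho_2^{(i)}) = v$ and $d_{\D(\alpha)}(\rho_1^{(i)},\rho_2^{(i)}) \to 0$.

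Next I would turn these path pairs into strategies to which \cref{cor:limitequalisable-equalisable} applies. If $u = v$ the bisimilarity requirement is trivial (a state is bisimilar to itself), so assume $u \neq v$. Writing $\pi\rho$ for the concatenation of a path~$\pi$ ending in a state~$w$ with a path~$\rho$ starting in~$w$, I define for each~$i$ a strategy~$\gamma_i$ that acts like $\rho \mapsto \alpha(\rho_1^{(i)}\rho)$ on paths~$\rho$ starting in~$u$, like $\rho \mapsto \alpha(\rho_2^{(i)}\rho)$ on paths starting in~$v$, and arbitrarily elsewhere; this is well defined because $u \neq v$. The behaviour of~$\D(\alpha)$ below $\rho_1^{(i)}$ (resp.\ below $\rho_2^{(i)}$), after identifying $\rho_1^{(i)}\rho$ with~$\rho$ (resp.\ $\rho_2^{(i)}\rho$ with~$\rho$), coincides in transition probabilities and in labels with the behaviour of~$\D(\gamma_i)$ below its initial state~$u$ (resp.~$v$). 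Since the distance between two states is the least fixed point of~$\Delta$ and is therefore determined by the jointly reachable coupling structure below the two states, this gives $d_{\D(\gamma_i)}(u,v) = d_{\D(\alpha)}(\rho_1^{(i)},\rho_2^{(i)}) \to 0$.

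Finally I would apply \cref{cor:limitequalisable-equalisable} to the sequence $\gamma_1,\gamma_2,\ldots$ and the set $S_0 = \{u,v\}$: since $\max\{d_{\D(\gamma_i)}(w,w') \mid w,w' \in S_0\} = d_{\D(\gamma_i)}(u,v) \to 0$, the hypothesis $\liminf_{i\to\infty}\max\{\cdots\} = 0$ holds, so there is a strategy~$\alpha'$ under which $u$ and $v$ are probabilistically bisimilar in~$\D(\alpha')$. Fixing $\rho_1 := \rho_1^{(i_0)}$ and $\rho_2 := \rho_2^{(i_0)}$ for any~$i_0$ in the subsequence then meets every requirement at once: $u = \last(\rho_1)$, $v = \last(\rho_2)$, $\R_{\D(\alpha)}^T((s,t),\{(\rho_1,\rho_2)\}) \gr 0$, and $u,v$ are bisimilar in~$\D(\alpha')$. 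I expect the main obstacle to be the middle step---setting up the shifted strategies~$\gamma_i$ correctly and justifying that the paired (coupling) dynamics below $(\rho_1^{(i)},\rho_2^{(i)})$ matches that below $(u,v)$, so that the distances genuinely agree even when the two reachable sets of paths overlap---whereas the two analytically demanding ingredients, \cref{corollary:smaller-and-smaller} and \cref{cor:limitequalisable-equalisable}, are already available and only need to be connected.
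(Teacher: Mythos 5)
Your proposal is correct and follows essentially the same route as the paper's proof: invoke \cref{corollary:smaller-and-smaller} to obtain the policy~$T$ and a sequence of path pairs with vanishing distance and positive reachability, shift the strategy~$\alpha$ along those paths to obtain strategies under which the last states $u,v$ have vanishing distance, apply the pigeonhole principle on the finitely many last-state pairs, and conclude via \cref{cor:limitequalisable-equalisable}. The only (immaterial) difference is that you apply the pigeonhole step before constructing the shifted strategies, whereas the paper does it afterwards.
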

	\begin{proof}
		Since $d_{\D(\alpha)}(s, t) \ls 1$, by \cref{corollary:smaller-and-smaller}, there exist a policy $T$ for the LMC $\D(\alpha)$ such that for all $\varepsilon \gr 0$ there are $\rho_1, \rho_2\in \Paths(\D)$ with $d_{\D(\alpha)}(\rho_1, \rho_2) \le \varepsilon$ and $\R_{\D(\alpha)}^T((s,t),\{(\rho_1, \rho_2)\}) \gr~0$.
		Therefore, there exist $\varepsilon_1, \varepsilon_2, \ldots$ where $\varepsilon_i \gr 0$ for all $i$ and $\lim_{i \to \infty} \varepsilon_i = 0$, and an infinite sequence of pairs of states of the LMC $\D(\alpha)$ (paths of the MDP $\D$) $(\rho_1^1, \rho_2^1), (\rho_1^2, \rho_2^2), \ldots$ such that $d_{\D(\alpha)}(\rho_1^i, \rho_2^i) \le \varepsilon_i$ and $\R_{\D(\alpha)}^T((s,t),\{(\rho_1^i, \rho_2^i)\}) \gr~0$.

		Let $u^i = \last(\rho_1^i)$ and $v^i = \last(\rho_2^i)$.
		Without loss of generality, assume $u^i \neq v^i$.
		We define the strategy $\alpha_i(u^i \rho) := \alpha(\rho_1^i \rho)$, $\alpha_i(v^i \rho) = \alpha(\rho_2^i \rho)$ for any $\rho \in \Paths(\D)$.
		In the case that $u^i = v^i$, we simply define  $\alpha_i(u^i \rho) = \alpha(\rho_1^i \rho)$ for any $\rho \in \Paths(\D)$.
		By the definition of $\alpha_i$, the state $u^i$ of the LMC $\D(\alpha_i)$ is probabilistically bisimilar with the state $\rho_1^i$ of the LMC $\D(\alpha)$.
		Similarly, the state $v^i$ of the LMC $\D(\alpha_i)$ is probabilistically bisimilar with the state $\rho_2^i$ of the LMC $\D(\alpha)$.
		
		Since $d_{\D(\alpha)}(\rho_1^i, \rho_2^i) \le \varepsilon_i$ for all $i$, we have $d_{\D(\alpha_i)}(u^i, v^i) \le \varepsilon_i$ for all $i$.
		By the pigeonhole principle, there exist $u, v \in S$ such that $u = u^i$ and $v = v^i$ for infinitely many $i$.
		
		Furthermore, there exists a subsequence of $\D(\alpha_1), \D(\alpha_2), \ldots$ such that the distance of $u$ and $v$ converges to $0$, i.e., $\liminf_{i \to \infty} \max\{d_{\D(\alpha_i)}(u, v)\} = 0$.
		
		Thus, by \cref{cor:limitequalisable-equalisable}, there exists a general strategy $\alpha'$ for $\D$ such that $u$ and $v$ are probabilistically bisimilar in the LMC induced by $\alpha'$.
		
		Let $(\rho_1, \rho_2)$ be a pair of states in the LMC $\D(\alpha)$ such that $(\rho_1, \rho_2) = (\rho_1^i, \rho_2^i)$ for some $i$ with $\last(\rho_1^i) = u$ and $\last(\rho_2^i) = v$.
		We have $\R_{\D(\alpha)}^T((s,t),\{(\rho_1, \rho_2)\}) = \R_{\D(\alpha)}^T((s,t),\{(\rho_1^i, \rho_2^i)\}) \gr~0$.
	\end{proof}
	
	\begin{theorem}\label{thm:mdp-distance-ls1-suf}
		Let $\D = <S, \Act, L, \varphi, \ell>$ be an MDP.
		Let $\alpha$ be a strategy, $s, t \in S$ and $\rho_1,\rho_2 \in \Paths(\D)$ such that there is a policy $T$ for the LMC $\D(\alpha)$ with $\R_{\D(\alpha)}^T((s,t),\{(\rho_1,\rho_2)\}) \gr 0$.
		Let $u = \last(\rho_1)$, $v = \last(\rho_2)$.
		If there is a strategy $\alpha'$ such that $d_{\D(\alpha')}(u, v) = 0$,
		there is a strategy $\alpha''$ such that $d_{\D(\alpha'')}(s, t) \ls 1$.
	\end{theorem}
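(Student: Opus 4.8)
The plan is to construct $\alpha''$ by \emph{splicing} $\alpha'$ into $\alpha$: follow $\alpha$ until the realised path becomes $\rho_1$ (resp.\ $\rho_2$), and from then on behave exactly as $\alpha'$ behaves from~$u$ (resp.\ from~$v$). I may assume $s \neq t$, since otherwise $d_{\D(\alpha'')}(s,t)=0$ for every strategy. As $\rho_1$ starts in~$s$ and $\rho_2$ starts in~$t$, no path of $\D$ extends both $\rho_1$ and $\rho_2$, so the following is well defined: for $\rho \in \Paths(\D)$ of the form $\rho = \rho_1\m_1 t_1\cdots\m_n t_n$ (an extension of $\rho_1$) I set $\alpha''(\rho) := \alpha'(u\m_1 t_1\cdots\m_n t_n)$, which is legitimate because $\last(\rho_1)=u$; symmetrically for extensions of $\rho_2$ using~$v$; and $\alpha''(\rho) := \alpha(\rho)$ otherwise. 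Note that $\alpha''$ agrees with $\alpha$ on every \emph{proper} prefix of $\rho_1$ and of $\rho_2$.

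Next I would establish the first key claim: $d_{\D(\alpha'')}(\rho_1,\rho_2)=0$. The map sending each extension $\rho_1\m_1 t_1\cdots\m_n t_n$ of $\rho_1$ to $u\m_1 t_1\cdots\m_n t_n$, and each extension of $\rho_2$ to the corresponding path from~$v$, is an isomorphism from the sub-LMC of $\D(\alpha'')$ generated by $\{\rho_1,\rho_2\}$ onto the sub-LMC of $\D(\alpha')$ generated by $\{u,v\}$; it preserves labels (the last state is unchanged) and one-step probabilities (immediate from the definition of $\alpha''$). Since the probabilistic bisimilarity distance is the least fixed point of the local operator~$\Delta$ and hence depends only on the sub-LMC reachable from a state, this isomorphism yields $d_{\D(\alpha'')}(\rho_1,\rho_2)=d_{\D(\alpha')}(u,v)=0$. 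Thus $(\rho_1,\rho_2)$ lies in the set $S^2_0$ of bisimilar pairs of $\D(\alpha'')$.

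The second key claim is a reachability transfer: there is a policy $T''$ for $\D(\alpha'')$ with $\R_{\D(\alpha'')}^{T''}\big((s,t),S^2_0\big) \gr 0$. Because $\R_{\D(\alpha)}^T\big((s,t),\{(\rho_1,\rho_2)\}\big) \gr 0$, I would pick a loop-free positive-probability path $(s,t)=(p_0,q_0),\ldots,(p_k,q_k)=(\rho_1,\rho_2)$ in $\C_{\D(\alpha)}^T$; for every $i \ls k$ the pair $(p_i,q_i)$ lies in $S^2_?$ of $\D(\alpha)$ with $T(p_i,q_i)(p_{i+1},q_{i+1}) \gr 0$, and $p_i,q_i$ are proper prefixes of $\rho_1,\rho_2$. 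On such proper prefixes $\alpha''$ coincides with $\alpha$, so $\tau_{\D(\alpha'')}(p_i)=\tau_{\D(\alpha)}(p_i)$ and $\tau_{\D(\alpha'')}(q_i)=\tau_{\D(\alpha)}(q_i)$; in particular each $(p_i,q_i)$ is not in $S^2_1$ (labels agree in both LMCs), and $T(p_i,q_i)$ is still a coupling of the $\D(\alpha'')$-marginals, i.e.\ $T(p_i,q_i) \in \Omega(\tau_{\D(\alpha'')}(p_i),\tau_{\D(\alpha'')}(q_i))$. I then set $T''(p_i,q_i):=T(p_i,q_i)$ whenever $(p_i,q_i)$ is in $S^2_?$ of $\D(\alpha'')$, and choose $T''$ arbitrarily on the remaining $S^2_?$-pairs. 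Following this path in $\C_{\D(\alpha'')}^{T''}$ from $(s,t)$, the run either first hits some $(p_i,q_i)$ lying in $S^2_0$ of $\D(\alpha'')$ (an absorbing state of the chain), or, if no such $i \ls k$ exists, reaches $(\rho_1,\rho_2)\in S^2_0$; either way $S^2_0$ is reached with positive probability. Finally, \cref{corollary:lmc-distance-graph-reachability} turns $\R_{\D(\alpha'')}^{T''}\big((s,t),S^2_0\big)\gr 0$ into $d_{\D(\alpha'')}(s,t)\ls 1$, completing the proof.

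The main obstacle is making these two transfers precise. For the first I must argue that the distance is genuinely preserved under the sub-LMC isomorphism, which I would justify via the fixed-point characterisation of~$d$. For the second, the delicate point is that the classification of pairs into $S^2_0, S^2_1, S^2_?$ can differ between $\D(\alpha)$ and $\D(\alpha'')$ (bisimilarity depends on the whole strategy), so I cannot simply copy the policy $T$; instead I rely on the fact that $\alpha''$ agrees with $\alpha$ on all proper prefixes of $\rho_1$ and $\rho_2$, which is exactly what keeps the couplings $T(p_i,q_i)$ and the witnessing path valid in $\C_{\D(\alpha'')}^{T''}$, while allowing the run to stop early at any newly-bisimilar pair.
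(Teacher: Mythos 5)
Your proof is correct and follows essentially the same route as the paper's: splice $\alpha'$ into $\alpha$ after $\rho_1$ and $\rho_2$, transfer bisimilarity of $u,v$ in $\D(\alpha')$ to $(\rho_1,\rho_2)$ in $\D(\alpha'')$, copy the policy $T$ on pairs of prefixes of $\rho_1,\rho_2$ (arbitrary elsewhere), and conclude via \cref{corollary:lmc-distance-graph-reachability}. You are in fact slightly more careful than the paper on two points it glosses over: the well-definedness of the splice when $\rho_1$ and $\rho_2$ could overlap, and the possibility that an intermediate pair on the witnessing path becomes bisimilar (hence absorbing) under $\alpha''$ --- which, as you correctly observe, only helps, since reaching $S^2_0$ is all that is needed.
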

	\begin{proof}
		We define the strategy $\alpha''$ as follows:
		\begin{itemize}
			\item $\alpha''(\rho_1) := \alpha'(u)$;
			\item $\alpha''(\rho_1 \m \rho) := \alpha'(u \m \rho)$ where $\m \in \Act(u)$ and $\rho \in \Paths(\D)$;
			\item $\alpha''(\rho_2) := \alpha'(v)$;
			\item $\alpha''(\rho_2 \m \rho) := \alpha'(v \m \rho)$ where $\m \in \Act(v)$ and $\rho \in \Paths(\D)$;
			\item $\alpha''(\rho) := \alpha(\rho)$ for all other $\rho \in \Paths(\D)$.
		\end{itemize}
		By the definition of $\alpha''$, the state $u$ in the LMC $\D(\alpha')$ is probabilistically bisimilar with the state $\rho_1$ in the LMC $\D(\alpha'')$.
		Similarly, the state $v$ in the LMC $\D(\alpha')$ is probabilistically bisimilar with the state $\rho_2$ in the LMC $\D(\alpha'')$.
		Since $d_{\D(\alpha')}(u, v) = 0$, we have $d_{\D(\alpha'')}(\rho_1, \rho_2) = 0$.
		
		We denote by $\tau$ the transition function of the LMC $\D(\alpha'')$.
		Let $T'$ be a policy for the LMC $\D(\alpha'')$ such that
		\begin{itemize}
			\item $T'(\rho_1', \rho_2') := T(\rho_1', \rho_2')$ where $\rho_1'$ is a prefix of $\rho_1$ and $\rho_2'$ is a prefix of $\rho_2$;
			\item $T'(\rho, \rho') := \omega$ where $\omega \in \Omega(\tau(\rho), \tau(\rho'))$ for all other state  $\rho, \rho'$ in $\D(\alpha'')$.
		\end{itemize}
		
		By definition of $T'$, we have $\R_{\D(\alpha'')}^{T'}((s,t),\{(\rho_1,\rho_2)\}) =\R_{\D(\alpha)}^T((s,t),\{(\rho_1,\rho_2)\}) \gr 0$.
		Together with $d_{\D(\alpha'')}(\rho_1, \rho_2) = 0$ and \cref{corollary:lmc-distance-graph-reachability}, we have $d_{\D(\alpha'')}(s, t) \ls 1$.
	\end{proof}
	
	\Cref{thm:mdp-distance-ls1-nes,thm:mdp-distance-ls1-suf} together imply \cref{prop:mdp-distance-ls1} from the main body.
	
	\subsection{Proof of \texorpdfstring{\cref{thm:distance-ls-one-exptimehard}}{Theorem \ref{thm:distance-ls-one-exptimehard}}} \label{proof:thm:distance-ls-one-exptimehard}
	
	%Consider the construction from the main body.
	
	%Slightly abusing the notation, for $\mathcal{P} \subseteq \Paths(\D)$, we define $p_1(\mathcal{P}): = \{ p_1(\rho) \mid \rho \in \mathcal{P} \}$  and  $p_2(\mathcal{P}): = \{ p_2(\rho) \mid \rho \in \mathcal{P} \}$.

	%Given $\rho \in \Paths(\D)$, let $p_1(\rho)$ and $p_2(\rho)$ be its corresponding paths in the MDP $\D_1$ and $\D_2$ respectively.
	
	Let $\D = <S, \Act, L, \varphi, \ell>$ be an MDP and $s_1, s_2$ be two states.
	We construct an MDP $\D'$ from $\D$.
	The MDP $\D'$ consists of two disjoint parts: $\D_1$ and $\D_2$; see \cref{fig:MDP-EXPTIME-Hard}.
	%Let $\bar{S}:= S \times \{1, 2\}$.
	%Let $\$_1$ and $\$_2$ be two fresh states.
	%Let $\$$ be a fresh label.
	Let $i \in \{1, 2\}$.
	The set of states of $\D_i$ is $\{(s, i) \mid s \in S \} \cup \{\$_i\}$ where $\$_i$ is a fresh state.
	The transitions $\varphi_i$ are defined as follows:
	\begin{itemize}
		\item
		The state $(s, i)$ has all the actions of $s$;
		the action $\m' \in \Act(s)$ goes with probability $\frac{1}{2} \varphi(s, \m')(t)$ to $(t, i)$, that is, $\varphi_i\big( (s, i), \m' \big)\big( (t, i) \big) = \frac{1}{2} \varphi(s, \m')(t)$ for all $s,t \in S$ and $\m' \in \Act(s)$;
		the action $\m' \in \Act(s)$ goes with probability $\frac{1}{2}$ to $\$_i$, that is, $\varphi_i\big( (s, i), \m' \big) (\$_i)  = \frac{1}{2}$.
		\item
		The state $\$_i$ transitions with probability $1$ to $(s_i, i)$, via a default action~$\m$.
	\end{itemize}
	Each state $(s, i)$ is labelled with $\ell(s)$.
	The state $\$_i$ is labelled with $\$$ where $\$$ is a fresh label.
	\begin{figure}[!htb]	
		\centering
		\tikzstyle{BoxStyle} = [draw, circle, fill=black, scale=0.4,minimum width = 1pt, minimum height = 1pt]
		
		\begin{tikzpicture}[xscale=.6,>=latex',shorten >=1pt,node distance=3cm,on grid,auto]
			\node[state] (s) at (0, 0){$s$};
			\node[BoxStyle] (ds) at (2, 0){};
			\node[label] at (2, 0.3) {$\m'$};
			\node[state, fill=orange!20] (t) at (5, 0){$t$};
			
			\node[label] at (2,-2) {\parbox{4cm}{$p$ is transition probability: $p = \varphi(s, \m')(t)$.}};
			
			\path[-] (s) edge node [midway, above] {} (ds);
			\path[->] (ds) edge node [midway, above] {$p$} (t);
			
			\node[state, fill=red!20] (s1) at (0, -3.5) {$s_1$};
			\node[state, fill=red!20] (s2) at (5, -3.5) {$s_2$};
			\node[label] at (2,-4.5) {\parbox{4cm}{$s_1, s_2$ are two states in $\D$.}};

			\node[state] (si) at (10, 0) {$(s, i)$};
			\node[BoxStyle] (dsi) at (12, 0){};
			\node[label] at (12, 0.3) {$\m'$};
			\node[state, fill=orange!20] (ti) at (15, 0){$(t, i)$};
			\node[state, fill=blue!20] (dollar) at (12, -1.5) {$\$_i$};
			%\node[BoxStyle] (ddollar) at (12, -2.5){};
			%\node[label] at (12.5, -2.5) {$\m$}; %$\m$
			\node[state, fill=red!20] (sii) at (12, -3.5) {$(s_i, i)$};
			
			\path[-] (si) edge node {} (dsi);
			\path[->] (dsi) edge node {$\frac{p}{2}$} (ti);
			\path[->] (dsi) edge node {$\frac{1}{2}$} (dollar);
			%\path[-] (dollar) edge node {} (ddollar);
			\path[->] (dollar) edge node {$1$} (sii);
			
			%big boxes
			\node[label] at (2.5, 1.2) {The MDP $\D = <S, \Act, L, \varphi, \ell>$};
			\node[rectangle,draw,dashed, minimum width = 5cm, minimum height = 5.5cm] at (2.5,-2) {};
			\node[label] at (13, 1.2) {The MDP $\D_i$ where $i \in \{1, 2\}$};
			\node[rectangle,draw,dashed, minimum width = 6cm, minimum height = 5.5cm] at (13,-2) {};
		\end{tikzpicture}
		
		\caption{
			The state $(s, i)$ in the MDP $\D_i$, where $s$ is a state of $\D$, has the same actions as the state $s$.
			For state $(s, i)$, the action $\m' \in \Act(s)$ transitions to $(t, i)$ with probability $\frac{1}{2}\varphi(s, \m')(t)$ and $\$_i$ with probability $\frac{1}{2}$.
			The new state $\$_i$, labelled with a new label $\$$, transitions to $(s_i, i)$ with probability one.
		}
		\label{fig:MDP-EXPTIME-Hard}
	\end{figure}

	%The MDP $\D'$ consists of two disjoint parts.
	%We show there is a strategy $\alpha$ for $\D$ such that $d_{\D(\alpha)}(s_1, s_2) = 0$ if and only if there is a strategy $\alpha'$ for $\D'$ such that $d_{\D'(\alpha')}\big( (s_1, 1), (s_2, 2) \big) \ls 1$.
	%In \cref{proof:thm:distance-ls-one-exptimehard}
	We show that there is a general strategy $\alpha$ for $\D$ such that $d_{\D(\alpha)}(s_1, s_2) = 0$ if and only if there is a general strategy $\alpha'$ for $\D'$ such that $d_{\D'(\alpha')}((s_1,1), (s_2,2)) \ls 1$.

	For each path $u_1\m_1u_2\ldots$ of $\D$, we have a path $(u_1, 1)\m_1(u_2, 1) \ldots $ in the MDP $\D_1$ and a path $(u_1, 2)\m_1(u_2, 2) \ldots $ in the MDP $\D_2$, respectively.
	We define a function $p_1$ which maps a path in $\D$ to such a path in $\D_1$.
	Similarly, we define a function $p_2$ which maps a path in $\D$ to its corresponding path in $\D_2$.
	
	We define a function $p': \Paths(\D') \to \Paths(\D) \cup \{\$\}$. For a path $\rho' \in \Paths(\D')$, we have
	\[
	p'(\rho') := \left \{
	\begin{array}{ll}
		\rho & \mbox{if $\rho \in \Paths(\D)$ such that $p_1(\rho)=\rho'$ or $p_2(\rho) = \rho'$;}\\
		\rho & \mbox{if $\$_1\m\rho''$ is a suffix of $\rho'$ and $\rho \in \Paths(\D)$ such that $p_1(\rho) = \rho''$;}\\
		\rho & \mbox{if $\$_2\m\rho''$ is a suffix of $\rho'$ and $\rho \in \Paths(\D)$ such that $p_2(\rho) = \rho''$;}\\
		\$ & \mbox{otherwise, i.e., $\rho'$ ends with $\$_1$ or $\$_2$.}
	\end{array}
	\right .
	\]
	
	We show that there is a general strategy $\alpha$ for $\D$ such that $d_{\D(\alpha)}(s_1, s_2) = 0$ if and only if there is a general strategy $\alpha'$ for $\D'$ such that $d_{\D'(\alpha')}((s_1,1), (s_2,2)) \ls 1$.
	
	\begin{itemize}
		\item ``$\implies$'':
		Assume there is a general strategy $\alpha$ for $\D$ such that $d_{\D(\alpha)}(s_1, s_2) = 0$.
		We show that there exists a general strategy $\alpha'$ for $\D'$ such that $d_{\D'(\alpha')}\big((s_1, 1), (s_2,2) \big) = 0 \ls 1$.
		
		We define the strategy $\alpha'$ for $\D'$ as follows:
		\begin{itemize}
			\iffalse
			\item
			For a path $\rho_1$ in $\D_1$ which corresponds to a path $\rho$ in $\D$, play the strategy $\alpha(\rho)$, that is, $\alpha'(\rho_1) := \alpha(\rho)$ where $\rho_1 = p_1(\rho)$;
			\item
			Similarly, for a path $\rho_2$ in $\D_2$ which corresponds to a path $\rho$ in $\D$, play the strategy $\alpha(\rho)$, that is, $\alpha'(\rho_2) := \alpha(\rho)$ where $\rho_2 = p_2(\rho)$;
			\item
			For a path $\rho_1' \m \rho_1$ in $\D_1$ where the last state of $\rho_1'$ is $\$_1$ and $\rho_1$ corresponds to a path $\rho$ in $\D$, play the strategy $\alpha(\rho)$, that is,  $\alpha'(\rho_1' \m \rho_1) := \alpha(\rho)$ where $\rho_1' \in \Paths(\D_1)$, $\last(\rho_1') = \$_1$, $\rho(1) = s_1$ and $\rho_1 = p_1(\rho)$;
			\item
			Similarly, for a path $\rho_2' \m \rho_2$ in $\D_2$ where the last state of $\rho_2'$ is $\$_2$ and $\rho_2$ corresponds to a path $\rho$ in $\D$, play the strategy $\alpha(\rho)$, that is,  $\alpha'(\rho_2' \m \rho_2) := \alpha(\rho)$ where $\rho_2' \in \Paths(\D_2)$, $\last(\rho_2') = \$_2$, $\rho(1) = s_2$ and $\rho_2 = p_2(\rho)$;
			\fi
			\item
			For a path $\rho'$ in $\D'$ such that there is a path $\rho$ in $\D$ with $\rho = p'(\rho')$, play the strategy $\alpha(\rho)$, that is, $\alpha'(\rho') := \alpha(\rho)$;
			\item
			For a path $\rho \in \D'$ that ends with $\$_1$ or $\$_2$, since there is only one default action $\m$, play $\m$ with probability one, that is, $\alpha'(\rho) = \mathbf{1}_{\m}$.
		\end{itemize}
		
		Let $\tau$ and $\tau'$ be the probability transition functions for the induced LMCs $\D(\alpha)$ and $\D'(\alpha')$, respectively.
		
		To prove that $(s_1, 1)$ and $(s_2, 2)$ are probabilistically bisimilar (have distance zero) in the LMC $\D'(\alpha')$, we define a relation $R$ for which $\big((s_1, 1), (s_2, 2)\big) \in R$ and show that $R$ is a probabilistic bisimulation.
		
		We define the relation as follows.
		Two states $\rho_1$ and $\rho_2$ in $\D'(\alpha')$ are related in $R$ if and only if either $p'(\rho_1) = p'(\rho_2) = \$$ or $p'(\rho_1)$ and $p'(\rho_2)$ are probabilistically bisimilar in $\D(\alpha)$.

		%              \begin{align}
			% R := & \{(\rho_1, \rho_2) \mid \exists \rho, \rho' \in \Paths(\D).\; \rho \sim \rho' \text{ in } \D(\alpha), \nonumber\\
			%              &  \;\;\;\;\;\;\; \mbox{} \rho_1 = p_i(\rho), \rho_2=p_j(\rho') \text{ where }  i, j \in \{1, 2\}\} \label{eq:relation1}\\
			% & \; \mbox{} \cup \{ (\rho_1' \m \rho_1, \rho_2' \m \rho_2) \mid  i, j \in \{1, 2\}, \last(\rho_1') = \$_i,  \last(\rho_2') = \$_j,  \nonumber\\
			%              & \;\;\;\;\;\;\; \mbox{} \exists \rho, \rho' \in \Paths(\D).\; \rho \sim \rho' \text{ in } \D(\alpha), \rho_1 = p_i(\rho), \rho_2=p_j(\rho')  \} \label{eq:relation2}\\
			% & \; \mbox{} \cup \left\{ (\rho_1, \rho_2) \mid i, j \in \{1, 2\}, \last(\rho_1) = \$_i, \last(\rho_2) = \$_j \right\}. \label{eq:relation3}\\
			%      \end{align}
		
		It holds that $(s_1, 1)$ and $(s_2, 2)$ are related in $R$ since $p'\big( (s_1,1) \big) = s_1$, $p'\big( (s_2,2) \big) = s_2$ and $s_1 \sim_{\D(\alpha)} s_2$.
		
		Next, we show that $R$ is a probabilistic bisimulation.
		It is easy to see that $R$ is an equivalence relation as it is reflexive, symmetric and transitive.
		To prove $R$ is a probabilistic bisimulation, we show that for all $\rho_1, \rho_2 \in R$ it holds that $\rho_1$ and $\rho_2$ have the same label and $\tau'(\rho_1)(E') = \tau'(\rho_2)(E')$ for each $R$-equivalence class $E'$.
		
		Let $(\rho_1, \rho_2) \in R$.
		We distinguish the following cases:
		\begin{itemize}
			\item
			Assume $\rho_1$ and $\rho_2$ are labelled with $\$$.
			Let $E'$ be the $R$-equivalence class that contains those $\rho \in \Paths(\D')$ that end with either $\$_1\m(s_1,1)$ or $\$_2\m(s_2,2)$.
			We have $\tau'(\rho_1)(E') = 1 = \tau'(\rho_2)(E')$.
			\item
			Assume neither $\rho_1$ nor $\rho_2$ is labelled with $\$$.
			Let $\rho = p'(\rho_1)$ and $\rho' = p'(\rho_2)$.
			By definition of $R$, we have $\rho \sim_{\D(\alpha)} \rho'$.
			Thus, $\rho_1$ and $\rho_2$ have the same label.
			
			We consider the $R$-equivalence classes.
			Suppose $E'$ is an $R$-equivalence class which contains all states (paths in $\D'$) in $\D'(\alpha')$ that end with either $\$_1$ or $\$_2$.
			We have $\tau'(\rho_1)(E') = \frac{1}{2} = \tau'(\rho_2)(E')$.
			Now suppose $E'$ is an $R$-equivalence class which contains $\rho_1'$, a successor of $\rho_1$ in $\D'(\alpha')$ such that $\last(\rho_1') \neq \$_1$ and $\last(\rho_1') \neq \$_2$.
			There is a $\sim_{\D(\alpha)}$-equivalence class $E$ which contains $p'(\rho_1')$.
			Since $\rho \sim_{\D(\alpha)} \rho'$, we have $\tau(\rho)(E) = \tau(\rho')(E)$.
			Furthermore, for any $\rho_2' \in E'$ that is a successor of $\rho_2$, we have $p'(\rho_2') \in E$ and $p'(\rho_2')$ is a successor of $\rho'$.
			Together with $\tau'(\rho_1)(E') = \frac{1}{2}\tau(\rho)(E)$ and $\tau'(\rho_2)(E') = \frac{1}{2}\tau(\rho')(E)$,
			we have $\tau'(\rho_1)(E') = \tau'(\rho_2)(E')$.
		\end{itemize}
		
		\item ``$\impliedby$'':
		Assume there is a general strategy $\alpha'$ for $\D'$ such that $d_{\D'(\alpha')}\big( (s_1,1), (s_2,2) \big) \ls 1$.
		We show that there is a general strategy $\alpha$ for $\D$ such that $d_{\D(\alpha)}(s_1, s_2) = 0$.

		By \cref{thm:mdp-distance-ls1-nes}, there exist a policy $T$ for the LMC $\D'(\alpha')$, two states $u, v$, two paths $\rho_1, \rho_2 \in \Paths(\D')$ with $u = \last(\rho_1)$ and $v = \last(\rho_2)$, and a strategy $\alpha''$ such that $\R_{\D'(\alpha')}^T( ( (s_1,1), (s_2,2)),\{(\rho_1,\rho_2)\}) \gr 0$ and $u$ and $v$ are probabilistically bisimilar in the LMC induced by $\alpha''$.
		
		We have that $u$ and $v$ have the same label.
		Furthermore, $u$ is in $\D_1$ and $v$ is in $\D_2$.
		We distinguish the following two cases:
		\begin{itemize}
			\item
			Assume that $u = \$_1$ and $v = \$_2$.
			We define a strategy $\alpha$ for $\D$:
			for all $\rho \in \Paths(\D)$ such that $\rho$ begins with $s_1$ we have $\alpha(\rho) := \alpha''(\$_1\m p_1(\rho))$;
			for all $\rho \in \Paths(\D)$ such that $\rho$ begins with $s_2$ we have $\alpha(\rho) := \alpha''(\$_2 \m p_2(\rho))$.
			
			Let $\tau$ be the probability transition function for the induced LMC $\D(\alpha)$.
			To show that $s_1$ and $s_2$ are probabilistically bisimilar in  $\D(\alpha)$, we define a relation $R$ such that $(s_1, s_2) \in R$ and show that $R$ is a probabilistic bisimulation.
			
			A pair of states $(\rho, \rho')$ in $\D(\alpha)$ are in $R$ if and only if one of the following holds:
			\begin{itemize}
				\item
				Both $\rho$ and $\rho'$ begin with $s_1$ and $\$_1\m p_1(\rho) \sim_{\D'(\alpha'')} \$_1\m p_1(\rho')$;
				\item
				Both $\rho$ and $\rho'$ begin with $s_2$ and $\$_2\m p_2(\rho) \sim_{\D'(\alpha'')} \$_2\m p_2(\rho')$;
				\item
				The state $\rho$ begins with $s_1$, $\rho'$ begins with $s_2$ and $\$_1\m p_1(\rho) \sim_{\D'(\alpha'')} \$_2\m p_2(\rho')$;
				\item
				The state $\rho$ begins with $s_2$, $\rho'$ begins with $s_1$ and $\$_2\m p_2(\rho) \sim_{\D'(\alpha'')} \$_1\m p_1(\rho')$;
				\item
				The state $\rho$ or $\rho'$ does not begin with $s_1$ or $s_2$.
			\end{itemize}
			
			It is not hard to see that $R$ is an equivalence relation since it is reflexive, symmetric and transitive.
			To prove that it is a probabilistic bisimulation, similar to the previous direction, we can show that for all $\rho, \rho' \in R$ it holds that $\rho$ and $\rho'$ have the same label and $\tau(\rho)(E) = \tau(\rho')(E)$ for each $R$-equivalence class $E$.
			
			\item
			Assume that $u = (s, 1)$ and $v = (t, 2)$ with $\ell(s) = \ell(t)$.
			Let $(s, 1) \m_1 \$_1$ and $(t,1) \m_2 \$_2$ be successors of $(s, 1)$ and $(t, 2)$ in $\D'(\alpha'')$, respectively.
			We define a strategy $\alpha$ for $\D$:
			for all $\rho \in \Paths(\D)$ such that $\rho$ begins with $s_1$ we have $\alpha(\rho) := \alpha''((s, 1) \m_1 \$_1 \m p_1(\rho))$;
			for all $\rho \in \Paths(\D)$ such that $\rho$ begins with $s_2$ we have $\alpha(\rho) := \alpha''((t, 2) \m_2 \$_2 \m p_2(\rho))$;
			
			To show that $s_1$ and $s_2$ are probabilistically bisimilar in  $\D(\alpha)$, we define a relation $R$ such that $(s_1, s_2) \in R$ and show that $R$ is a probabilistic bisimulation.
			
			A pair of states $(\rho, \rho')$ in $\D(\alpha)$ are in $R$ if and only if one of the following holds:
			\begin{itemize}
				\item
				Both $\rho$ and $\rho'$ begin with $s_1$ and $(s, 1) \m_1 \$_1 \m p_1(\rho) \sim_{\D'(\alpha'')} (s, 1) \m_1 \$_1 \m p_1(\rho')$;
				\item
				Both $\rho$ and $\rho'$ begin with $s_2$ and $(t, 2) \m_2 \$_2 \m p_2(\rho) \sim_{\D'(\alpha'')} (t, 2) \m_2 \$_2 \m p_2(\rho')$;
				\item
				The state $\rho$ begins with $s_1$, $\rho'$ begins with $s_2$ and $(s, 1) \m_1 \$_1 \m p_1(\rho) \sim_{\D'(\alpha'')} (t, 2) \m_2 \$_2 \m p_2(\rho')$;
				\item
				The state $\rho$ begins with $s_2$, $\rho'$ begins with $s_1$ and $(t, 2) \m_2 \$_2 \m p_2(\rho') \sim_{\D'(\alpha'')} (s, 1) \m_1 \$_1 \m p_1(\rho)$;
				\item
				The state $\rho$ or $\rho'$ does not begin with $s_1$ or $s_2$.
			\end{itemize}
			It is not hard to see that $R$ is an equivalence relation since it is reflexive, symmetric and transitive.
			To prove that it is a probabilistic bisimulation, we can show that for all $\rho, \rho' \in R$ it holds that $\rho$ and $\rho'$ have the same label and $\tau(\rho)(E) = \tau(\rho')(E)$ for each $R$-equivalence class $E$ where $\tau$ is the probability transition function for $\D(\alpha)$.
		\end{itemize}

		%  Assume there is no general strategy $\alpha$ for $\D$ such that $d_{\D(\alpha)}(s_1, s_2) = 0$.
		%  By \cref{cor:limitequalisable-equalisable}, there does not exist an infinite sequence of strategies $\alpha_1, \alpha_2, \ldots$ such that $\liminf_{i \to \infty} \max \{d_{\D(\alpha_i)} (s_1, s_2) \mid s, t \in S_0)\} = 0$.
		
		%  In other words, there exists $\varepsilon \gr 0$ such that $d_{\D(\alpha)}(s_1, s_2) \ge \varepsilon$ for all strategies $\alpha$.
		
		%  We consider the state pairs in $\D'$ which have the same label:
		% \begin{itemize}
			%     \item
			%     We first consider the state pair $(\$_1, \$_2)$.
			%     Let $\alpha$ be an arbitrary strategy for $\D'$.
			%     We have
			%     $d_{\D'(\alpha)}(\$_1, \$_2) = d_{\D'(\alpha)}(\$_1 \m (s_1, 1), \$_2) $
			%     Let $(s, 1)$ and $(t, 2)$ be two states such that $\ell(s) = \ell(t)$.
			%     \item
			%     Let $(s, 1)$ and $(t, 2)$ be two states such that $\ell(s) = \ell(t)$.
			%     Let $\alpha$ be an arbitrary strategy for $\D'$.
			%     We have
			%     $d_{\D'(\alpha)}\big( (s,1), (t,2)\big) \ge \frac{1}{2} d_{\D'(\alpha)}(\$_1, \$_2) $
			
			%                     By the construction of the MDP $\D$, we have:
			
			% \end{itemize}
		
	\end{itemize}
	This concludes the proof of \cref{thm:distance-ls-one-exptimehard}.
	
\end{document}